 \newcounter{counter:save}
\newtheorem{theorem}{Theorem}
\newtheorem{lemma}[theorem]{Lemma}
\newtheorem{corollary}[theorem]{Corollary}
\newtheorem{observation}[theorem]{Observation}
\newtheorem{remark}[theorem]{Remark} 
\newtheorem{fact}[theorem]{Fact}
 \def\calM{\mathcal{M}}
\def\Ztilde{\widetilde{Z}} 
\def\ZtildeTutte{\Ztilde}
\def\bgamma{\boldsymbol\gamma}
\def\bM{\boldsymbol M}
\def\tutte{\textsc{Tutte}}
\def\mtutte{\textsc{MatroidTutte}}
\def\multitutte{\textsc{MultiTutte}}
\def\signtutte{\textsc{SignTutte}}
\def\signtuttep{\signtutte(q,\gamma)}
\def\tuttep{\tutte(q,\gamma)}
\def\msigntutte{\textsc{MatroidSignTutte}}
\def\numP{\mathrm{\#P}} 
\def\FP{\mathrm{FP}}
\def\NP{\mathrm{NP}}
\def\numPQ{\numP_{\mathbb{Q}}}
\def\MCCut{{\sc \#Minimum Cardinality $(s,t)$-Cut}}
\def\hatw{\widehat{w}}
\def\hatbgamma{\widehat{\bgamma}}
\def\matrix{M}
\def\field{F}
\def\rows{V}
\def\columns{E}
\def\matroid{\calM}
\def\subsetcols{A}
\def\gf2{\mathbb{F}_2}
\def\rank{r}
\def\graph{G}
\def\graphvertices{V}
\def\graphedges{E}
\def\boldgamma{\bgamma}
\def\column{e}
\def\contract{/}
\let\epsilon=\varepsilon
\let\diamond=\Diamond
\def\prob#1#2#3{\begin{description}\item[\it Name]#1\item[\it Instance]#2\item[\it Output]#3\end{description}}
\title [The Complexity of Computing the Sign of the Tutte Polynomial]{The Complexity of Computing the Sign of the Tutte Polynomial}
 \thanks{This work was partially supported by the EPSRC grant EP/I011935/1
{\it Computational Counting}.
The research leading to these results has received funding from the European Research Council under 
the European Union's Seventh Framework Programme (FP7/2007-2013) ERC grant agreement no.\ 334828. The paper 
reflects only the authors' views and not the views of the ERC or the European Commission. 
The European Union is not liable for any use that may be made of the information contained therein.
A preliminary version
of these results was announced in the proceedings of
  ICALP 2012.}
\author{Leslie Ann Goldberg}
\address{Leslie Ann Goldberg, Department of Computer Science,
University of Oxford, Wolfson Building, Parks Road,
Oxford OX1~3QD, United Kingdom}
\author{Mark Jerrum}
\address{Mark Jerrum, School of Mathematical Sciences\\
Queen Mary, University of London, Mile End Road, 
London E1~4NS, United Kingdom.}
\begin{document}

\begin{abstract} 
We study the complexity of computing the sign of the Tutte polynomial of a graph.
As there are only three possible outcomes (positive, negative, and zero),
this seems at first sight more like a decision problem than a counting problem.
Surprisingly, however, there are large regions of the parameter space 
for which computing the sign of the Tutte polynomial
is actually \#P-hard. As a trivial consequence, approximating the polynomial is 
also \#P-hard in this case.
Thus, {\it approximately\/} evaluating the Tutte polynomial 
in these regions is as hard as {\it exactly\/} counting the satisfying
assignments to a CNF Boolean formula.
For most  other points in the parameter space, 
we show that computing the sign of the polynomial is in FP, whereas
approximating the polynomial can be done in polynomial time with an NP oracle.
As a special case, we completely resolve the complexity of computing the sign of the chromatic polynomial ---
this is easily computable at $q=2$ and when $q\leq 32/27$, and is NP-hard to compute for
all other values of the parameter~$q$.
\end{abstract}

\maketitle

\section{Introduction} 
The Tutte polynomial of an undirected\footnote{All graphs in this article
are undirected, so we shall drop the qualifier in what follows.} graph
is a two-variable polynomial that
captures many interesting properties of the graph 
such as (by making appropriate choices of the two variables)
the number of $q$-colourings, the number of nowhere-zero $q$-flows,
the number of acyclic orientations, and the probability that the graph 
remains connected when edges are deleted at random.

Much work~\cite{ETHb,ferropotts,tutteone,planartutte,JVW90,vertigan}
has studied the difficulty of evaluating the polynomial
(exactly or approximately) when the values of the variables
are fixed, and a graph is given as input.

Our early paper~\cite{tutteone} identified a large region of points where
the approximate evaluation of the polynomial is NP-hard,
and a short hyperbola segment along which approximate evaluation 
is even \#P-hard.
Thus, an approximation of the polynomial at a point on this short hyperbola segment
would enable one to \emph{exactly} solve a problem in \#P\null.
Kuperberg~\cite[Theorem 1.3] {kuperberg} uses quantum results
to show  similar (classical) \#P-hardness  for all points $(x,y)$ in the negative quadrant
satisfying $(x-1)(y-1)>4$.
In this paper, we show that, in fact, for most of the NP-hard points identified in \cite{tutteone},
approximation is \#P-hard. 
Moreover, it
is \#P-hard for a very simple reason:  determining the sign of the
polynomial --- i.e., whether the evaluation of the polynomial is positive, negative or
zero --- is \#P-hard.
This seems surprising since determining the sign of the polynomial is
nearly a decision problem (there are only three possible outcomes) but
it is \#P-hard nearly everwhere (at all of the red points in the plane in Figure~\ref{fig:one}).

Past work~\cite{JacksonSokal} has studied the sign of the Tutte polynomial ---  in particular, 
Jackson and Sokal sought to determine for which
choices of the two variables the sign is ``trivial'' in the sense that
it does not depend on the input graph (or it depends only
very weakly on the input graph, for example when it depends only on
the number of vertices in the graph).

To illustrate how our work fits in with the work of Jackson and Sokal,
we start with an important univariate case.
The \emph{chromatic polynomial} $P(G;q)$ of an $n$-vertex graph~$G$ is
the unique degree-$n$ polynomial in the variable~$q$ such
that $P(G;q)$ is the number of proper $q$-colourings of~$G$.
Jackson \cite[Theorem 5]{Jackson}
showed that for $q\in(1,32/27]$ the sign of $P(G;q)$ depends upon $G$ in an essentially trivial way.
In particular, for every connected simple graph with $n\geq 2$ vertices and $b$~blocks,
$P(G;q)$ is non-zero with sign ${(-1)}^{n+b-1}$.
The sign of $P(G;q)$ is also known to be a trivial function of~$G$ for $q\leq 1$.
(See, for example, \cite[Theorem 1.1]{JacksonSokal}.)
Jackson~\cite[Theorem 12]{Jackson} 
demonstrated the significance of the value~$32/27$ by constructing an infinite family of graphs 
such that $P(G;q)=0$ at a value of~$q$ which is  arbitrarily close to~$32/27$.
In fact, Jackson and Sokal conjectured \cite[Conjecture 10.3(e)]{JacksonSokal} 
that the value~$32/27$ is a phase transition in the sense
that, for every~$q$ above this critical value, the sign of~$P(G;q)$ is a non-trivial function of~$G$. 
In particular, they conjectured that for any
fixed $q>32/27$, and all sufficiently large~$n$ and~$m$, there 
are $2$-connected graphs~$G$ with $n$ vertices and $m$ edges
that make $P(G;q)$ non-zero with either sign.

It turns out that this intuition is correct
(see Corollary~\ref{cor:meetJS})
and 
that $q=32/27$ is, in some sense, a phase transition for the complexity of computing the
sign of $P(G;q)$: 
\begin{itemize}
\item As was known, for $q\leq 32/27$, the sign of $P(G;q)$ is a trivial function of~$G$, 
which is easily computed.
\item At $q=2$, the evaluation $P(G;q)$ is the number of $2$-colourings of~$G$.
The sign of~$P(G;q)$ is positive if $G$ is bipartite, and is~$0$ otherwise.
Thus, the sign of $P(G;q)$ is not a trivial function of~$G$, 
but $P(G;q)$ is still easily computed in polynomial time.
\item For every $q>32/27$ except $q=2$,  
computing the sign of $P(G;q)$ is NP-hard. 
\end{itemize}
However, the full version of Jackson and Sokal's conjecture turns out  to be incorrect. 
See Observations~\ref{obs:Fpos} and~\ref{obs:Epos} for counter-examples.

While computing the sign of $P(G;q)$ is NP-hard for every $q\neq 2$ which is greater than~$32/27$,
the precise complexity of computing the sign does actually depend upon~$q$.
We show (see Corollary~\ref{cor:meetJS}) that
for each fixed {\it non-integer} $q>32/27$, the complexity of computing the sign of $P(G;q)$ is
\#P-hard.
This means that a 
polynomial-time algorithm for computing the sign of $P(G;q)$, given~$G$,
would give a 
polynomial-time algorithm for 
exactly solving every problem in \#P\null.
On the other hand, for integers $q>2$, the problem of computing the sign of $P(G;q)$ 
is merely NP-complete.\footnote{As there are three potential outcomes, 
determining the sign cannot be NP-complete in a strict sense.  However, in this 
case, one of the outcomes (negative) is impossible, so we can view the determination of 
the sign as an NP-problem by identifying positive with ``accept'' and zero with
``reject''.  This view will be taken throughout the paper.}

As one would expect,
both of these results have ramifications for the complexity of approximating $P(G;q)$.
A fully polynomial approximation scheme (FPRAS) for evaluating $P(G;q)$, given~$G$,
can be used as a polynomial-time randomised algorithm for computing the sign of~$P(G;q)$.
Thus, we can immediately deduce that if $q$ is a non-integer greater than $32/27$,
then there is no FPRAS for $P(G;q)$ unless there is a randomised polynomial-time
algorithm for exactly solving every problem in~\#P.  See Section~\ref{sec:ranAlg}
for a more thorough discussion of this claim.
 
On the other hand, for integer values $q>32/27$, we show that the problem of evaluating 
$P(G;q)$
is in the complexity class $\numPQ$, which is defined as follows.

{\bf Definition.} 
$\FP$ is  the class of
functions computable by polynomial-time algorithms.
We say that a function 
$f:\Sigma^{*}\to\mathbb{Q}$ is in the class $\numPQ$ if
$f(x)=a(x)/b(x)$, where $a,b:\Sigma^{*}\to\mathbb{N}$, and
$a\in\numP$ and $b\in\mathrm{FP}$.  

If $f$ is in $\numPQ$ then there is an approximation scheme for~$f$ that runs in polynomial
time, using an oracle for an NP predicate (for a more detailed discussion, see~\cite[Section 2.2]{tutteone}).
Thus, it is presumably much easier to 
approximate
$P(G;q)$ when $q$ is an integer greater than~$32/27$,
as compared to a non-integer.

All of these considerations generalise smoothly to the Tutte polynomial,
which we now define.
Since we will later need the multivariate generalisation~\cite{sokal} of the polynomial,
we use the ``random cluster'' formulation of the
Tutte polynomial, which for a graph~$G=(V,E)$,
is defined as a polynomial in indeterminates~$q$ and~$\gamma$ 
as follows,
\begin{equation}
\label{eq:Zdef}
 Z(G;q,\gamma)=\sum_{A\subseteq E}
q^{\kappa(V,A)}  \gamma^{|A|},
\end{equation}
where $\kappa(V,A)$ denotes the number of connected components in the
graph~$(V,A)$.
The chromatic polynomial discussed earlier is related to the Tutte polynomial
via the identity \cite[(2.15)]{JacksonSokal}
$P(G;q) = Z(G;q,-1)$.

In fact, Tutte  defined the Tutte polynomial using a different, two-variable
parameterisation, in terms of variables~$x$ and~$y$.
This polynomial is defined for a graph $G=(V,E)$ by 
\begin{equation}
\label{eq:Tdef}
T(G;x,y) = \sum_{A \subseteq E} {(x-1)}^{\kappa(V,A)-\kappa(V,E)}{(y-1)}^{|A|-|V|+\kappa(V,A)}.
\end{equation}
It is well known (see, for example, \cite[(2.26)]{sokal})
that when $q=(x-1)(y-1)$ and
$\gamma=y-1$ we have
\begin{equation}
\label{eq:ZtoT}
T(G;x,y) = {(y-1)}^{-|V|}{(x-1)}^{-\kappa(V,E)} Z(G;q,\gamma).
\end{equation}

\begin{figure}[t]
\centering{
 \begin{tikzpicture}
 \draw[step=0.5cm,color=gray,style=dotted] (-5,-5) grid (5,5);
 \draw (0.8,5.6) node{$y\>(=\gamma+1)$};
 \draw(5.6,0) node{$x$};
 \draw [fill=green!20,style=dotted] (0,0) rectangle  (5,5); \draw(2,2) node {$A$};
 \draw [fill=green!20,style=dotted] (1,-1) rectangle  (5,0);\draw (2.5,-0.5) node {$K$};  
 \draw [fill=red!20,style=dotted] (1,-5) rectangle  (5,-1);\draw (2.5,-2.5) node {$D$};
 \draw [fill=red!20,style=dotted] (-5,1) rectangle  (-1,5);\draw (-2.5,2.5) node {$C$};
 \draw [fill=red!20,style=dotted] (-5,-5) -- (-5,0) -- (-1,0) 
-- (-1,1) -- 
(-0.64,0.28) ..  controls (-0.46,0.19) and (-0.27,0.07).. (-0.09,-0.09)
--
(-0.09,-0.09) ..  controls (0.03,-0.23) and (0.15,-0.40).. (0.28,-0.64)
-- (1,-1) -- (0,-1)  
--(0,-5) (-5,5); 
\draw (-2.5,-2.5) node {$B$};
 \draw [fill=red!20,style=dotted] (-5,0) rectangle  (-1,1);\draw (-2.5,0.5) node {$E$};
 \draw [fill=red!20,style=dotted] (0,-5) rectangle  (1,-1);\draw (0.5,-2.5) node {$F$}; 
  \draw (0.6,-0.85) node {\tiny $H$};
 \draw (-0.9,0.6) node {\tiny $I$}; 
  \draw (-0.35,-0.35) node {\tiny $G$};  
  \draw [fill=green!20,style=dotted] (-1,1) rectangle  (0,5);\draw (-0.5,2.5) node {$J$}; 
\draw[fill=green!20,style=dotted] (0,0) -- (-1,1)--(0,1)--(0,0)--(1,0)--(1,-1)--(0,0);  
 \draw (0.6,-0.3) node {\tiny $L$};
 \draw (-0.25,0.55) node {\tiny $M$};
       \draw[style=dotted] (-1,0)--(-1,-1)--(0,-1); 
 \draw [color=green,fill=green](-1,-1) circle (0.3mm); 
 \draw[color=white,line width=0.3mm](-1,1)--(-1,0.407); 
 \draw[color=white,line width=0.3mm](1,-1)--(0.407,-1); 
 \draw[color=green,line width=0.2mm] (-5,1)--(1,1)--(1,-5); 
 \draw[style=dashed] (-5,1)--(1,1)--(1,-5); 
  \draw[color=green,domain=-5:5/6, line width=0.2mm] plot (\x,{(1/(\x-1)+1)}); 
 \draw[style=dashed,domain=-5:5/6] plot (\x,{(1/(\x-1)+1)}); 
 \draw[color=green,domain=-5:-1, line width=0.2mm] plot (\x,{(2/(\x-1)+1)}); 
  \draw[style=dashed,domain=-5:-1] plot (\x,{(2/(\x-1)+1)}); 
 \draw[color=green,domain=-5:-2, line width=0.2mm] plot (\x,{(3/(\x-1)+1)}); 
  \draw[style=dashed,domain=-5:-2 ] plot (\x,{(3/(\x-1)+1)}); 
 \draw[color=green,domain=-5:-3, line width=0.2mm] plot (\x,{(4/(\x-1)+1)}); 
  \draw[style=dashed,domain=-5:-3 ] plot (\x,{(4/(\x-1)+1)}); 
 \draw[color=green,domain=-5:-4, line width=0.2mm] plot (\x,{(5/(\x-1)+1)}); 
  \draw[style=dashed,domain=-5:-4 ] plot (\x,{(5/(\x-1)+1)}); 
     \draw[color=green,domain=0:4/6, line width=0.2mm] plot (\x,{(2/(\x-1)+1)}); 
          \draw[ style=dashed,domain=0:4/6 ] plot (\x,{(2/(\x-1)+1)}); 
 \draw[color=green,domain=0:3/6, line width=0.2mm] plot (\x,{(3/(\x-1)+1)}); 
  \draw[style=dashed,domain=0:3/6 ] plot (\x,{(3/(\x-1)+1)}); 
\draw[color=black,style=dotted,domain=-1:-0.64, line width=0.2mm] plot (\x,{(1.18/(\x-1)+1)}); 
\draw[color=black,style=dotted,domain=0.28:0.407, line width=0.2mm] plot (\x,{(1.18/(\x-1)+1)}); 
\fill[white](0,-5)--(0,-3) -- (2/6,-5)--(0,-5); 
 \draw[color=green,domain=0:2/6, line width=0.3mm] plot (\x,{(4/(\x-1)+1)}); 
  \draw[style=dashed,domain=0:2/6 ] plot (\x,{(4/(\x-1)+1)}); 
  \draw[color=green,domain=0:1/6, line width=0.3mm] plot (\x,{(5/(\x-1)+1)}); 
    \draw[style=dashed,domain=0:1/6 ] plot (\x,{(5/(\x-1)+1)}); 
\draw[color=green,line width=0.2mm] (-1,1)--(-1,5);
\draw[color=green,line width=0.2mm] (1,-1)--(5,-1);
    \foreach \i in {-5,-4,-3,-2,-1,0,1,2,3,4,5}
  {\draw (0-0.2,\i) node{\footnotesize \i};}
  \foreach \i in {-5,-4,-3,-2,-1,1,2,3,4,5}
   {\draw (\i-0.2,0) node{\footnotesize \i};}  
  \draw [color=blue,fill=blue](-5,0) circle (0.3mm);    
   \draw [color=blue,fill=blue](-4,0) circle (0.3mm);  
    \draw [color=blue,fill=blue](-3,0) circle (0.3mm);  
      \draw [color=blue,fill=blue](-2,0) circle (0.3mm);  
        \draw [color=green,fill=green](-1,0) circle (0.3mm);  
 \draw [color=green,fill=green](0,-1) circle (0.3mm);  
 \draw [color=blue,fill=blue](0,-2) circle (0.3mm);  
  \draw [color=blue,fill=blue](0,-3) circle (0.3mm);  
  \draw [color=white,fill=white](0,-4) circle (0.3mm);  
   \draw [color=green,fill=green](0,-5) circle (0.3mm);  
  \end{tikzpicture}}
 \caption{An illustration of Theorem~\ref{thm:main}.
 Computing the sign of the Tutte polynomial is $\numP$-hard at red points, 
 is $\NP$-complete at blue points, and
 is in $\FP$ at green points.  We have not resolved the complexity at white points.
 At red points, approximating the Tutte polynomial is also $\numP$-hard. At blue and green points,  
 it can be done in polynomial time with an $\NP$~oracle.
 Guide for the greyscale version: The red points appear as a darker grey in regions B, C, D, E, F, G, H and~I.
 The green points appear as a lighter grey in regions A, J, K, L and M and also as dashed 
 hyperbola segments and at the points $(-1,0)$, $(-1,-1)$, $(0,-1)$ and $(0,-5)$.
 The blue points are $(-2,0)$, $(-3,0)$, $(-4,0)$, $(-5,0)$, $(0,-2)$ and $(0,-3)$.
 }
 \label{fig:one}
 \end{figure} 
 
This paper studies the complexity of computing the sign of the 
(random cluster) Tutte polynomial.  The definitive statement of our results requires a number of 
formal definitions and is  presented as Theorem~\ref{thm:main} in Section~\ref{sec:mainthm}. 
However, an informal description of Theorem~\ref{thm:main} appears in
Figure~\ref{fig:one}, which illustrates the   
the $(x,y)$ plane divided into a number
of regions A--M according to their complexity.\footnote{For convenience,
 our proofs use the random cluster formulation of the Tutte polynomial~(\ref{eq:Zdef}). However,
 in order to make our results easily comparable to other results in the literature such as~\cite{tutteone}
 and~\cite{JVW90}, we classify points using the $(x,y)$-coordinatisation of~(\ref{eq:Tdef}). This is without loss
 of generality, since it is easy to go from one coordinate system to the other using~(\ref{eq:ZtoT}).
 However, the reader should note that  
 if $y=1$ then 
 $\gamma=0$ and
 $q=(x-1)(y-1)=0$ so computing $Z(G;q,\gamma)$ is trivial,
 whereas the complexity of computing $T(G;x,y)$ is unclear.
 In general, any two-parameter version of the Tutte polynomial will omit some points. This issue is discussed
 further in \cite[Section 1]{binarymatroidtutte}.
    }
The colours depict the complexity of computing the sign of the polynomial
for a fixed point $(x,y)$. If the point $(x,y)$ is coloured red, then 
the problem of computing the sign is \#P-hard.
If the point $(x,y)$ is coloured green, then the problem of computing the sign is in FP\null.
Finally, if the point $(x,y)$ is coloured blue, then the  problem of computing the sign is
NP-complete. (There are still some points for which we have not resolved the complexity --- these
are coloured white.)

To resolve any ambiguities in Figure~\ref{fig:one}, 
a formal description of the regions appearing there
is provided in Figure~\ref{fig:regions}.
For each region of interest, 
the condition for a point $(x,y)$ to
belong to that region is given.
Note that $q$ is used to denote $(x-1)(y-1)$.

\begin{figure}[t]
\begin{itemize}
\item Region A: $x\geq 0$ and $y\geq 0$.  
\item Region B: 
$\min(x,y) \leq -1$ and $\max(x,y)<0$. 
\item Region C:  $x<-1$ and $y>1$. 
\item Region D:  $x>1$ and $y<-1$. 
\item Region E: $x\leq -1$ and $0< y \leq 1$. 
\item Region F: $0< x \leq 1$ and $y\leq -1$. 
\item The boundary between regions~B and~E: $x\leq -1$ and $y=0$.
\item The boundary between regions~B and~F:  $x=0$ and $y\leq -1$.
\item Region G:  $\max(|x|,|y|)<1$ and $q>32/27$.
\item Region H: $\max(|x|,|y|)<1$ and $q\leq 32/27$ and $x<-2y-1$. 
\item Region I: $\max(|x|,|y|)<1$ and $q\leq 32/27$ and $y<-2x-1$.
\item Region J: $-1\leq x < 0$ and $y\geq 1$. 
\item Region K:  $x\geq 1$ and $-1\leq y < 0$. 
\item Region L: $0<x<1$ and $-x<y<0$.
\item Region M: $0<y<1$ and $-y<x<0$.
\item The rest: There are some remaining unresolved points. These points (simultaneously) satisfy
all of the following inequalities:  $\max(|x|,|y|)<1$, $y<-x$, 
$q\leq 32/27$, $y\geq -2x-1$, $x\geq -2y-1$,   and  
$q\neq 1$.
\end{itemize}
\caption{A formal description of the regions appearing in 
Theorem~\ref{thm:main} and Figure~\ref{fig:one}.
For each region of interest, we give the condition for a point $(x,y)$ to
belong to that region.  Throughout we use $q$ to denote $(x-1)(y-1)$.}
\label{fig:regions}
\end{figure} 

Once again, there are ramifications for the complexity of approximating the Tutte polynomial.
Since an FPRAS for $Z(G;q,\gamma)$ gives a randomised algorithm for computing its sign,
we can again deduce that there is no FPRAS for points that are coloured red (unless  
there is a randomised polynomial-time algorithm for exactly solving every problem in \#P).
By contrast, for all of the points that are coloured green or blue, we also show that
the problem of computing $Z(G;q,\gamma)$ is in the complexity class $\numPQ$.
Thus, the polynomial can be approximated in polynomial-time using an NP oracle.
 
In order to reach into some of the regions, for example~F, it has been
necessary to use gadgets that go beyond the series-parallel graphs that 
have so-far proved adequate in this area.  For example, exploring region~F has 
necessitated the use of a gadget based on the Petersen graph.   

Our classification is not complete and leaves unresolved some areas
(coloured white in Figure~\ref{fig:one}).  Although the methods could no doubt
be pushed a little further, at the expense of adding further complexity
to the proofs, it seems likely that a complete classification is some
way off.  For example, showing that sign of the Tutte polynomial is
hard to compute at the point $(0,-4)$ would necessarily provide a
counterexample to Tutte's long-standing 5-flow conjecture.  In the other
direction, it is difficult to conceive of an efficient algorithm for
deciding the sign that would not at the same time resolve the
conjecture.

\section{Preliminaries}
\label{sec:preliminaries}

\subsection{The Tutte polynomial}

It will   be helpful to define the multivariate version of the 
random cluster formulation of the Tutte polynomial.
Let $\bgamma$ be a function that assigns a (rational) weight $\gamma_e$
to every edge~$e\in E$. We refer to $\bgamma$ as a ``weight function''.
We define 
$$ Z(G;q,\bgamma)=\sum_{A\subseteq E} 
q^{\kappa(V,A)}
\prod_{e\in A} \gamma_e.
$$

Given a graph~$G=(V,E)$ with distinguished nodes~$s$ and~$t$,
$Z_{st}(G;q,\bgamma)$  denotes the contribution to
$Z(G;q,\bgamma)$ arising from edge-sets~$A$ in which~$s$ and~$t$
are in the same component of $(V,A)$.
That is, 
$$Z_{st}(G;q,\bgamma) = \sum_{A\subseteq E: \text{$s$ and $t$ in same component}} 
q^{\kappa(V,A)}
\prod_{e\in A} \gamma_e.
$$
Similarly, $Z_{s|t}$ denotes the contribution
arising from edge-sets~$A$ in which $s$ and $t$ are in different components, so
$Z(G;q,\bgamma) = Z_{st}(G;q,\bgamma) + Z_{s|t}(G;q,\bgamma)$.

\subsection{Implementing new edge weights, series compositions and parallel compositions }
\label{sec:shiftdef}

Our treatment of implementations, series compositions and parallel compositions
is completely standard and is taken from \cite[Section 2.1]{planartutte}. The reader
who is familiar with this material can skip this section (which is included here for completeness).

Let 
$W$ be a set of (rational)
edge weights  
and fix a value $q$.
Let $w^*$ be a weight (which may not be in $W$) which we want to ``implement''.
Suppose that 
there is a graph~$\Upsilon$,
with distinguished vertices $s$ and~$t$ 
and a weight function $\hatbgamma: E(\Upsilon) \rightarrow W$
such that
\begin{equation}
\label{eq:implement}
w^* = q Z_{st}(\Upsilon;q,\hatbgamma)/Z_{s|t}(\Upsilon;q,\hatbgamma).
\end{equation}
In this case, we say that $\Upsilon$ and $\hatbgamma$ implement $w^*$
(or even that $W$ implements~$w^*$).

The purpose of ``implementing''  edge weights is this.
Let $G$ be a graph with weight function $\bgamma$.
Let $f$ be some edge of~$G$ with weight $\gamma_f=w^*$.
Suppose that $W$ implements $w^*$.
Let $\Upsilon$ be a graph with distinguished vertices $s$ and $t$
with a weight function $\hatbgamma:E(\Upsilon)\rightarrow W$ 
satisfying~(\ref{eq:implement}). 
Construct the weighted graph~$G'$ 
by replacing edge $f$ with a copy of~$\Upsilon$ (identify $s$ with either endpoint of~$f$
(it doesn't matter which one) and identify $t$ with the other endpoint of $f$ and remove edge $f$).
Let the weight function 
$\bgamma'$ of $G'$ inherit weights from $\bgamma$ and 
$\hatbgamma$ (so $\gamma'_e=\hat\gamma_e$ if $e\in E(\Upsilon)$ and
$\gamma'_e = \gamma_e$ otherwise).
Then the definition of the multivariate Tutte polynomial gives
\begin{equation}
\label{eq:shift}
Z(G';q,\bgamma') = \frac{Z_{s|t}(\Upsilon;q,\hatbgamma)}{q^2} Z(G;q,\bgamma).\end{equation}
So, as long as $q\neq 0$ and $Z_{s|t}(\Upsilon;q,\hatbgamma)$ is easy to evaluate,
evaluating the multivariate Tutte polynomial of $G'$ with weight function $\bgamma'$ is
essentially the same as evaluating the multivariate Tutte polynomial of $G$ with weight function~$\bgamma$.
 
Two especially useful implementations are series and parallel compositions.
These are explained in detail in \cite[Section 2.3]{JacksonSokal}.
So we will be brief here.
Parallel composition is the case in which $\Upsilon$ consists of two parallel edges $e_1$ and $e_2$
with endpoints $s$ and $t$ and  $\hat\gamma_{e_1}=w_1$ and $\hat{\gamma}_{e_2}=w_2$.
It is easily checked from Equation~(\ref{eq:implement})
that $w^* = (1+w_1)(1+w_2)-1$. Also, the extra factor in Equation~(\ref{eq:shift}) cancels,
so in this case $Z(G';q,\bgamma') = Z(G;q,\bgamma)$.

Series composition is the case in which $\Upsilon$ is a length-2 path from $s$ to $t$ consisting of edges $e_1$ and $e_2$
with $\hat\gamma_{e_1}=w_1$ and $\hat\gamma_{e_2}=w_2$.
It is easily checked from Equation~(\ref{eq:implement})
that $w^* =  w_1w_2/(q+w_1+w_2)$. Also, the extra factor in Equation~(\ref{eq:shift}) is $q+w_1+w_2$,
so in this case $Z(G';q,\bgamma') = (q+w_1+w_2) Z(G;q,\bgamma)$.
It is helpful to note that
$w^*$ satisfies
$$\left(1+\frac{q}{w^*}\right) = \left(1+\frac{q}{w_1}\right) \left(1+\frac{q}{w_2}\right).$$

We say that there is a ``shift'' 
from $(q,\alpha)$ to $(q,\alpha')$ if 
there is an implementation of $\alpha'$ consisting of some $\Upsilon$ and  
$\hatw:E(\Upsilon)\rightarrow W$ where $W$ is the singleton set $W=\{\alpha\}$.
This is the same notion of ``shift'' that we used in \cite{tutteone}. 
Taking $y=\alpha+1$ and $y'=\alpha'+1$
and defining $x$ and $x'$ by $q=(x-1)(y-1)=(x'-1)(y'-1)$ we
equivalently refer to this as a shift from $(x,y)$ to $(x',y')$.
It is an easy, but important observation that shifts may be composed to 
obtain new shifts.  So, if we have shifts from $(x,y)$ to $(x',y')$
and from $(x',y')$ to $(x'',y'')$, then we also have a shift 
from $(x,y)$ to $(x'',y'')$.

The
$k$-thickening of  \cite{JVW90}
is the parallel composition of $k$ edges of weight $\alpha$.
It implements $\alpha'=(1+\alpha)^k-1$ and is a shift from $(x,y)$ to
$(x',y')$ where $y' = y^k$ (and $x'$ is given by $(x'-1)(y'-1)=q$).
 Similarly, the $k$-stretch is the series composition of $k$ edges of weight $\alpha$.
It implements an $\alpha'$
satisfying
$$1+\frac{q}{\alpha'}= {\left(1+\frac{q}{\alpha}\right)}^k,$$
It is a shift from $(x,y)$ to 
$(x',y')$ where $x'=x^k$.
(In the classical bivariate $(x,y)$ parameterisation, there is effectively
one edge weight, so the stretching or thickening is applied uniformly
to every edge of the graph.)

Since it is useful to switch freely between $(q,\alpha)$ coordinates and $(x,y)$ coordinates we also
refer to the implementation in Equation~(\ref{eq:implement}) as an implementation of the 
point $(x,y)=(q/w^*+1,w^*+1)$ using the points
$$\{
(x,y)=(q/w+1,w+1)\mid w\in W
\}.$$
Thus, 
if $q=(x_1-1)(y_1-1)=(x_2-1)(y_2-1)$ then
the series composition of $(x_1,y_1)$ and $(x_2,y_2)$ implements
the point 
$$\left(\frac{q}{y_1 y_2-1}+1,y_1 y_2\right),$$
and the parallel composition of these implements the point
$$\left(x_1 x_2,\frac{q}{x_1 x_2-1}+1\right).$$
We make extensive use of series and
parallel composition, and the above identities will be employed without 
comment. 

\subsection{Computational Problems}
\label{sec:compprobs}
For fixed rational numbers~$q$, $\gamma$ and $\gamma_1,\ldots,\gamma_k$, we
consider the following computational problems\footnote{In \cite{tutteone} we
referred to these as $\multitutte(q,\gamma)$ and $\multitutte(q;\gamma_1,\ldots,\gamma_k)$ respectively,
but we use the shorter names here since there is no confusion.} from~\cite{tutteone}.

\prob{$\tutte(q,\gamma)$.}
{A graph $G=(V,E)$.}
{The rational number $Z(G;q,\gamma)$.} 
\prob{$\tutte(q;\gamma_1,\ldots,\gamma_k)$.}
{A graph $G=(V,E)$ and
a weight function $\bgamma:E\rightarrow \{\gamma_{1},\ldots,\gamma_{k}\}$.}
{The rational number $Z(G;q,\bgamma)$.}
We also consider variants in which the goal is to compute the sign of the Tutte polynomial.
\prob{$\signtutte(q,\gamma)$.}
{A graph $G=(V,E)$.}
{Determine whether the
sign of $Z(G;q,\gamma)$ is positive, negative, or $0$.}
\prob{$\signtutte(q;\gamma_1,\ldots,\gamma_k)$.}
{A graph $G=(V,E)$ and
a weight function $\bgamma:E\rightarrow \{\gamma_{1},\ldots,\gamma_{k}\}$.}
{Determine whether the sign of $Z(G;q,\bgamma)$ is positive, negative, or $0$.}

\subsection{Randomised algorithms and approximate counting}\label{sec:ranAlg}

A \emph{randomised algorithm} for a computational problem
takes an instance of the problem and returns a result.
We require that for each instance, and each run of the algorithm, the probability
that the result is equal to the correct output 
for the given instance
is at least~$\tfrac34$.

A \emph{randomised approximation scheme\/} is an algorithm for
approximately computing the value of a function~$f:\Sigma^*\rightarrow
\mathbb{R}$. The
approximation scheme has a parameter~$\varepsilon>0$ which specifies
the error tolerance.
A \emph{randomised approximation scheme\/} for~$f$ is a
randomised algorithm that takes as input an instance $ x\in
\Sigma^{* }$ (e.g., an encoding of a graph~$G$) and an error
tolerance $\varepsilon >0$, and outputs a number $z\in \mathbb{Q}$
(a random variable of the ``coin tosses'' made by the algorithm)
such that, for every instance~$x$,
\begin{equation*}
\Pr \big[e^{-\epsilon} \leq z/f(x) \leq e^\epsilon \big]\geq \frac{3}{4}\, ,
\end{equation*}
where, by convention, $0/0=1$.
(The slight modification of the more familiar definition is to 
ensure that functions~$f$ taking negative values are dealt with correctly.)

The randomised approximation scheme is said to be a
\emph{fully polynomial randomised approximation scheme},
or \emph{FPRAS},
if it runs in time bounded by a polynomial
in $ |x| $ and $ \epsilon^{-1} $.

Completeness of a problem in \#P is defined with respect to polynomial-time 
Turing reduction.   Suppose $\signtutte(q,\gamma)$ is \#P-hard for some 
setting of the parameters $q,\gamma$.  Then, clearly,
$\signtutte(q,\gamma)\in\FP$ would imply $\numP=\FP$.  In addition,
the existence of a polynomial-time randomised algorithm for $\signtutte(q,\gamma)$
would imply the existence of a polynomial-time randomised algorithm for every
problem in~$\numP$.  The reasoning is as follows.  Suppose the randomised
algorithm for $\signtutte(q,\gamma)$ has failure probability at most $\frac14$.
By a standard powering argument, the failure probability can be reduced 
so that it is exponentially small in the input size.  But a polynomial-time Turing  
reduction makes only polynomially many oracle calls, so the probability that even
a single one produces the wrong answer is exponentially small, and certainly less than~$\frac14$.
As an immediate consequence, an FPRAS for $\tutte(q,\gamma)$ would again 
imply the existence of a polynomial-time randomised (but exact in the event of success) 
algorithm for every problem in~$\numP$.

\section{\#P-hardness of computing the sign of the Tutte polynomial --- the multivariate case}
\label{sec:main}

We use the fact that the following problem is \#P-complete.
This was shown by
Provan and Ball~\cite{ProvanBall}. 
\prob{\MCCut.}
{A graph $G=(V,E)$ and distinguished vertices~$s,t\in V$.}
{$|\{S\subseteq E:\mbox{$S$ is a minimum cardinality $(s,t)$-cut in $G$}\}|$.}
 
 \setcounter{theorem}{1}
 \begin{lemma}
 \label{lem:signhardqbig} 
 Suppose $q>1$ 
 and  that 
 $\gamma_1\in(-2,-1)$
 and 
 $\gamma_2\notin[-2,0]$.
Then  
$\signtutte(q;\gamma_1,\gamma_2)$
is $\numP$-hard. 
\end{lemma}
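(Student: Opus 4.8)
The plan is to reduce \MCCut{} to $\signtutte(q;\gamma_1,\gamma_2)$. Given an instance $(G,s,t)$ of \MCCut{}, let $c$ denote the cardinality of a minimum $(s,t)$-cut and $N$ the number of such cuts; the goal is to recover $c$ and $N$ from sign queries. The first observation is that in the edge-weighted polynomial $Z(G;q,\bgamma)$ with every edge assigned weight $\gamma_2$, the complement of an $(s,t)$-cut corresponds to edge-sets keeping $s$ and $t$ in distinct components, so minimum cuts are the dominant contribution once $|\gamma_2|$ is driven to be very small. The strategy is therefore to use $\gamma_1\in(-2,-1)$ to build, via series and parallel composition, a single ``effective'' edge weight $\gamma_2^{(k)}$ that is a tiny real number (roughly $\delta$ for $\delta\to 0$), replace each edge of $G$ by a copy of that gadget, and then argue that the sign of $Z$ is governed by the minimum-cut term.

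Concretely, I would proceed as follows. First, using the identities for series and parallel composition from Section~\ref{sec:shiftdef}, I would show that starting from the two available weights $\gamma_1\in(-2,-1)$ and $\gamma_2\notin[-2,0]$ and $q>1$, one can implement a weight of the form $\gamma^\ast = \lambda \delta + O(\delta^2)$ for arbitrarily small $\delta>0$ with $\lambda$ of a controlled sign; the point of $\gamma_1\in(-2,-1)$ is that $1+\gamma_1\in(-1,0)$, so parallel-composing copies of $\gamma_1$ produces weights oscillating in sign and approaching $0$, while series composition with $\gamma_2$ fine-tunes the value. (The precise construction is a routine calculation with the composition formulas, and I would separate it into a small implementation claim.) Second, I would take $G'$ to be $G$ with every edge replaced by this gadget of weight $\gamma^\ast$. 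Writing $Z(G';q,\gamma^\ast)=\sum_{A\subseteq E} q^{\kappa(V,A)}(\gamma^\ast)^{|A|}$ and grouping by $|A|$, the lowest-order terms in $\delta$ come from the smallest $A$ whose complement disconnects $s$ from $t$ — that is, from complements of minimum $(s,t)$-cuts — giving a leading term proportional to $q^{\,(\text{something})} N \delta^{|E|-c}$ (after pulling out the all-edges term and tracking $\kappa$). The sign of $Z(G';q,\gamma^\ast)$ for $\delta$ small enough is then determined by the sign of this leading coefficient, which depends only on $q$, on the known quantity $|E|$, on the parity of $|E|-c$, and on $\lambda$.

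From here the extraction is standard binary-search-style recovery. By choosing $\delta$ at a sequence of polynomially many dyadic values and by toggling the parity of the exponent (using an extra thickening/stretching step, or two gadget variants of opposite-sign $\lambda$), one reads off $c$ from the sign pattern and then $N$ from a finer comparison. Because a polynomial-time Turing reduction may make polynomially many oracle calls, it suffices that each individual sign query be answerable and that all the gadget sizes be polynomially bounded in $|G|$ — which holds since achieving $\delta$ exponentially small needs only linearly many composition steps. I expect the main obstacle to be the implementation claim: one must verify that with \emph{only} the two given weights (not a continuum) and the sign constraints $\gamma_1\in(-2,-1)$, $\gamma_2\notin[-2,0]$, $q>1$, one really can implement weights that are both small in absolute value \emph{and} of prescribed sign, robustly enough that the leading-order analysis of $Z(G';q,\gamma^\ast)$ is unaffected by the error terms; handling the boundary cases of $\gamma_2$ (e.g.\ $\gamma_2>0$ versus $\gamma_2<-2$) separately is where the case analysis will concentrate.
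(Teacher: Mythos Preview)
Your reduction idea --- reducing from \MCCut{} and recovering the answer by binary search on sign queries --- is the right framework, but the analytic set-up has a genuine gap. With every edge of $G$ carrying a weight $\gamma^\ast$ of small modulus, the expansion $Z(G;q,\gamma^\ast)=\sum_{A}q^{\kappa(V,A)}(\gamma^\ast)^{|A|}$ is dominated by $A=\emptyset$, which contributes $q^{|V|}>0$; the next terms come from $|A|=1,2,\ldots$, none of which carry minimum $(s,t)$-cut information. Complements of minimum cuts correspond to $|A|=|E|-c$, which is \emph{large}, so those terms are the \emph{most} suppressed when $|\gamma^\ast|$ is small --- the opposite of what you claim. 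For all sufficiently small $|\gamma^\ast|$ the sign of $Z$ is simply positive, and no cut data can be read off. To make complements of minimum cuts dominate $Z_{s|t}$ you must push the edge weight to be \emph{large}, not small.

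Even with large uniform weights $M$ you are still stuck, because $Z=Z_{st}+Z_{s|t}$ is then dominated by $Z_{st}\approx q\,M^{|E|}$ and the sign is again trivially positive. The mechanism you are missing is an additional $s$--$t$ edge with a carefully tuned weight $\gamma'=-1-\epsilon$, $\epsilon\in(0,1)$, which yields
\[
Z(G';q,\bgamma)= -\epsilon\,Z_{st}(G;q,\bM) + \Big(1-\tfrac{1+\epsilon}{q}\Big)\,Z_{s|t}(G;q,\bM),
\]
pitting a negative multiple of $Z_{st}$ against a positive multiple of $Z_{s|t}$. Now a genuine sign change occurs as $\epsilon$ varies (near $\epsilon\approx C\,M^{-k}(q-1)$, with $C$ the number of minimum cuts and $k$ their size), and binary search on $\epsilon$ recovers~$C$. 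The roles of the two weights are then: $\gamma_2\notin[-2,0]$ (so $y_2=\gamma_2+1\notin[-1,1]$) supplies both the large $M$ by thickening and, via Lemma~\ref{lem:shiftbigqbigT}, finely tunable large positive $y$-values; $\gamma_1\in(-2,-1)$ (so $y_1=\gamma_1+1\in(-1,0)$) is used in parallel composition to produce the required small \emph{negative} $y'=-\epsilon$. Your implementation discussion, which tries to manufacture small weights directly from $\gamma_1,\gamma_2$ and assign them uniformly to all edges, does not lead to a sign-changing quantity.
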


\begin{proof} 
We will give a Turing reduction from \MCCut\ 
to $\signtutte(q;\gamma_1,\gamma_2)$.
 
Let $G,s,t$ be an instance of \MCCut.
Assume without loss of generality that $G$ has no edge from~$s$ to~$t$.
Let $n=|V(G)|$ and $m=|E(G)|$.
Assume without loss of generality that 
$G$ is connected and that 
$m\geq n$ is sufficiently large.
Let $k$ be the size of a minimum cardinality $(s,t)$-cut in~$G$ and
let $C$ be the number of size-$k$
$(s,t)$-cuts.

The following calculations are more general than necessary so that we can re-use them in the
proof of Lemma~\ref{lem:signhardqsmall} (where $q<1$ and $q$ may even be negative).
Let 
$$M^* = \max\left(   {\left(8 \max\left(|q|,\frac{1}{|q|}\right)\right)}^m, 
\frac{2}{|q-1|}\right).$$
Let $h$ be the smallest integer such that $(\gamma_2+1)^h-1>  M^*$
and let $M = (\gamma_2+1)^h-1$.
Note that we can implement~$M$ from~$\gamma_2$ via an $h$-thickening, and
$h$ is at most a polynomial 
in~$m$. 
 
Let $\delta =  
   {\left(2 \max(|q|,|q|^{-1})\right)}^m / M$. 
Let $\bM$ be the constant weight function which gives every
edge weight~$M$.
We will use the following facts:
\begin{equation} 
M^m q - \delta M^m |q| \leq Z_{st}(G;q,\bM) \leq M^m q + \delta M^m |q|
\label{thatitem}
\end{equation}
and
\begin{equation}
C M^{m-k}q^2(1-\delta) \leq Z_{s|t}(G;q,\bM) \leq C M^{m-k} q^2 (1+\delta).
\label{thisitem}
\end{equation}
Fact~(\ref{thatitem}) follows from
the fact that each of the (at most 
$2^m$) terms 
in $Z_{st}(G;q,\bM)$, 
other than the term with all edges in~$A$, has absolute value at most 
$M^{m-1} {\left(\max(|q|,1)\right)}^n$
and 
$ {2^m M^{m-1}{\left(\max(|q|,1)\right)}^n} \leq \delta
{M^m |q|}$.
Fact~(\ref{thisitem}) follows from
the fact that all terms in $Z_{s|t}(G;q,\bM)$ are complements of $(s,t)$-cuts.
Each term that is not a 
complement of a
size-$k$ $(s,t)$-cut has absolute value at most
$M^{m-k-1} q^2{\left(\max(|q|,1)\right)}^n$
and 
$$2^m
M^{m-k-1} 
q^2{\left(\max(|q|,1)\right)}^n \leq \delta C M^{m-k}q^2.$$
 
For a parameter~$\epsilon$ in the open interval $(0,1)$ which we will tune below, let
$\gamma' = -1-\epsilon \in (-2,-1)$. 
We will discuss the implementation of $\gamma'$ below.
Let $G'$ be the  graph formed from $G$ by adding an edge from $s$ to~$t$.
Let $\bgamma$ be the edge-weight function for~$G'$ that assigns weight~$M$ to every edge of~$G$ and
assigns weight $\gamma'$ to the new edge.
Then, using the definition of the Tutte polynomial,
\begin{align}
\nonumber
Z(G';q,\bgamma) &= Z_{st}(G;q,\bM)(1 + \gamma') + Z_{s|t}(G;q,\bM) \left(1 + \frac{\gamma'}{q}\right) \\
&= - \epsilon Z_{st}(G;q,\bM)  + Z_{s|t}(G;q,\bM) \left(1 - \frac{ 1+\epsilon}{q}\right).
\label{eq:Z}
\end{align}
 
Now suppose 
$\epsilon=M^{-2m}$. Then
$$
Z(G';q,\bgamma) = - M^{-2m} Z_{st}(G;q,\bM)  + Z_{s|t}(G;q,\bM) \left(1 - \frac{ 1+ M^{-2m}}{q}\right).
$$
Now since $M> 2/(q-1)$
and $M\geq1$, 
we have $1-(1+M^{-2m})/q \geq (1-1/q)/2$.  
(Note that $M$ is bounded away from~1, so $M^{-2m}$ can be made a small as 
we need by taking $m$ sufficiently large.)
So, using (\ref{thatitem}) and (\ref{thisitem}),
$$Z(G';q,\bgamma) \geq ((1-1/q)/2) C M^{m-k}q^2 (1-\delta) - M^{-2m} M^{m} q(1+\delta),$$
which is positive since $k \leq m$. 
On the other hand, 
using the definition of~$M$ and Facts (\ref{thatitem}) and (\ref{thisitem}) above,
we can confirm that   when $\epsilon=1$, $Z(G';q,\bgamma)$ is negative.
Also, when $\epsilon=q-1$ we have $Z(G';q,\bgamma) = - (q-1) Z_{st}(G;q,\bM)$,
which again is negative.

Thus we have a range from $\epsilon=M^{-2m}$ to $\epsilon=\min(1,q-1)$ 
of length at most~$1$ 
in which $Z(G';q,\bgamma)$ changes sign.
The idea is to perform binary search on this range
to find 
an 
$\epsilon$ where $Z(G';q,\bgamma)=0$.
For this value of~$\epsilon$, we have
$\epsilon Z_{st}(G;q,\bM)
= Z_{s|t}(G;q,\bM) \left(1 - \frac{ 1+\epsilon}{q}\right)$.
It turns out that, given this identity, estimates (\ref{thatitem}) and (\ref{thisitem}) 
above will give us enough information to calculate~$C$.

As one would expect, there are small technical complications. 
Since we are somewhat constrained in what values~$\epsilon$ we can implement, we won't be able to
discover the exact value of~$\epsilon$ that we need, but we will be able to approximate it sufficiently closely
to compute~$C$ exactly from (\ref{thatitem}) and~(\ref{thisitem}).
Suppose for a moment that we are able, for a given~$\epsilon\in(M^{-2m},\min(1,q-1))$,
to compute the sign of $Z(G';q,\bgamma)$.
Our basic strategy will be binary search, sub-dividing the
initial interval $\lceil m^2 \lg M \rceil$
times, so eventually we'll get an interval of width at most $M^{-m^2}$
which contains an $\epsilon$ where $Z(G';q,\bgamma)=0$.
 
To do this, we need to address the issue of computing 
the sign of 
$Z(G';q,\bgamma)$ using an oracle for
$\signtutte(q;\gamma_1,\gamma_2)$.
We have already seen above that it is easy to implement the weight~$M$ using~$\gamma_2$ 
(and that the implementation has polynomial size)  --- we now need to consider the implementation of
$\gamma'=-1-\epsilon$ (where $\epsilon\in(M^{-2m},\min(1,q-1))$ is the particular value that is being queried).

Let $y'=-\epsilon$ be the point that we desire to implement.
Let $y_1 = \gamma_1+1$. Note that $y_1 \in(-1,0)$.
Let $j$ be the smallest odd integer so that ${|y_1|}^j< \epsilon$.
Let $T^- = {|y_1|}^{-2}$ and $T^+ = {|y_1|^{-3}}$.
Let $T = -\epsilon/y_1^{j+2}$. Note that $1<T^- \leq T \leq T^+$.
   
Let $(x_2,y_2) = (q/\gamma_2+1,\gamma_2+1)$. Note that $y_2\notin[-1,1]$.
We will define a small quantity~$\pi$ below.
Looking ahead to  Lemma~\ref{lem:shiftbigqbigT}, 
we see that, from the point $(x_2,y_2)$ 
we can implement a point $(x'',y'')$ with
$T-\pi \leq y'' \leq T$.
The size of the graph used to implement $(x'',y'')$ is at most a polynomial in $\log(\pi^{-1})$.
It does not depend upon~$T$, though it does depend on the fixed bounds~$T^-$ and~$T^+$.
Now implement $y'$ by a parallel composition of $y''$ and $j+2$ copies of $y_1$.
(We can do this parallel composition because $j$ is only polynomially big in 
 $m$.)
Note that $-\epsilon  \leq y' \leq - \epsilon + \pi |y_1|^{j+2}  $, so of
course $- \epsilon \leq y' \leq - \epsilon + \pi$.

Thus, in the binary search, we may not be  able to
query the exact value of~$\epsilon$ that we want to, but we can query a value that 
is
between
$\epsilon-\pi$ and $\epsilon$.

Recall that our goal is to end up with a sub-interval
of the initial interval $(M^{-2m},\min(1,q-1))$  such that the subinterval has width at most $M^{-m^2}$
 and contains an $\epsilon$ where $Z(G';q,\bgamma)=0$.
We do this by setting $\pi = M^{-m^2}/3$ so that $\pi$ is only a third as large as the smallest
subinterval width (where we stop the binary search). We also adjust the binary search,
sub-dividing the original interval up to $\lceil m^2 \log_{3/2} M \rceil$ times rather than 
$\lceil m^2 \log_{2} M \rceil$ times, to make up for the fact that we might end up
with (crudely) at most two-thirds of the interval after one iteration, rather than half.
The result, then, is that 
we can find a subinterval of width at most $M^{-m^2}$
which contains an $\epsilon$ where $Z(G';q,\bgamma)=0$.

Now let $\epsilon$ be an endpoint of this subinterval.
Let 
$$\rho = 2^m {\max(|q|,1)}^m M^m M^{-m^2}.$$ 
Since $\epsilon \geq M^{-2m}$
and $m$ is sufficiently large, 
we have
$\rho \leq \epsilon M^m |q| 
4^{-m}$.
From the definition of Tutte polynomial, $Z(G';q,\bgamma)$ is linear as a function of $\gamma'$ (and hence of $\epsilon$), 
and the coefficient of $\gamma'$ is a sum of $2^m$ terms, each bounded in absolute value 
by $\max(|q|,1)^nM^m\leq\max(|q|,1)^mM^m$.   Since $\gamma'$ is within distance $M^{-m^2}$ of the zero of $Z(G';q,\bgamma)$,
we see that $|Z(G';q,\bgamma)| \leq  \rho$. 

Now using (\ref{eq:Z}), (\ref{thatitem}) and (\ref{thisitem}), 
we have
$$
\frac{-\rho + \epsilon M^m q(1-\delta)}{\left(1-\frac{1+\epsilon}{q}\right) M^{m-k} q^2 (1+\delta)}
\leq C \leq 
\frac{\rho + \epsilon M^m q(1+\delta)}{\left(1-\frac{1+\epsilon}{q}\right) M^{m-k} q^2(1-\delta)}.  
$$ 
so, since $\delta \leq 
4^{-m}$,
\begin{equation}
\label{lastone}
\frac{(1-2\cdot4^{-m})\epsilon M^m q}{\left(1-\frac{1+\epsilon}{q}\right) M^{m-k} q^2 (1+4^{-m})}
\leq C \leq 
\frac{  \epsilon M^m q(1+ 2 \cdot 4^{-m})}{\left(1-\frac{1+\epsilon}{q}\right) M^{m-k} q^2(1-4^{-m})}.  
\end{equation}

Now the point is that $C$ is an integer between~$1$ 
and~$2^m$.
Even though the value of~$k$ is not known, 
the fact that $M>4^m$ means that there can only be one integer~$k$
such that the above interval contains an integer between~$1$ and~$2^m$
(so $k$ can easily be deduced). All of the other quantities in the lower and upper bounds in 
(\ref{lastone}) are known.
Now let 
$R = \frac{\epsilon M^k }{q-(1+\epsilon)}$, so
(\ref{lastone}) becomes
\begin{equation}
\label{lastlast}
\left(\frac{1-2\cdot 4^{-m}}{  1+4^{-m}} \right)R
\leq C \leq 
   R 
\left( \frac{1+ 2 \cdot 4^{-m}}{1-4^{-m}}\right) . 
\end{equation}

Now, 
$R<2^{m+1}$, 
since otherwise the left-hand-side of (\ref{lastlast}) is greater than~$2^m$.
Also, 
multiplying through by $(1+4^{-m})(1-4^{-m})$,
the width of the interval  is at most 
$6 \cdot 4^{-m} R < 1$ 
so the width of the interval in (\ref{lastlast}) is less than~$1$, so 
the (integral) value of~$C$ can be calculated exactly.
\end{proof}

 We have a similar lemma for $q<1$.
 
 \begin{lemma}
 \label{lem:signhardqsmall}  
 Suppose $q<1$ and $q\neq 0$ and  that 
 $\gamma_1\in(-1,0)$
 and 
 $\gamma_2\notin[-2,0]$.
Then  
$\signtutte(q;\gamma_1,\gamma_2)$
is $\numP$-hard. 
\end{lemma}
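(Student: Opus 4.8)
The plan is to re-run, essentially verbatim, the Turing reduction from \MCCut\ given in the proof of Lemma~\ref{lem:signhardqbig}, exploiting the fact (flagged there) that the estimates~(\ref{thatitem}) and~(\ref{thisitem}), the identity~(\ref{eq:Z}), and the extraction of~$C$ from~(\ref{lastone})--(\ref{lastlast}) were all written so as to hold for an arbitrary non-zero~$q$, in particular for $q<1$ and even for $q<0$. Given an instance $G,s,t$ of \MCCut\ one again implements a huge positive weight $M\geq M^*$ from~$\gamma_2$ by an $h$-thickening, weights every edge of~$G$ by~$M$, adds one $s$--$t$ edge of weight~$\gamma'$ close to~$-1$, and reads off $Z(G';q,\bM,\gamma')$ from~(\ref{eq:Z}).

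The single genuine change is the \emph{direction} in which $\gamma'$ is perturbed away from~$-1$. Here $\gamma_1\in(-1,0)$, so $y_1:=\gamma_1+1\in(0,1)$ is positive and parallel compositions of powers of~$y_1$ cannot reach weights just below~$-1$; so instead of $\gamma'=-1-\epsilon$ I would take $\gamma'=-1+\epsilon$, i.e.\ $y':=\gamma'+1=\epsilon\in(0,1)$, with $\epsilon$ ranging over $(M^{-2m},\epsilon_{\max})$ for a fixed constant $\epsilon_{\max}<\min(1,1-q)$. Keeping $\epsilon$ bounded away from $1-q$ matters because~(\ref{eq:Z}) now reads $Z(G';q,\bgamma)=\epsilon\,Z_{st}(G;q,\bM)+Z_{s|t}(G;q,\bM)\,(q-1+\epsilon)/q$, whose $Z_{s|t}$-coefficient $(q-1+\epsilon)/q$ genuinely vanishes at $\epsilon=1-q$ when $0<q<1$ (for $q<0$ that point lies outside $(0,1)$ and any $\epsilon\in(0,1)$ is safe).

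Using~(\ref{thatitem}) and~(\ref{thisitem}) (so that $Z_{st}\approx M^m q$, $Z_{s|t}\approx C M^{m-k}q^2>0$, and $M>2/|q-1|$) one checks that $Z(G';q,\bgamma)$ changes sign across $(M^{-2m},\epsilon_{\max})$: at $\epsilon=M^{-2m}$ the $\epsilon\,Z_{st}$ term is negligible and the sign is that of $Z_{s|t}\cdot(q-1)/q$, i.e.\ that of~$q(q-1)$; at $\epsilon=\epsilon_{\max}$ the $M^m$-sized term $\epsilon\,Z_{st}$ dominates and the sign is that of~$q$; and $q(q-1)$ and $q$ have opposite signs precisely because $q<1$ (with $q\neq0$). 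So, exactly as in Lemma~\ref{lem:signhardqbig} (with ``positive'' and ``negative'' possibly swapped according to the sign of~$q$), binary search --- subdividing $\lceil m^2\log_{3/2}M\rceil$ times with error tolerance $\pi=M^{-m^2}/3$ --- produces a subinterval of width $\leq M^{-m^2}$ containing a zero of $Z(G';q,\bgamma)$. To query a given~$\epsilon$, I would implement $\gamma'=-1+\epsilon$ by choosing the least~$j$ with $y_1^{\,j}<\epsilon$ (no parity constraint now, since $y_1>0$), so that $T:=\epsilon/y_1^{\,j}$ lies in the fixed bounded interval $(1,1/y_1]$; implement a point $(x'',y'')$ with $T-\pi\leq y''\leq T$ from~$\gamma_2$ by a polynomial-size gadget --- the shift lemma appropriate to $q<1$, playing the role that Lemma~\ref{lem:shiftbigqbigT} plays in the previous proof --- and parallel-compose $y''$ with $j$ copies of~$y_1$ to obtain $y'$ with $\epsilon-\pi\leq y'\leq\epsilon$. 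Taking $\epsilon$ to be an endpoint of the final subinterval, (\ref{thatitem})--(\ref{thisitem}) then trap~$C$ in the interval~(\ref{lastlast}) with $R=\epsilon M^k/(1-q-\epsilon)>0$ (the replacement of $q-1-\epsilon$ by $1-q-\epsilon$ in the denominator being the only cosmetic difference from Lemma~\ref{lem:signhardqbig}), and since that interval has width $<1$ the integer~$C$ is recovered exactly.

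The main obstacle is the shift lemma invoked above: producing, from the single weight~$\gamma_2$ (with $y_2\notin[-1,1]$) using only series, parallel and thickening operations, a point whose $y$-coordinate can be driven arbitrarily close to an arbitrary target in a fixed bounded positive interval, using a gadget of size polynomial in~$\log(\pi^{-1})$, and doing this for $q<1$ --- in particular for $q<0$, where the sign behaviour of series and parallel compositions is considerably more delicate than in the $q>1$ case. A secondary point needing care is checking that for negative~$q$ the dominant (``all edges present'') terms really do control the signs of $Z_{st}(G;q,\bM)$ and $Z_{s|t}(G;q,\bM)$ as asserted, which is precisely what the $|q|$-flavoured estimates~(\ref{thatitem})--(\ref{thisitem}) were designed to guarantee; everything downstream of the shift lemma is then a routine re-run of the Lemma~\ref{lem:signhardqbig} computation.
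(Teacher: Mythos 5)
Your proposal follows essentially the same route as the paper: take $\gamma'=-1+\epsilon$ (so $y'=\epsilon\in(0,1)$), re-use estimates~(\ref{thatitem}) and~(\ref{thisitem}), show a sign change in $\epsilon$, binary-search to a width-$M^{-m^2}$ subinterval containing a zero, and extract $C$ from~(\ref{lastone})--(\ref{lastlast}). The one place you flag as an ``obstacle'' --- the shift lemma for $q<1$ --- is in fact already present in the paper, but with one important difference from how you phrase it. For $0<q<1$, Lemma~\ref{lem:shiftbigqbigT} applies unchanged: its hypothesis is $q>0$, not $q>1$, so the point $(x_2,y_2)=(q/\gamma_2+1,\gamma_2+1)$ with $y_2\notin[-1,1]$ does the job exactly as in Lemma~\ref{lem:signhardqbig}. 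For $q<0$, however, your phrase ``producing, from the single weight $\gamma_2$'' would indeed be a genuine difficulty: the paper instead invokes Lemma~\ref{lem:shiftsmallqbigT}, which requires \emph{two} starting points --- one with $y$-coordinate outside $[-1,1]$ (supplied by $\gamma_2$) and one with $y$-coordinate strictly inside $(-1,1)$ (supplied by $\gamma_1$). So the resolution is not a harder single-weight shift argument but the availability of $\gamma_1$ as a second gadget weight. Two minor points: your choice $T=\epsilon/y_1^{\,j}$ with $j$ minimal such that $y_1^{\,j}<\epsilon$ gives $T\in(1,1/y_1]$, which is not bounded strictly away from $1$ as the shift lemmas require ($1<T^-$); using $T=\epsilon/y_1^{\,j+2}$ as in the paper fixes this at no cost. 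And your precaution of keeping $\epsilon<\min(1,1-q)$ is unnecessary --- the paper evaluates directly at $\epsilon=1-q$ (for $0<q<1$), where the $Z_{s|t}$-coefficient vanishes and $Z(G';q,\bgamma)=(1-q)Z_{st}(G;q,\bM)>0$, which serves perfectly well as the endpoint with opposite sign.
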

\begin{proof}

The situation is very similar to that of Lemma~\ref{lem:signhardqbig}.

We start with the situation $0<q<1$.
In this case, we follow the proof of Lemma~\ref{lem:signhardqbig}.
Then Facts~(\ref{thatitem}) and~(\ref{thisitem}) hold, as before.
For the tuneable parameter $\epsilon\in(0,1)$,  
we let $\gamma' = -1+\epsilon\in (-1,0)$.
Implementing $G'$ as  in the proof of Lemma~\ref{lem:signhardqbig}, we have
\begin{equation}
\label{newqeq:Z}
Z(G';q,\bgamma) =  \epsilon Z_{st}(G;q,\bM)  + Z_{s|t}(G;q,\bM) \left(1 - \frac{ 1-\epsilon}{q}\right) .
\end{equation}
Now,  suppose $\epsilon=M^{-2m}$.
Then since $M>2/(1-q)$ and $M\geq1$, we have
$$1-(1-M^{-2m})/q \leq \tfrac12(1-1/q)<0.$$
Using Facts~(\ref{thatitem}) and (\ref{thisitem}), we find that 
$Z(G';q,\bgamma)$ is negative.
On the other hand, at $\epsilon=1-q$, $Z(G';q,\bgamma)$ is positive.

To implement $\gamma'$, let $y'=\epsilon$ be the point that we desire to implement.
Let $y_1=\gamma_1+1$. Note that $y_1 \in (0,1)$.
Now proceed as in the proof of Lemma~\ref{lem:signhardqbig},
with $T=\epsilon/y_1^{j+2}$, and $T^-$ and $T^+$ as before.
Once again
we find a subinterval of $( M^{-2m},1-q)$
of width at most $M^{-m^2}$ which contains an $\epsilon$ where $Z(G';q,\gamma)=0$, so
we let $\epsilon$ be an endpoint of this subinterval and we conclude
that $|Z(G';q,\gamma)| \leq \rho$.
Now we finish as in the proof of Lemma~\ref{lem:signhardqbig}.
 
The argument for $q<0$ also follows the proof of Lemma~\ref{lem:signhardqbig}.
Here,  $Z_{st}(G;q,\bM)$ is negative
and  
$Z_{s|t}(G;q,\bM)$ is positive.
Taking~$\gamma'=-1+\epsilon$, as above,
we still have (\ref{newqeq:Z}).
Now, suppose $\epsilon=M^{-2m}$.
Then by (\ref{newqeq:Z}), 
$Z(G';q,\bgamma) \geq M^{-2m} Z_{st}(G;q,\bM)+ Z_{s|t}(G'q,\bM)$, which is positive.
On the other hand, at $\epsilon=1$,
$Z(G';q,\bgamma)$ is negative.
Now the implementation of~$\gamma'$ proceeds as above, except that we use 
 Lemma~\ref{lem:shiftsmallqbigT} (working from points $(x_1,y_1)$ and $(x_2,y_2)$)
 instead of Lemma~\ref{lem:shiftbigqbigT}. 
 
So
we find a subinterval of $( M^{-2m},1)$
of width at most $M^{-m^2}$ which contains an $\epsilon$ where $Z(G';q,\gamma)=0$. 
Letting 
$\epsilon$ be an endpoint of this subinterval, we conclude
that $|Z(G';q,\gamma)| \leq \rho$. 
Now we finish as in the proof of Lemma~\ref{lem:signhardqbig}.
\end{proof}

\section{Implementing new edge weights}
\label{sec:shift}

In this section, we collect the information that we need about
implementing edge weights within various regions of the Tutte Plane.  
The following straightforward lemmas are useful.

 \begin{lemma}
 \label{lem:xlefttoyup}
 Suppose $q>0$ and that $(x,y)$ is a point with $x<-1$. 
 Then $(x,y)$ can be used to implement a point $(x',y')$ with $y'>1$.
 \end{lemma}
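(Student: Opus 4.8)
The plan is to work in the $(q,\alpha)$ coordinates, where $y=\alpha+1$, so that $x<-1$ with $q>0$ corresponds to $\alpha = q/(x-1) \in (-q/2, 0)$; in particular $\alpha\in(-1,0)$ whenever $q\le 2$, but for larger $q$ we only know $\alpha\in(-q/2,0)$, hence $y=\alpha+1\in(1-q/2,1)$. The goal is to reach some point with $y'>1$, i.e.\ some positive edge weight $\alpha'>0$. The natural tool is the $k$-stretch (series composition of $k$ copies of the weight $\alpha$), which produces $\alpha'$ satisfying $1+q/\alpha' = (1+q/\alpha)^k$, equivalently a shift to $(x',y')$ with $x'=x^k$. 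Since $x<-1$, the values $x^k$ alternate in sign and grow in magnitude; for even $k$ we get $x^k>1$, and then $y' = q/(x^k-1)+1 > 1$ because $q>0$ and $x^k-1>0$.

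The key steps, in order: (1) From $q>0$ and $x<-1$, record that $\alpha = y-1 \in (-q/2,0)$, so $\alpha\neq 0$ and the series-composition formula applies. (2) Apply a $2$-stretch: this is a legal series composition of two edges each of weight $\alpha$, and by the identity in Section~\ref{sec:shiftdef} it is a shift from $(x,y)$ to $(x',y')$ with $x'=x^2$. (One must only check the denominator $q+\alpha_1+\alpha_2 = q+2\alpha$ is nonzero; since $\alpha>-q/2$ we have $q+2\alpha>0$, so the series composition is well-defined and the extra factor $q+2\alpha$ in Equation~(\ref{eq:shift}) is positive, hence harmless.) (3) Now $x'=x^2>1$ since $x<-1$, so $x'-1>0$, and therefore $y' = q/(x'-1)+1 > 1$ because $q>0$. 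This $(x',y')$ is the desired point.

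I expect essentially no obstacle here --- this is one of the ``straightforward lemmas'' the paper flags. The only point requiring a moment's care is verifying that the series composition is actually performable, i.e.\ that $q+2\alpha\neq 0$ (and more generally that we are not dividing by zero when passing to $(q,\alpha)$ coordinates, which needs $x\neq 1$, guaranteed by $x<-1$). Given $x<-1$ and $q>0$ we have $\alpha\in(-q/2,0)$, which yields $q+2\alpha\in(0,q)$, so the $2$-stretch is valid and we are done. If one prefers to avoid even this computation, one can instead simply invoke the fact, already recorded in Section~\ref{sec:shiftdef}, that the $k$-stretch implements a shift to the point with $x'=x^k$, take $k=2$, and read off $y'=q/(x^2-1)+1>1$ directly.
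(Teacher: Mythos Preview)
Your proof is correct and takes essentially the same approach as the paper: a $2$-stretch to reach $(x',y')=(x^2,(x+y)/(1+x))$. Your verification that $y'>1$ via the formula $y'=q/(x^2-1)+1$ (using $x^2>1$ and $q>0$) is, if anything, slightly cleaner than the paper's direct manipulation of the fraction $(x+y)/(1+x)$, and your explicit check that $q+2\alpha>0$ so the series composition is well-defined is a point the paper leaves implicit.
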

 \begin{proof}
 A $2$-stretch from $(x,y)$ suffices
 since it implements the point $(x',y')=(x^2,(x+y)/(1+x))$.
 If $x<-1$ and $q=(x-1)(y-1)$ is positive then $y<1$ so $x+y$ and $1+x$ are both negative.
 Since $y<1$ we conclude that $-y>-1$ so $-x-y>-1-x$ and $y'>1$.
 \end{proof}

We will use the following Lemma, which is \cite[Lemma 3.26]{planartutte}.
The lemma in~\cite{planartutte} was stated for $q>5$ (which was all that was needed in that paper) but the proof only uses $q>0$.
The statement in~\cite{planartutte} was in terms of the coordinates $q$ and $\gamma$ but
we have translated it to $(x,y)$ coordinates here, since that is how it will be used here.
Finally, the statement of the lemma in~\cite{planartutte} allowed the implementation to use two additional
points~$(x'_2,y'_2)$ and~$(x'_3,y'_3)$ (this was to make the statement of the lemma match other
lemmas in that paper). However, these additional points were not used in the proof, so we don't include them here.
 
 \begin{lemma}  (\cite[Lemma 3.26]{planartutte}) \label{lem:shiftbigqbigT}
Suppose that $(x_1,y_1)$ is a point with
$y_1\notin[-1,1]$ and that $q=(x_1-1)(y_1-1)>0$.
Suppose that $T^-$ and $T^+$ satisfy 
$1<T^-\leq T^+$.
Given a target edge-weight $T\in[T^-,T^+]$ and a
positive value~$\pi$ which is sufficiently small with respect to~$x_1$, $y_1$, $T^-$ and $T^+$,
a point $(x,y)$ with $T- \pi \leq y \leq T$ can be
implemented using the point
$(x_1,y_1)$. 
The size of the graph~$\Upsilon$ used to implement $(x,y)$  is at most a polynomial in $\log(\pi^{-1})$.
(This upper bound on the size of $\Upsilon$ does not depend on~$T$, though it
does depend on the fixed bounds~$T^-$ and~$T^+$.)
\end{lemma}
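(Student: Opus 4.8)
The plan is to construct $\Upsilon$ as an iterated series/parallel composition of copies of the single available point $(x_1,y_1)$, in the style of a binary search. It is convenient to work with the $x$-coordinate: on the relevant branch of the hyperbola $(x-1)(y-1)=q$ the map $y\mapsto q/(y-1)+1$ is a bijection, and since the target $T$ lies in the fixed compact interval $[T^-,T^+]\subset(1,\infty)$ the corresponding target $x_T=q/(T-1)+1$ lies in a fixed compact interval $[x^-,x^+]\subset(1,\infty)$, on which $x\mapsto q/(x-1)+1$ is Lipschitz with a constant depending only on $q,x^-,x^+$. Hence it suffices to implement a point whose $x$-coordinate is within $c_0\pi$ of $x_T$ for a suitable constant $c_0$ (and, to get the one-sided bound $T-\pi\le y\le T$, simply to aim slightly below $T$). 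As a first step I would reduce to the case $y_1>1$: since $y_1\notin[-1,1]$ we either already have $y_1>1$, or $y_1<-1$, in which case a $2$-thickening (a graph on two vertices with two parallel edges) replaces the starting point by one with $y$-coordinate $y_1^2>1$, leaving $q$ unchanged. I would then, if necessary, apply a bounded-size stretching to the base point, choosing the number of edges so that its new $y$-coordinate lands in a convenient sub-range such as $(1,2)$; this again does not change $q$, and its purpose is to make the overlap estimate below hold uniformly.

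For the main construction, let $m$ be an integer of size $O(\log(\pi^{-1}))$ to be chosen. For $a=1,\dots,m$ let $H_a$ be the $a$-thickening of the (preprocessed) base point: it is a graph of size $a$ whose $x$-coordinate is $\xi_a=q/(y_1^{a}-1)+1$, and $\xi_a-1\to0$ geometrically in $a$ (roughly $\xi_a-1\asymp q\,y_1^{-a}$). For a constant $b=O(\log x^+)$, also let the base point itself and its stretchings by $2,3,\dots,b$ edges serve as ``coarse'' gadgets, with $x$-coordinates $x_1,x_1^2,\dots,x_1^{b}$, a fixed geometric progression whose largest term exceeds $x^+$. The graph $\Upsilon$ will be the series composition of one coarse gadget with a subset $\{H_a:a\in S\}$ of the scale gadgets; its $x$-coordinate is then $x_1^{\,j}\prod_{a\in S}\xi_a$, of logarithm $j\ln x_1+\sum_{a\in S}\ln\xi_a$. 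Since the coarse part has size $O(1)$ and the scale part has size $\sum_{a\in S}a\le\sum_{a=1}^m a=O(m^2)$, the total size of $\Upsilon$ is $O(m^2)=O(\log^2(\pi^{-1}))$, a polynomial in $\log(\pi^{-1})$ independent of $T$.

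It remains to choose $j$ and $S$ so that $x_1^{\,j}\prod_{a\in S}\xi_a$ lies within $c_0\pi$ of $x_T$. Passing to logarithms, the reachable values form the set $\bigl\{\,j\ln x_1+\sum_{a\in S}\ln\xi_a:1\le j\le b,\ S\subseteq\{1,\dots,m\}\,\bigr\}$ (together with the values obtained using no coarse gadget). The coarse terms carry us above $\ln x^+$, while the finest scale gadget contributes an increment $\ln\xi_m=\Theta(y_1^{-m})$, which is at most $c_0\pi$ once $m\ge c\log(\pi^{-1})$ for a suitable constant; so the claim reduces to showing that this set is $c_0\pi$-dense in $[\ln x^-,\ln x^+]$. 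Greedily deciding, from the coarsest scale to the finest, whether to include each $\xi_a$ (and how many coarse copies to prepend), the residual error after the scales $\ge a$ have been fixed is controlled provided each increment $\ln\xi_a$ is no larger than the sum $\sum_{b>a}\ln\xi_b$ of all finer increments, down to the scale at which increments drop below $c_0\pi$. Verifying this overlap/no-gap condition for every relevant $a$ — so that the greedy process never overshoots into a gap — is where the preprocessing of the base point is used (forcing $y_1\le2$ makes $\xi_a-1$ dominated, up to a constant, by its tail $\sum_{b>a}(\xi_b-1)$, and that constant can be absorbed by interleaving the coarse progression with the scale progression where their ranges meet), and where the hypothesis ``$\pi$ sufficiently small relative to $x_1,y_1,T^\pm$'' enters (it guarantees $m$ is large enough, and that the interval $[T-\pi,T]$ sits strictly inside the range the construction covers). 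Given an $x$-coordinate within $c_0\pi$ of $x_T$, Lipschitzness yields a $y$-coordinate within $\pi$ of $T$ on the correct side, as required.

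\textbf{Main obstacle.} I expect this last density/no-gap analysis to be the only real content; the rest is bookkeeping about the sizes and edge-weights produced by series and parallel compositions, which behave as recalled in the preliminaries. An essentially equivalent way to organise the argument is to note that ``thicken by one edge'' and ``stretch by one edge'' act on the coordinate $s=1/(y-1)$ as two explicit M\"obius transformations $A$ and $B$, each increasing the graph size by $1$, and to prove directly that some word of length $O(\log(\pi^{-1}))$ in $A,B$ sends the base value to within $c_0\pi$ of $s_T=1/(T-1)$; the crux is then the same, namely a quantitative controllability statement for the semigroup generated by $A$ and $B$ on the compact interval $[1/(T^+-1),1/(T^--1)]$, with all constants depending only on $x_1,y_1,T^-,T^+$.
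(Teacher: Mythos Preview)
The paper does not prove this lemma; it is quoted as \cite[Lemma~3.26]{planartutte} and no argument is supplied here, so there is no proof in the present paper to compare your proposal against.

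Assessed on its own terms, your sketch has the right architecture and is essentially the standard way such ``dense implementation'' lemmas are proved: reduce to $y_1>1$ by a $2$-thickening, pass to the $x$-coordinate so that series composition becomes multiplication, manufacture a geometrically decaying sequence of increments $\ln\xi_a$ from $a$-thickenings of the base point, and select a subset greedily to hit the target. You also correctly isolate the only non-routine step, the no-gap inequality $\ln\xi_a\le\sum_{b>a}\ln\xi_b$, and see that preprocessing the base point to force $y_1<2$ is exactly what drives the asymptotic ratio $\ln\xi_a/\ln\xi_{a+1}\to y_1$ below~$2$. What is still only asserted rather than verified is that inequality for \emph{all} relevant~$a$ (not just asymptotically), the claim that the fine increments $\sum_a\ln\xi_a$ are large enough to bridge the gaps $\ln x_1$ between consecutive coarse stretchings, and the one-sided control $y\le T$ (equivalently, that the greedy always lands on the correct side of~$x_T$). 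None of these looks like a genuine obstacle, but a complete proof would have to carry them out.
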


By duality of~$x$ and~$y$, we have the following corollary.
 \begin{corollary}  \label{cor:shiftbigqbigT}
Suppose that $(x_1,y_1)$ is a point with
$x_1\notin[-1,1]$ and that $q=(x_1-1)(y_1-1)>0$.
Suppose that $T^-$ and $T^+$ satisfy 
$1<T^-\leq T^+$.
Given a target edge-weight $T\in[T^-,T^+]$ and a
positive value~$\pi$ which is sufficiently small with respect to~$x'_1$, $y'_1$, $T^-$ and $T^+$,
a point $(x,y)$ with $T- \pi \leq x \leq T$ can be
implemented using the point
$(x_1,y_1)$. 
The size of the graph~$\Upsilon$  used to implement $(x,y)$  is at most a polynomial in $\log(\pi^{-1})$.
(This upper bound on the size of $\Upsilon$ does not depend on~$T$, though it
does depend on the fixed bounds~$T^-$ and~$T^+$.)
\end{corollary}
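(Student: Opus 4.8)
The plan is to derive Corollary~\ref{cor:shiftbigqbigT} from Lemma~\ref{lem:shiftbigqbigT} by exploiting the symmetry of the Tutte plane under the coordinate swap $\sigma\colon(x,y)\mapsto(y,x)$. First I would record the two facts that make this swap a genuine duality at the level of implementations. Since $q=(x-1)(y-1)$ is symmetric in $x$ and $y$, the map $\sigma$ preserves the value of~$q$; in $(q,\gamma)$-coordinates it is simply the involution $\gamma\mapsto q/\gamma$. Second, inspecting the explicit formulas for series and parallel composition quoted in Section~\ref{sec:shiftdef}, one sees that $\sigma$ interchanges series composition with parallel composition: the parallel composition of $(x_1,y_1)$ and $(x_2,y_2)$ implements $\bigl(x_1x_2,\,q/(x_1x_2-1)+1\bigr)$, while the series composition of $(y_1,x_1)$ and $(y_2,x_2)$ implements $\bigl(q/(x_1x_2-1)+1,\,x_1x_2\bigr)$, which is exactly the $\sigma$-image of the former. (Equivalently, this is planar duality of a series--parallel gadget together with the swap of $s$--$t$ connectivity, under which $Z_{st}$ and $Z_{s|t}$ trade places.) Consequently, any implementation of a target point~$P$ built by series and parallel compositions of copies of a point~$P_1$ transforms, by interchanging the two composition operations and replacing~$P_1$ by~$\sigma(P_1)$, into an implementation of $\sigma(P)$ from copies of $\sigma(P_1)$, using a gadget with the same number of edges.

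Granting this, the corollary follows formally. I would let $(x_1,y_1)$ be as in the statement, so $x_1\notin[-1,1]$ and $q=(x_1-1)(y_1-1)>0$, and set $(\widetilde x_1,\widetilde y_1)=\sigma(x_1,y_1)=(y_1,x_1)$. Then $\widetilde y_1=x_1\notin[-1,1]$ and $(\widetilde x_1-1)(\widetilde y_1-1)=q>0$, so $(\widetilde x_1,\widetilde y_1)$ satisfies the hypotheses of Lemma~\ref{lem:shiftbigqbigT}. Applying that lemma with the given $T\in[T^-,T^+]$ and with~$\pi$ sufficiently small with respect to $\widetilde x_1,\widetilde y_1,T^-,T^+$ (equivalently, with respect to $x_1,y_1,T^-,T^+$) yields a series--parallel gadget $\widetilde\Upsilon$ implementing a point $(\widetilde x,\widetilde y)$ with $T-\pi\le\widetilde y\le T$ from copies of $(\widetilde x_1,\widetilde y_1)$, where the size of $\widetilde\Upsilon$ is polynomial in $\log(\pi^{-1})$. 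Dualizing $\widetilde\Upsilon$ as above produces a gadget~$\Upsilon$ of the same size that implements $\sigma(\widetilde x,\widetilde y)=(\widetilde y,\widetilde x)$ from copies of $\sigma(\widetilde x_1,\widetilde y_1)=(x_1,y_1)$; writing $x=\widetilde y$, we have $T-\pi\le x\le T$, as required, and the size bound on~$\Upsilon$ is inherited unchanged.

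The only point needing any care is that the construction underlying Lemma~\ref{lem:shiftbigqbigT} is built purely from series and parallel compositions (thickenings and stretches), so that the interchange described above applies to it verbatim and in particular produces a legitimate (planar, series--parallel) gadget; this holds because the lemma is taken from~\cite{planartutte}, where all gadgets are of this form. With that noted, none of the quantitative claims --- the meaning of ``$\pi$ sufficiently small'', the sandwich $T-\pi\le x\le T$, and the $\log(\pi^{-1})$-polynomial size bound --- is affected by dualization. I do not expect a real obstacle here: the substance is entirely contained in Lemma~\ref{lem:shiftbigqbigT}, and the corollary is a purely formal consequence of the $x\leftrightarrow y$ symmetry.
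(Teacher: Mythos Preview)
Your proposal is correct and matches the paper's approach: the paper derives the corollary in one line, ``By duality of $x$ and $y$, we have the following corollary,'' and your argument is simply a careful unpacking of that duality (the swap $(x,y)\mapsto(y,x)$ interchanging series and parallel compositions) applied to the series--parallel gadget underlying Lemma~\ref{lem:shiftbigqbigT}.
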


We will also use the following related lemma, which is \cite[Lemma 3.27]{planartutte}.
Once again, we translated to $(x,y)$ coordinates
and eliminated unused points.

 \begin{lemma}  (\cite [Lemma 3.27]{planartutte}) \label{lem:shiftsmallqbigT}
Suppose that $(x_1,y_1)$ is a point with $y_1 \notin[-1,1]$
and $(x_2,y_2)$ is a point with $y_2 \in (-1,1)$.
Suppose that $q=(x_1-1)(y_1-1)=(x_2-1)(y_2-1)<0$.
Suppose that $T^-$ and $T^+$ satisfy 
$1<T^-\leq T^+$.
Given a target edge-weight $T\in[T^-,T^+]$ and a
positive value~$\pi$ which is sufficiently small with respect to~$x_1$, $y_1$, $x_2$, $y_2$, 
$T^-$ and $T^+$,
a point $(x,y)$ with $T-\pi \leq y \leq T$ can be implemented using the points
$(x_1,y_1)$ and $(x_2,y_2)$.
The size of the graph~$\Upsilon$ used to implement $(x,y)$  is at most a polynomial in $\log(\pi^{-1})$.
(This upper bound on the size of $\Upsilon$ does not depend on~$T$, though it
does depend on the fixed bounds~$T^-$ and~$T^+$.)
\end{lemma}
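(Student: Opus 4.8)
The plan is to mimic the proof of Lemma~\ref{lem:shiftbigqbigT}: use $(x_1,y_1)$ to take ``large'' multiplicative steps in the $y$-coordinate (via series composition, which multiplies $y$-coordinates) and use $(x_2,y_2)$ to supply arbitrarily fine multiplicative ``knobs'' whose $y$-coordinate is close to~$1$, then series-compose a suitable collection of these to land in $[T-\pi,T]$. First I would normalise: since $y_1\notin[-1,1]$, if $y_1<-1$ a $2$-stretch replaces $(x_1,y_1)$ by a valid point with $y$-coordinate $y_1^2>1$ (and the same~$q$), so we may assume $y_1>1$. Note that $q<0$ together with $y_1>1$ forces $x_1<1$, so parallel compositions of $(x_1,y_1)$ need not approach~$1$ in their $y$-coordinate — this is exactly why the second point is required. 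By contrast, $y_2\in(-1,1)$ and $q<0$ force $x_2>1$; hence the $k$-fold parallel composition of $(x_2,y_2)$ is the point $(x_2^k,\,q/(x_2^k-1)+1)$, whose $y$-coordinate is $1-|q|/(x_2^k-1)\in(0,1)$ and tends to~$1$ from below as $k\to\infty$, with the gap $\delta_k:=|q|/(x_2^k-1)$ shrinking by roughly the fixed factor~$x_2$ at each step. These are the knobs.

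To hit the target, pick the least integer $a$ with $y_1^a\ge T$; since $1<T^-\le T\le T^+$, such an $a$ is bounded by a constant depending only on $y_1,T^-,T^+$, and $y_1^a\le y_1T^+$, so the residual factor $r:=T/y_1^a$ lies in $(1/y_1,1]$. I would then approximate $r$ from below by a product $\prod_i(1-\delta_{k_i})$ of knob $y$-coordinates using a greedy ``digit'' argument on $t:=-\log r\in[0,\log y_1)$: repeatedly subtract from the running residual the largest available $\delta_k$ not exceeding it (reusing small~$k$ if necessary). Because consecutive $\delta_k$ differ by the bounded factor $x_2$, each step multiplies the residual by a factor bounded away from~$1$, so after $O(\log(\pi^{-1}))$ steps it is below a prescribed $\pi'=\Theta(\pi)$; one further knob of size $\le\pi'$ then pushes the accumulated sum just past~$t$, and since $-\log(1-\delta)>\delta$ this guarantees the composed $y$-coordinate lies in $[T-\pi,T]$. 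The index used at the $j$-th step is $O(j)$ (it is $O(\log\delta_{k_j}^{-1})$ and the residual decays geometrically), so the parallel blocks contribute $\sum_j O(j)=O(\log^2(\pi^{-1}))$ edges; together with the $a=O(1)$ copies of $(x_1,y_1)$ this gives $|\Upsilon|=\mathrm{poly}(\log(\pi^{-1}))$, with the dependence on $T$ only through $T^-,T^+$. Finally, during the knob phase the running $y$-product decreases monotonically from $y_1^a\ge T$ to its final value in $[T-\pi,T]$, so it stays above~$1$; hence every intermediate series composition divides by a nonzero quantity and is well defined.

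The genuinely delicate points are the sign bookkeeping in the normalisation (making sure the knobs really do converge to~$1$, which rests on $q<0$ forcing $x_2>1$ from $y_2\in(-1,1)$), and showing that the greedy approximation converges geometrically so that the size bound is $\mathrm{poly}(\log(\pi^{-1}))$ \emph{uniformly in}~$T$, while simultaneously landing on the correct side of~$T$ (in $[T-\pi,T]$, not merely within~$\pi$). Everything else is the routine bookkeeping of the constants hidden behind ``$\pi$ sufficiently small''.
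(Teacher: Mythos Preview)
The present paper does not prove this lemma; it merely quotes \cite[Lemma~3.27]{planartutte}. Your strategy---coarse multiplicative steps from $(x_1,y_1)$ combined with fine ``knobs'' built from $(x_2,y_2)$, assembled greedily---is sound and is essentially how the cited result is established. There is, however, a systematic terminology error running through your write-up: you have interchanged series and parallel composition everywhere. In the conventions of Section~\ref{sec:shiftdef}, \emph{parallel} composition (thickening) multiplies $y$-coordinates while \emph{series} composition (stretching) multiplies $x$-coordinates. Hence the normalisation when $y_1<-1$ should be a $2$-\emph{thickening} (giving $y$-coordinate $y_1^2$), not a $2$-stretch; the knob $(x_2^k,\,q/(x_2^k-1)+1)$ is a $k$-\emph{stretch} of $(x_2,y_2)$, not a parallel composition (your formula is correct, but that is exactly what a stretch produces); and the final assembly that multiplies $y_1^a$ by the knob $y$-coordinates is a \emph{parallel} composition, not series. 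With the labels swapped back, the argument goes through. Your worry about intermediate divisions in the assembly then disappears, since parallel composition of edges between $s$ and $t$ has $Z_{s|t}=q^2\neq0$; the only genuine divisions occur inside the $k$-stretches forming the knobs, and there $q+w+\gamma_2<0$ at every stage because $q<0$, $\gamma_2=y_2-1<0$, and each intermediate effective weight $q/(x_2^j-1)$ is negative.

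One smaller point: your greedy bookkeeping conflates $\delta_k$ with $\alpha_k:=-\log(1-\delta_k)$. It is $\sum_i\alpha_{k_i}$ that must land just above $t=-\log r$, not $\sum_i\delta_{k_i}$. This is harmless for the conclusion---the ratios $\alpha_{k+1}/\alpha_k$ are bounded and tend to $1/x_2$, so the geometric decay of the residual and the $O(\log^2(\pi^{-1}))$ size bound survive---but the write-up should track the $\alpha_k$ consistently rather than switching between the two quantities mid-argument.
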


The reader may find it useful to
consult Figure~\ref{fig:one}, 
and the formal definitions listed early in Section~\ref{sec:red},
to see the relevant regions of the~$(x,y)$ plane that we consider.
 
\subsection{Region B}

 The following four lemmas prepare the conditions for applying Lemma~1 to points in Region~B.
 Note that  the value $q=(x-1)(y-1)$ exceeds~$1$ in this region.
 
\begin{lemma} 
\label{lem:shiftB1}
Suppose $(x,y)$ is a point with $x<-1$ and $y<-1$.
Then we can use $(x,y)$ to implement a point 
$(x_1,y_1)$ with $y_1 \in (-1,0)$
and a point $(x_2,y_2)$ with $y_2 \notin [-1,1]$.
\end{lemma}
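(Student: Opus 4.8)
The plan is to handle the two required points separately. Write $q=(x-1)(y-1)$. Since $x<-1$ and $y<-1$ we have $x-1<-2$ and $y-1<-2$, so $q=(x-1)(y-1)=(|x|+1)(|y|+1)>4$; in particular $q>0$. The second required point costs nothing: take $(x_2,y_2)=(x,y)$, so that $y_2=y<-1\notin[-1,1]$. Thus the whole task reduces to implementing, from $(x,y)$ alone, some point $(x_1,y_1)$ with $y_1\in(-1,0)$ --- equivalently, with edge-weight $\gamma_1=y_1-1\in(-2,-1)$, which is what Lemma~\ref{lem:signhardqbig} will want.

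First I would use Corollary~\ref{cor:shiftbigqbigT} --- applicable because $x\notin[-1,1]$ and $q=(x-1)(y-1)>0$ --- to implement from $(x,y)$ a point $A$ whose $x$-coordinate lies in a narrow window $[T-\pi,T]$ around any chosen target $T>1$. Then I would series-compose $A$ with the original point $(x,y)$. Since series composition of points with a common value of $q$ sends their $x$-coordinates to the product (this is exactly why $k$-stretching takes $x$ to $x^k$), the resulting point $(x_1,y_1)$ has $x_1=x_Ax<-1$, where $x_A\in[T-\pi,T]$ is the actual $x$-coordinate of $A$; consequently $y_1-1=q/(x_1-1)<0$. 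A short sign computation shows $y_1\in(-1,0)$ precisely when $x_A$ lies in the open interval $I$ of values $T>1$ for which $q/(Tx-1)\in(-2,-1)$ --- in $|x|$-coordinates, $I=\bigl(\tfrac{q/2-1}{|x|},\,\tfrac{q-1}{|x|}\bigr)$. So I would choose $T$ (hence also bounds $T^-\le T^+$ and a tolerance $\pi$) entirely inside $I$, with $\pi$ small enough that $[T-\pi,T]\subseteq I$; then Corollary~\ref{cor:shiftbigqbigT} delivers $A$ with $x_A\in I$, and the series composition yields the desired $(x_1,y_1)$. Note that the whole gadget uses only the weight of $(x,y)$, so this is genuinely an implementation from $(x,y)$.

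The one thing that needs real checking --- and the reason the lemma is not immediate --- is that the target interval $I$ is non-empty and lies inside $(1,\infty)$, so that Corollary~\ref{cor:shiftbigqbigT} can actually reach it. This reduces to the left endpoint $\tfrac{q/2-1}{|x|}$ exceeding $1$, i.e.\ to $q>2(|x|+1)$; since $q=(|x|+1)(|y|+1)$ this is just $|y|+1>2$, which holds because $y<-1$. I expect this to be the main (if modest) obstacle: a cruder approach that only thickens and stretches $(x,y)$ produces $x$-coordinates of the form $\pm|x|^k$, whose consecutive values differ by the factor $|x|^2$ and can therefore step right over $I$; it is the fine approximation in Corollary~\ref{cor:shiftbigqbigT} that rescues the argument. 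The rest is routine sign bookkeeping: that $q>0$, that $x_1<-1$, that the inequalities defining $I$ invert correctly on dividing by $x<0$, and that the composed gadget has polynomial size (it does, since the graph built by Corollary~\ref{cor:shiftbigqbigT} does).
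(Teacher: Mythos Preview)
Your argument is correct: with $q=(|x|+1)(|y|+1)>4$, the interval $I=\bigl((q/2-1)/|x|,\,(q-1)/|x|\bigr)$ indeed lies in $(1,\infty)$, Corollary~\ref{cor:shiftbigqbigT} produces a point $A$ with $x_A\in I$, and the series composition with $(x,y)$ lands $y_1$ in $(-1,0)$ as you compute.

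However, your route is heavier than the paper's, and your remark that ``a cruder approach that only thickens and stretches\ldots\ can therefore step right over~$I$'' turns out to be pessimistic. The paper avoids the fine-approximation machinery entirely. Instead of controlling the $x$-coordinate, it controls the $y$-coordinate directly: take an odd $j$-stretch with $j$ large enough that $|x|^j+1>q$, so that the resulting point $(x',y')=(x^j,\,q/(x^j-1)+1)$ has $y'\in(0,1)$; then take the parallel composition of $(x,y)$ with $k$ copies of $(x',y')$, giving $y_1=y\,(y')^k$. Since $|y|>1$ and $0<y'<1$, the sequence $|y|(y')^k$ decreases to $0$, and the first $k$ for which it drops below~$1$ gives $y_1\in(-1,0)$ (consecutive terms differ by the factor $y'\in(0,1)$, so the sequence cannot skip over $(0,1)$). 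This is entirely elementary series/parallel bookkeeping. What your approach buys is uniformity --- the same appeal to Corollary~\ref{cor:shiftbigqbigT} would handle several of the neighbouring cases at once --- at the cost of invoking a nontrivial black box; the paper trades that for a short ad~hoc construction per region.
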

\begin{proof}
Let $q=(x-1)(y-1)$.
Let $j$ be an odd positive integer which is sufficiently large that $|x|^j+1>q$.
Implement $(x',y') = (x^j, q/(x^j-1)+1)$ from $(x,y)$ with a $j$-stretch.
Note that $y'\in(0,1)$.
Now, for a sufficiently large positive integer~$k$, 
implement $(x_1,y_1)$ using the parallel composition of $(x,y)$ with $k$ copies of $(x',y')$
so $y_1 = {y'}^k y \in (-1,0)$.
Finally, let $(x_2,y_2) = (x,y)$.
\end{proof}

\begin{lemma} 
\label{lem:shiftB2}
Suppose $(x,y)$ is a point with $x<-1$ and $y=-1$.
Then we can use $(x,y)$ to implement a point 
$(x_1,y_1)$ with $y_1 \in (-1,0)$
and a point $(x_2,y_2)$ with $y_2 \notin [-1,1]$.
\end{lemma}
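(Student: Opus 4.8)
The plan is to reuse the strategy from the proof of Lemma~\ref{lem:shiftB1}, adapting it to the one new feature of this case: here the $y$-coordinate equals $-1$, which lies in $[-1,1]$, so the point $(x,y)$ itself can no longer double as the required point $(x_2,y_2)$.

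First I would record that $q=(x-1)(y-1)=-2(x-1)=2(1-x)$, so $x<-1$ forces $q>4>0$. This immediately yields the point $(x_2,y_2)$: since $x<-1$ and $q>0$, Lemma~\ref{lem:xlefttoyup} lets us use $(x,y)$ to implement a point $(x_2,y_2)$ with $y_2>1$, and in particular $y_2\notin[-1,1]$.

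For $(x_1,y_1)$ I would first build an auxiliary point with $y$-coordinate in $(0,1)$ and then flip its sign by a parallel composition, exactly as in Lemma~\ref{lem:shiftB1}. Concretely, pick an odd integer $j$ large enough that $|x|^j+1>q$ and apply a $j$-stretch to $(x,y)$, obtaining $(x',y')=(x^j,\,q/(x^j-1)+1)$. Since $j$ is odd and $|x|>1$ we have $x^j=-|x|^j$, hence $x^j-1=-(|x|^j+1)<0$; this makes $q/(x^j-1)<0$ so $y'<1$, while $|x|^j+1>q$ makes $q/(x^j-1)=-q/(|x|^j+1)>-1$ so $y'>0$. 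Thus $y'\in(0,1)$. Now implement $(x_1,y_1)$ as the parallel composition of $(x,y)$ with $(x',y')$ — legitimate because both points carry the same parameter~$q$ (a stretch is a shift within fixed~$q$, and shifts compose) — so that $y_1=y\cdot y'=-y'\in(-1,0)$, as required.

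I expect no genuine obstacle here; the only thing needing care is the sign-and-magnitude bookkeeping for the $j$-stretch (choosing $j$ odd so that $x^j<-1$, and large enough that $q/(x^j-1)$ does not drop below $-1$). The one substantive difference from Lemma~\ref{lem:shiftB1} is simply that $y=-1$ is a boundary value, which is why the second point must be routed through Lemma~\ref{lem:xlefttoyup} instead of being taken to be $(x,y)$ outright.
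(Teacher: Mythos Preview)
Your proof is correct and essentially identical to the paper's: both obtain $(x_2,y_2)$ via Lemma~\ref{lem:xlefttoyup}, and both build $(x_1,y_1)$ by first taking an odd $j$-stretch to land at some $y'\in(0,1)$ and then doing a single parallel composition with $(x,y)$ to flip the sign (one copy suffices here because $y=-1$, unlike in Lemma~\ref{lem:shiftB1}). Your condition $|x|^j+1>q$ is exactly the paper's $q/(|x|^j+1)<1$.
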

\begin{proof}  
Let $j$ be a sufficiently large odd integer such that
$q/(|x|^j+1)<1$.
Implement   $(x',y')$ using a $j$-stretch from $(x,y)$
so that $y' = q/(x^j-1)+1 \in (0,1)$. Implement $(x_1,y_1)$ by taking a parallel composition of $(x',y')$ and $(x,y)$
so $y_1 = -y'$. Finally, implement $(x_2,y_2)$ from $(x,y)$ using Lemma~\ref{lem:xlefttoyup}.
\end{proof}

\begin{lemma} 
\label{lem:shiftB3}
Suppose $(x,y)$ is a point with $x<-1$ and $-1<y<0$.
Then we can use $(x,y)$ to implement a point 
$(x_1,y_1)$ with $y_1 \in (-1,0)$
and a point $(x_2,y_2)$ with $y_2 \notin [-1,1]$.
\end{lemma}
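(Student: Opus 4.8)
The plan is to observe that in Region~B with $-1<y<0$ both required points are essentially free, because the given point already has the coordinate property needed for $(x_1,y_1)$ and because $q>0$ here, so the earlier Lemma~\ref{lem:xlefttoyup} applies unchanged.

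Concretely, first I would record that $q=(x-1)(y-1)>0$: since $x<-1$ we have $x-1<0$, and since $-1<y<0$ we have $y-1\in(-2,-1)$, which is also negative, so the product is positive (in fact $q>2$). Next, for the first point I would simply take $(x_1,y_1)=(x,y)$; this is a legitimate ``implementation'' since a single edge of weight $y-1$ implements its own weight, and by hypothesis $y_1=y\in(-1,0)$ as required. For the second point I would apply Lemma~\ref{lem:xlefttoyup} directly to $(x,y)$: its hypotheses are exactly $q>0$ and $x<-1$, both of which hold, so it produces an implemented point $(x_2,y_2)$ with $y_2>1$, and in particular $y_2\notin[-1,1]$.

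I do not expect any genuine obstacle: the only point needing care is checking the sign of $q$ so that Lemma~\ref{lem:xlefttoyup} is applicable, and that is a one-line sign computation. This contrasts with the companion Lemmas~\ref{lem:shiftB1} and~\ref{lem:shiftB2}, where $y\le-1$ forces one to manufacture a $y_1\in(-1,0)$ by a $j$-stretch followed by a parallel composition; in the present case that step is unnecessary because the input point already lies in the desired strip.
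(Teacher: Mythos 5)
Your proof is correct and matches the paper's proof exactly: take $(x_1,y_1)=(x,y)$ and obtain $(x_2,y_2)$ via Lemma~\ref{lem:xlefttoyup}. The sign check on $q$ is a useful piece of bookkeeping that the paper leaves implicit.
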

\begin{proof}
We let $(x_1,y_1)=(x,y)$.
We implement $(x_2,y_2)$ from $(x,y)$ using Lemma~\ref{lem:xlefttoyup}.
\end{proof}

\begin{lemma} 
\label{lem:shiftB4}
Suppose $(x,y)$ is a point with $-1\leq x <0$ and $y<-1$.
Then we can use $(x,y)$ to implement a point 
$(x_1,y_1)$ with $y_1 \in (-1,0)$
and a point $(x_2,y_2)$ with $y_2 \notin [-1,1]$.
\end{lemma}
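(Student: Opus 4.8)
The plan is to follow the template of the proof of Lemma~\ref{lem:shiftB1}. Writing $q=(x-1)(y-1)$, note that $x-1\in[-2,-1)$ and $y-1<-2$, so $q>2$; in particular $q>1$, as is needed for the eventual application of Lemma~\ref{lem:signhardqbig}. For the second point I will take $(x_2,y_2)=(x,y)$, which is admissible because $y<-1$, so $y_2\notin[-1,1]$. For the first point I aim to produce, via the usual parallel-composition trick, a point $(x_1,y_1)$ with $y_1\in(-1,0)$: if I can first implement some auxiliary point $(x^*,y^*)$ with $y^*\in(0,1)$, then the parallel composition of $(x,y)$ with $k$ copies of $(x^*,y^*)$ has $y$-coordinate $(y^*)^k y$, which is negative and has absolute value $(y^*)^k|y|\to0$, and so lies in $(-1,0)$ once $k$ is large enough.

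The main obstacle, and the only genuine difference from Lemmas~\ref{lem:shiftB1}--\ref{lem:shiftB3}, is fabricating that auxiliary point: every point with $y$-coordinate in $(0,1)$ has first coordinate less than $1-q<-1$, and in region~B4 neither stretching $(x,y)$ nor thickening it reaches the half-plane $x<-1$. The fix is a three-step detour through $x>1$. First, a $2$-thickening of $(x,y)$ gives the point $(1+q/(y^2-1),\,y^2)$, whose first coordinate exceeds $1$ since $y^2>1$ and $q>0$ (this remains true at $x=-1$, where it equals $1-2/(y+1)>1$, so that boundary case is not special). Second, a $j$-stretch of that point gives a point with first coordinate $(1+q/(y^2-1))^j$, which grows without bound as $j$ increases. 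Third, the series composition of that point with the edge $(x,y)$ gives a point $(x^*,y^*)$ with $x^*=x\,(1+q/(y^2-1))^j$; since $x<0$ this product is negative (so the series composition is well defined), and for $j$ large we get $x^*<1-q$. Then $x^*-1<-q<0$, so $y^*-1=q/(x^*-1)\in(-1,0)$, that is $y^*\in(0,1)$, exactly as wanted.

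With $(x^*,y^*)$ available the argument closes as described: implement $(x_1,y_1)$ as the parallel composition of $(x,y)$ with $k$ copies of $(x^*,y^*)$, taking $k$ large enough that $(y^*)^k|y|<1$, so that $y_1=(y^*)^k y\in(-1,0)$; and take $(x_2,y_2)=(x,y)$. Since the point $(x,y)$ is fixed, every gadget used --- a $2$-thickening, a $j$-stretch, one series composition, and one parallel composition of $k$ copies --- has constant size, so nothing further needs checking. The one arithmetic fact to keep in view is that $q>2$ throughout region~B4, which is precisely what makes ``$x^*<1-q$'' force ``$y^*\in(0,1)$''.
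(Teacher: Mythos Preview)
Your argument is correct and is essentially the paper's own proof. The paper also takes $(x_2,y_2)=(x,y)$, performs a $2$-thickening to reach a point $(x_a,y_a)$ with $x_a>1$, then takes the series composition of $(x,y)$ with $j$ copies of $(x_a,y_a)$ to obtain $(x_b,y_b)$ with $x_b=x\,x_a^{\,j}$ and $y_b\in(0,1)$, and finishes with the parallel composition of $(x,y)$ with $k$ copies of $(x_b,y_b)$; your ``$j$-stretch then one more series composition with $(x,y)$'' is exactly the same construction described in two steps. One small remark: your closing sentence slightly overstates the role of $q>2$ --- the implication ``$x^*<1-q\Rightarrow y^*\in(0,1)$'' only needs $q>0$, which already holds here.
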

\begin{proof}
Implement
$(x_a,y_a)$ by a $2$-thickening of $(x,y)$.
Note that $y_a=y^2>1$, and therefore, since $q>0$, $x_a>1$ as well.
Let $j$ be an integer that is sufficiently large that $|x| \cdot x_a^j+1 > q$.
Implement $(x_b,y_b)$ by a series composition of $(x,y)$ with $j$
copies of $(x_a,y_a)$ so that
$$y_b = q/(x x_a^j-1)+1 \in (0,1).$$
Let $k$ be a sufficiently large integer that $|y| y_b^k \in (0,1)$.
Implement $(x_1,y_1)$ by a parallel composition of $(x,y)$ and
$k$ copies of $(x_b,y_b)$ 
so $y_1 = y y_b^k$.
Finally, let $(x_2,y_2)=(x,y)$.
\end{proof}  
   
\subsection{Regions G, H and I}

We next consider the problem of implementing edge weights
starting from a point in the ``vicinity of the origin'',
which corresponds to points with $|x|<1$ and $|y|<1$.
In the vicinity of the origin, we have $0<q<4$.
As noted in the introduction, there is a ``phase transition'' at $q=32/27$, so
we start by considering $q>32/27$.

\begin{lemma}
\label{lem:3227g2} 
Suppose 
$(x,y)$ is a point with 
$|x| < 1$ and $|y| < 1$ and
$q=(x-1)(y-1)>32/27$.
Then $(x,y)$ can be used to implement 
a point $(x',y')$ with $y'>1$.
\end{lemma}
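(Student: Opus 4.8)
The plan is to reduce the problem to implementing a point with first coordinate less than $-1$, which is then handled by Lemma~\ref{lem:xlefttoyup}, and to reach such a point by iterating a small gadget, with the value $32/27$ entering through the fixed-point analysis of the resulting one-dimensional iteration.

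First I would dispose of the easy part. A $2$-stretch of $(x,y)$ implements the point $(x^2,\,q/(x^2-1)+1)$; since $|x|<1$ and $q>32/27>1$, the new second coordinate is at most $1-q<0$, and it is strictly below $-1$ exactly when $x^2>1-q/2$. In particular, if $q\ge 2$ (so that $1-q/2\le 0<x^2$), or more generally whenever $x^2>1-q/2$, one $2$-stretch followed by a $2$-thickening already produces a point whose second coordinate exceeds $1$. So it remains to treat a point $(x_0,y_0)$ with $q\in(32/27,2)$, $y_0\in(-1,0)$ and $x_0\in[0,1-q/2)$ (the case $y_0=-1$ or a coordinate equal to $\pm 1$ being a boundary situation to be perturbed away), and for such a point it suffices to implement some point with first coordinate $<-1$ and then invoke Lemma~\ref{lem:xlefttoyup} (note $q>0$ throughout).

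Next I would use a triangle-type gadget (and, where necessary, variants built with $2$-stretches). Writing out the series/parallel identities, the triangle implements, from a point with first coordinate $x$, the point with first coordinate $\psi(x):=(x^3+q-1)/(x^2+x+q-2)$, and one checks that on the range of interest $\psi(x_0)<0$, with $\psi(x_0)<-1$ once $q$ is bounded away from $32/27$ from above; in the remaining regime one iterates, $2$-stretching between successive triangles to restore a negative second coordinate, and studies the induced real iteration. The two fixed points of $\psi$ are $x=1$ (corresponding to $y=\infty$) and $x=1-q$ (the point $(1-q,0)$), and the crux is that for a suitably chosen iterated gadget the fixed point of the iteration that governs the behaviour near the origin lies on a branch parametrised by an equation of the shape $q=8u^{2}(1-u)$, whose maximum over the relevant interval is exactly $32/27$, attained at $u=2/3$. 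Thus for $q\le 32/27$ the iteration is trapped near the origin, whereas for $q>32/27$ no such fixed point (nor attracting cycle) survives, so the orbit must leave the vicinity of the origin, producing either a point with first coordinate $<-1$ or one with second coordinate $>1$.

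The main obstacle is exactly this last step: choosing the gadget so that the critical value of $q$ for its iteration is $32/27$ rather than a larger number (the bare triangle, for instance, has the relevant fixed point $x=1-q$ with $\psi'(1-q)=(1-q)/(2-q)$, giving only the threshold $3/2$), and then rigorously verifying that for every admissible starting point the orbit escapes when $q>32/27$ — that is, ruling out trapping fixed points and periodic cycles inside the near-origin interval, controlling convergence rates, and handling the finitely many degenerate configurations. This is where the delicate arithmetic behind the number $32/27$, in the spirit of Jackson's construction witnessing its tightness for chromatic polynomials, has to be pinned down.
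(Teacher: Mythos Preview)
Your overall strategy --- iterate a fixed gadget, analyse the resulting one-dimensional dynamical system, and show that for $q>32/27$ the orbit must escape the vicinity of the origin --- is exactly the right one, and you have even recovered the correct fixed-point parametrisation $q=8u^{2}(1-u)$ whose maximum is $32/27$. But the proposal is not a proof, and you yourself flag the gap: you have not identified a gadget whose iteration has $32/27$ as its threshold. Your triangle gives $3/2$, and ``a suitably chosen iterated gadget'' is a promissory note, not a construction.

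The missing gadget is the \emph{diamond}: the parallel composition of two $2$-stretches (equivalently, the graph on $\{s,t,u,v\}$ with edges $su,ut,sv,vt$). In $(x,y)$ coordinates it sends $(x,y)$ to a point with $y$-coordinate $\big((x+y)/(1+x)\big)^{2}$; in $\gamma$ coordinates it sends $\gamma$ to $\gamma^{2}(\gamma^{2}+4\gamma+2q)/(q+2\gamma)^{2}$. The nontrivial fixed points satisfy $q^{2}+2q\gamma-\gamma^{3}=0$, which under $w=1+\sqrt{1+\gamma}$ becomes $q=w^{2}(2-w)$; writing $w=2u$ this is precisely your $q=8u^{2}(1-u)$. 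So the diamond is the gadget you were looking for. The paper takes exactly this route and, crucially, does not redo the dynamical analysis: it quotes Jackson--Sokal \cite[Lemma~8.5(c)]{JacksonSokal}, which proves that for $q>32/27$ the diamond iteration from any $y_1<1$ produces a strictly increasing sequence $y_1<y_2<\cdots$ that eventually exceeds~$1$. That lemma is where the ``ruling out trapping fixed points and periodic cycles'' work actually lives; it is not reproved here.

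Two further points. First, the diamond iteration is undefined or degenerate at $x_j=-1$ (would give $y_{j+1}=\infty$) and at $y_j=-1-2x_j$ (would give $y_{j+1}=1$); the paper handles both by ad hoc series/parallel patches at those steps, and you would need to do the same. Second, your plan to work in the $x$-coordinate and aim for $x<-1$ (then apply Lemma~\ref{lem:xlefttoyup}) is dual to the paper's approach of working in the $y$-coordinate directly; either viewpoint is fine, but the diamond is naturally analysed on the $y$-side, since each diamond step squares the output of a $2$-stretch and hence keeps $y$ non-negative and increasing.
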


\begin{proof}
 
We will use the 
``diamond operation'' of Jackson and Sokal~\cite[Section 8]{JacksonSokal}.
This   corresponds
to choosing the graph 
$\Upsilon$ with vertex set $\{s,t,u,v\}$
and edge set $\{(s,u),(u,t),(s,v),(v,t)\}$.
($\Upsilon$ is a 
parallel composition of two paths from~$s$ to~$t$, 
each of which is formed from the series composition of two edges.
If we start with the weight function $\hatbgamma$
that assigns weight~$\gamma$ to every edge of~$\Upsilon$, then
it is easy to check (see \cite[(8.1)]{JacksonSokal}) that the
implemented weight~$w^*$ from Equation~(\ref{eq:implement}) is
$\frac{\gamma^2(\gamma^2 + 4 \gamma + 2q)}{{(q+2\gamma)}^2}$. 
Equivalently, 
the point 
implemented from $(x,y)$ 
(which we denote as $(\diamond_{q,1}(x,y),\diamond_{q,2}(x,y))$) 
is
given by
$$\left(\diamond_{q,1}(x,y),\diamond_{q,2}(x,y)\right) =
\left(\frac{x+x^2+x^3+y}{1+2x+y},\frac{(x+y)^2}{(1+x)^2}\right).$$

The diamond operation is well-defined as
long as $x\neq -1$ and $y\neq -1-2x$.
Jackson and Sokal~\cite[Lemma 8.5(c)]{JacksonSokal} prove
that if you start from a point $(x_1,y_1)$ with $y_1<1$ and $q>32/27$
and apply a sequence of diamond operations for $j=1,2,\ldots$
with $(x_{j+1},y_{j+1}) =  (\diamond_{q,1}(x_{j},y_{j}),\diamond_{q,2}(x_{j},y_{j}))$
then for each $j$, we have $y_{j+1}>y_{j}$ and  there is a $k$ such that $y_k\geq 1$. 
Their analysis allows the 
situation $x_{k-1}=-1$, so the terminating point has $y_k=\infty$
(which would not give an implementation of a \emph{finite} $y_k>1$, which we require)
and it also allows $y_{k-1}=-1-2 x_{k-1}$
which gives $y_k=1$ (whereas we require $y_k>1$).

We start with $(x_1,y_1)=(x,y)$ and apply the sequence of diamond operations
until we reach a point $(x_j,y_j)$ 
with $y_j>1$.
However, there are two exceptions.

First, suppose, for some~$j$,  that $y_j=-1-2x_j$.
Then instead of taking
$(x_{j+1},y_{j+1}) =  (\diamond_{q,1}(x_{j},y_{j}),\diamond_{q,2}(x_{j},y_{j}))$
we define $(x_{j+1},y_{j+1})$ as follows:
We let $(x'_1,y'_1)=(x_j^2,-1)$ be the point implemented by a series composition of two copies of $(x_j,y_j)$.
We then let $(x'_2,y'_2)=(x_j^4,(x_j^2-1)/(x_j^2+1))$ be the point implemented by a series composition of two copies of $(x'_1,y'_1)$.
Finally, we let $(x_{j+1},y_{j+1}) = (1-x_j^{-2}+x_j^2,(1-x_j^2)/(1+x_j^2))$
be the point implemented by a parallel composition  of $(x'_1,y'_1)$ and $(x'_2,y'_2)$.
Note that $y_{j+1}-y_j = 2(x_j^3 + x_j + 1)/(x_j^2 + 1)$.
Now note that $q=2-2x_j^2$ so, since $q\geq 32/27$, we have  $x_j>-0.64$.
Thus, $y_{j+1}-y_j$ is positive, as required (the denominator is always positive,
and the numerator is positive for $x_j\geq -0.68$).
Note that exceptional points $(x_j,y_j)$ where $y_j=-1-2x_j$ arise at most twice during
the  sequence of points $(x_1,y_1)$, $(x_2,y_2)$, $\ldots$
since the hyperbola $(x-1)(y-1)=q$ only intersects the line $y=-1-2x$ in at most two places.
Also, $y_{j+1}\neq 1$, so the sequence does not terminate  incorrectly at $(x_{j+1},y_{j+1})$.

For the second exception, suppose that we get to a point $(x_j,y_j)$ with $x_j=-1$.
Then
$(x_j,y_j) = (-1,-q/2+1)$.  
Now, $j\neq 1$ since we start in the vicinity of the origin (so we don't have $x_1=-1$).
If $(x_j,y_j)$ was obtained as a result of the exceptional case above, then $q<2$
(since then $q=2-2 x_{j-1}^2$ and $x_{j-1}\neq 0$ since that would imply $y_{j-1}=-1$, contrary
to the fact that the $y$'s are all strictly above $-1$).
Otherwise,
$(x_j,y_j)$ was
obtained as the result of a diamond operation.
It is not possible that $x_{j-1}=-y_{j-1}$ since 
then $q=(x_{j-1}-1)(y_{j-1}-1)=-x_{j-1}^2+1\leq 1$.
Thus, from the definition of the diamond operation, $y_j>0$.
Thus, since $y_j=-q/2+1$, we also have
$q<2$.
Let $(x^*,y^*)$ be obtained as a parallel composition of two copies of $(x_j,y_j)$;
then $(x'',y'')$ as a series composition of $(x_j,y_j)$ and $(x^*,y^*)$.
By direct calculation from the series/parallel formulas,
$$
(x^*,y^*)=\left(\frac{-q}{4-q},\frac{(q-2)^2}4\right)\quad\text{and}\quad
(x'',y'')=\left(\frac{q}{4-q},\frac{q^2-6q+4}{2(2-q)}\right).
$$
It can be verified that $y''<-1$ in the range $32/27\leq q<2$.  ($y''$
is monotonically decreasing in~$q$, and less than $-1$ at $q=32/27$.)
So letting $(x',y')$ be a parallel composition of two copies of $(x'',y'')$
we are done, since $y'>1$.
\end{proof}

 \begin{lemma} 
\label{lem:triangle1}
Consider a point $(x,y)$ such that $y<-1-2x$ and $x> -1$.
 Then $(x,y)$ can be used to implement 
 a point $(x',y')$ with $y'>1$. 
 \end{lemma}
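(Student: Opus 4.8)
The plan is to reach a point with $y$-coordinate less than $-1$ by a single series composition, and then to square that coordinate by a single parallel composition; concretely, I would compose a $2$-stretch with a $2$-thickening.

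First I would extract two elementary consequences of the hypotheses. Since $x>-1$ we have $1+x>0$. Also $-1-2x<1$ precisely when $x>-1$, so the hypothesis $y<-1-2x$ gives $y<1$; hence $\gamma:=y-1\neq0$, and, more to the point, $q+2\gamma=(x-1)(y-1)+2(y-1)=(y-1)(1+x)\neq0$. This last inequality is exactly what guarantees that the series composition of two edges of weight~$\gamma$ is well-defined, i.e.\ that the denominator $q+w_1+w_2$ appearing in the series formula does not vanish.

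Next I would take a $2$-stretch of $(x,y)$, which (exactly as in the proof of Lemma~\ref{lem:xlefttoyup}) implements the point $(x',y')=(x^2,(x+y)/(1+x))$. From $y<-1-2x$ we get $x+y<-1-x=-(1+x)$, and dividing through by $1+x>0$ yields $y'<-1$. Finally I would take a $2$-thickening of $(x',y')$; this implements a point $(x'',y'')$ with $y''=(y')^2$, and $y'<-1$ forces $y''>1$, as required. Since shifts compose, this yields the desired implementation of $(x'',y'')$ directly from $(x,y)$.

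I do not anticipate a genuine obstacle here. The single thing that needs real checking is that the series composition inside the $2$-stretch is legitimate, and that is precisely what the two hypotheses $x>-1$ and $y<-1-2x$ purchase (they keep $q+2\gamma=(y-1)(1+x)$ away from zero); everything else is a one-line manipulation of the series/parallel identities. Note in particular that no hypothesis on the sign of $q$ is needed: the $2$-thickening step imposes no constraint, and the $2$-stretch step needs only $q+2\gamma\neq0$.
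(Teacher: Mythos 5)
Your proof is correct and follows exactly the same route as the paper's: a $2$-stretch to reach $y$-coordinate $(x+y)/(1+x)<-1$, followed by a $2$-thickening to square it. The additional observation that $q+2\gamma=(y-1)(1+x)\neq 0$ is a nice sanity check on the well-definedness of the stretch, but the substance of the argument is identical.
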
 
\begin{proof}
Let $(x'',y'')= (x^2,\frac{x+y}{1+x}$) be the point implemented by a $2$-stretch from $(x,y)$.
Note that $y''<-1$. Now implement $(x',y')$ by a $2$-thickening of $(x'',y'')$.
\end{proof}

 \begin{lemma} 
\label{lem:triangle2}
Consider a point $(x,y)$ such that
 $x<-1-2y$ and $y> -1$ and $q=(x-1)(y-1)>0$. 
 Then $(x,y)$ can be used
to implement 
a point $(x',y')$ with $y'>1$. 
\end{lemma}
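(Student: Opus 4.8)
The plan is to mirror the proof of Lemma~\ref{lem:triangle1}, but in \emph{dual} fashion. There, a $2$-stretch (which squares the $x$-coordinate) was used to reach a point with $y$-coordinate less than $-1$, and then a $2$-thickening squared that coordinate to push it above~$1$. Here I would instead first apply a $2$-thickening to reach a point whose $x$-coordinate is less than $-1$, and then invoke Lemma~\ref{lem:xlefttoyup} to pass from such a point to one with $y'>1$. This last step is precisely where the extra hypothesis $q>0$ gets used, whereas $q>0$ was not needed in Lemma~\ref{lem:triangle1}.

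In detail, set $q=(x-1)(y-1)>0$. Since $y>-1$, and since $q\ne 0$ forces $y\ne 1$, the quantity $y^2-1$ is nonzero, so the $2$-thickening of $(x,y)$ is a legitimate shift (as recorded in Section~\ref{sec:shiftdef}, it implements $\alpha'=(1+\alpha)^2-1$ and preserves $q$). It produces the point $(x'',y'')$ with $y''=y^2$ and $x''=q/(y^2-1)+1$; substituting $q=(x-1)(y-1)$ and cancelling the factor $y-1$ gives $x''=(x-1)/(y+1)+1=(x+y)/(y+1)$. Because $y>-1$ we have $y+1>0$, so the inequality $x''<-1$ is equivalent to $x+y<-(y+1)$, i.e.\ to $x<-1-2y$, which is exactly the hypothesis. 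Hence $x''<-1$, and of course $(x''-1)(y''-1)=q>0$.

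It then remains to apply Lemma~\ref{lem:xlefttoyup} to $(x'',y'')$: since $q>0$ and $x''<-1$, that lemma implements from $(x'',y'')$ a point $(x',y')$ with $y'>1$. Composing this implementation with the $2$-thickening that produced $(x'',y'')$ from $(x,y)$ — shifts compose — yields an implementation of $(x',y')$ from $(x,y)$, as required. I do not expect any genuine obstacle here; the only things to be careful about are that the single division performed (by $y^2-1$) is valid, which follows from $y>-1$ and $q\ne0$, and that the two shifts are correctly chained together.
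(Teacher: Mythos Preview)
Your proof is correct and follows essentially the same route as the paper: a $2$-thickening to reach a point with $x$-coordinate $(x+y)/(1+y)<-1$, followed by an application of Lemma~\ref{lem:xlefttoyup}. Your write-up is in fact slightly more careful than the paper's, since you explicitly verify that the division by $y^2-1$ is valid and spell out the equivalence $x''<-1\iff x<-1-2y$.
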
 
\begin{proof}
Let $(x',y') = (\frac{x+y}{1+y},y^2)$ be the point implemented by a $2$-thickening.
Note that $x'<-1$. Then use Lemma~\ref{lem:xlefttoyup}.
\end{proof}

\begin{lemma}
\label{lem:3227g1} 
Suppose  that $(x,y)$ is a point satisfying
$\max(|x|,|y|)<1$
  and $q=(x-1)(y-1)>1$.
Suppose that $(x,y)$ also satisfies at least one of the following conditions.
\begin{itemize}
\item $q>32/27$, or
\item $y<-1-2x$, or
\item $x<-1-2y$. 
\end{itemize}
Then $(x,y)$ can be used to implement 
a point $(x_1,y_1)$ with $-1<y_1<0$.
\end{lemma}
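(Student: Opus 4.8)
The plan is to split on the sign of~$y$. Since $\max(|x|,|y|)<1$ we certainly have $y\in(-1,1)$. If already $-1<y<0$, then the point $(x_1,y_1)=(x,y)$ does the job: it is trivially implemented by a single edge, and nothing else is needed. So the only case that requires work is $0\le y<1$. In that case I would first squeeze two easy consequences out of the standing hypotheses $q=(x-1)(y-1)=(1-x)(1-y)>1$ and $|x|<1$: since $1-y\in(0,1]$ we get $1-x=q/(1-y)\ge q>1$, hence $x\in(-1,0)$; and then $q=(1-x)(1-y)<2\cdot 1=2$, so in fact $q\in(1,2)$. (The three alternative hypotheses of the lemma are not actually used in this argument; they are stated so as to match Lemmas~\ref{lem:3227g2}, \ref{lem:triangle1} and~\ref{lem:triangle2}, each of which additionally provides the companion point with $y'>1$ that the eventual application of Lemma~\ref{lem:signhardqbig} wants alongside the point produced here.)

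With $x\in(-1,0)$ and $q\in(1,2)$ in hand, the rest is a single stretch. I would take the $j$-stretch of $(x,y)$ for a large even integer~$j$, which implements the point $\bigl(x^{j},\,1+q/(x^{j}-1)\bigr)$. Because $j$ is even and $x\in(-1,0)$, we have $x^{j}\in(0,1)$ with $x^{j}\to 0$ as $j\to\infty$; in particular $x^{j}-1\in(-1,0)$, so $q/(x^{j}-1)<-q$ and hence $y_1:=1+q/(x^{j}-1)<1-q<0$. On the other side, $y_1>-1$ is equivalent to $x^{j}<1-q/2$, and since $q<2$ gives $1-q/2>0$, this holds once $j$ is large enough. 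Thus for all sufficiently large even~$j$ the implemented point has $y$-coordinate in $(-1,0)$, which is what is required; all the stretches involved are well defined because $|x|<1$ keeps the denominators $x^{j}-1$ away from~$0$.

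The one genuinely load-bearing observation---and the step I would expect to be the ``aha''---is that the case $0\le y<1$ forces $q<2$. This is exactly what stops the stretch from overshooting: as $j\to\infty$ the $y$-coordinate of the stretched point tends to $1-q$, which lies strictly inside $(-1,0)$ precisely when $1<q<2$, whereas if $q\ge 2$ were possible no pure stretch of $(x,y)$ could be made to land in $(-1,0)$ and one would have to work harder (e.g.\ combining with the companion point of $y$-value above~$1$). Everything else is routine series/parallel bookkeeping.
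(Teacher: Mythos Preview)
Your proposal is correct, and in fact cleaner than the paper's own argument. You and the paper agree on the trivial case $-1<y<0$ and on the key preliminary observation that $0\le y<1$ forces $x\in(-1,0)$ and $1<q<2$. Where you diverge is in how to exploit this: you simply take an even $j$-stretch, noting that the implemented $y$-coordinate $1+q/(x^j-1)$ tends to $1-q\in(-1,0)$ as $x^j\to 0$, and lands strictly inside $(-1,0)$ once $x^j<1-q/2$. This is a one-step argument and, as you observe, does not use any of the three extra hypotheses.

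The paper instead invokes Lemmas~\ref{lem:3227g2}/\ref{lem:triangle1} (which \emph{do} need those hypotheses) to produce a companion point $(x_1',y_1')$ with $y_1'>1$, then uses Corollary~\ref{cor:shiftbigqbigT} to manufacture a carefully chosen $x''$, series-composes with $j$ copies of $(x,y)$ to obtain some $y^*<-1$, and finally parallel-composes with $\ell$ copies of $(x,y)$ to pull $y^*y^\ell$ into $(-1,0)$. Your route is more elementary, proves a formally stronger statement (the three bullets are superfluous), and even sidesteps a small glitch in the paper's final step: when $y=0$, the parallel composition there gives $y_1=y^*\cdot 0^\ell=0\notin(-1,0)$, whereas your stretch handles $y=0$ without modification. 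The trade-off is only that the paper's construction reuses the $y'>1$ point that will in any case be needed later for Lemma~\ref{lem:signhardqbig}, but as you note that point is supplied independently by Lemmas~\ref{lem:3227g2}--\ref{lem:triangle2}.
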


\begin{proof} 
If $-1<y<0$ then we simply take $(x_1,y_1)=(x,y)$.
Thus, we can assume $0\leq y < 1$.
This implies $-1<x<0$, 
and $q>32/27$ or $y<-1-2x$.

By Lemmas~\ref{lem:3227g2} and~\ref{lem:triangle1},
we can implement a point $(x_1',y_1')$ with $y_1'>1$. Since $(x_1'-1)(y_1'-1)=q$, we
also have $x_1'>1$.

Note that the restrictions on~$x$ and~$y$ imply $1<q<2$.
Choose an even integer~$j$ so that $x^j < 1-q/4$.
By Corollary~\ref{cor:shiftbigqbigT}
(taking $T = (1-q/4)/x^j$ and $\pi=q/(8x^j)$, say) 
the point $(x_1',y_1')$
can be used to implement a point $(x'',y'')$ with
$$\frac{1-q/2}{x^j}< x'' < \frac{1}{x^j}.$$
Implement $(x^*,y^*)$ by taking the series composition of $(x'',y'')$ with $j$~copies of $(x,y)$.
Note that 
$y^* = \frac{q}{ x'' x^j -1}+1 < -1$.
  
Now implement $(x_1,y_1)$ by 
choosing a sufficiently large integer~$\ell$ and
taking the parallel composition of $(x^*,y^*)$ with
$\ell$ copies of $(x,y)$ so
that 
$y_1 = y^* y^\ell$.
\end{proof}

\subsection{Regions~C and~D}

\begin{lemma}
\label{lem:stretchCD}
Suppose $(x,y)$ is a point satisfying  
one of the following.
\begin{itemize}
\item $y>1$ and $x<-1$, or
\item $x>1$ and $y<-1$.
\end{itemize}
Then $(x,y)$ can be used to implement a point $(x_1,y_1)$ with 
$y_1\in (0,1)$.
\end{lemma}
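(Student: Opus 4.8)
The plan is to handle both cases uniformly with a single $k$-stretch, exploiting the fact that $q=(x-1)(y-1)<0$ throughout the hypothesis. Indeed, in the first case $y-1>0$ while $x-1<-2<0$, so $q<0$; in the second case $x-1>0$ while $y-1<-2<0$, so again $q<0$. Note also that $|x|>1$ in both cases.

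First I would choose $k$ to be an even positive integer large enough that $|x|^k>1+|q|$; this is possible precisely because $|x|>1$. Performing a $k$-stretch from $(x,y)$ — the series composition of $k$ edges of weight $\gamma=y-1$ — is a shift to a point $(x_1,y_1)$ with $x_1=x^k$ and $(x_1-1)(y_1-1)=q$, so that $y_1=1+q/(x^k-1)$. Because $k$ is even and $|x|>1$, we have $x^k=|x|^k>1+|q|>1$, hence $x^k-1>|q|>0$; therefore $q/(x^k-1)$ lies strictly between $-1$ and $0$, which gives $y_1\in(0,1)$, as required. (The parity of $k$ matters only in the case $x<-1$: for odd $k$ one would get $x^k-1<0$ and hence $y_1>1$, which is why I take $k$ even.)

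The only point needing a word of care is that the $k$-stretch is genuinely well-defined, i.e.\ that no intermediate series composition involves a division by zero. For the prefix consisting of $i\le k$ of the edges, the combined weight $\alpha^{(i)}$ satisfies $1+q/\alpha^{(i)}=x^i$ (using $q/(y-1)=x-1$), and since $x^i=|x|^i\neq 1$ for every $i\ge 1$ because $|x|>1$, every intermediate weight is finite and nonzero. I do not anticipate any real obstacle here; the entire content of the argument is the sign bookkeeping for $q$ and the choice of parity for $k$.
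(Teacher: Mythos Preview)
Your proof is correct and essentially identical to the paper's own argument: the paper also observes that $q<0$, chooses an even integer $j$ with $x^j-1>|q|$, and takes a $j$-stretch to obtain $y_1=q/(x^j-1)+1\in(0,1)$. Your additional paragraph verifying that no intermediate series composition degenerates is a nice piece of hygiene that the paper omits, but otherwise the two proofs coincide.
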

\begin{proof}
Note that $q<0$. Choose an even number $j$ such that
$x^j-1>|q|$. Implement $(x_1,y_1)$ 
by taking a $j$-stretch of $(x,y)$ so
$y_1 =  q/(x^j-1)+1$.
\end{proof}

\subsection{Region E}

\begin{lemma}
\label{lem:Fq12}
Suppose $(x,y)$ is a point satisfying $x<-1$ and $0<y<1$ and
$1<(x-1)(y-1)<2$.
Then $(x,y)$ can be used to implement a point $(x_1,y_1)$ with $-1<y_1<0$.
\end{lemma}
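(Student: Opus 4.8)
The plan is to first push the given point into the ``vicinity of the origin'' by a thickening, and then re-run the second half of the proof of Lemma~\ref{lem:3227g1} with the thickened point playing the role of $(x,y)$ there. Throughout, write $q=(x-1)(y-1)$, so that the hypotheses give $1<q<2$, and in particular $q>0$.

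First I would apply a $k$-thickening to $(x,y)$ for a sufficiently large fixed integer~$k$. This implements the point $(x',y')$ with $y'=y^k$ and $x'=q/(y^k-1)+1$, still satisfying $(x'-1)(y'-1)=q$. Since $0<y<1$ we have $0<y^k<1$, and as $k\to\infty$ we have $y^k\to 0$ and $x'\to 1-q$; because $1<q<2$ we have $1-q\in(-1,0)$, so for $k$ large enough $\max(|x'|,|y'|)<1$ with $x'<0$ and $0<y'<1$. (This is exactly where $1<q<2$ is used: $q<2$ forces $x'>-1$ and $q>1$ forces $x'<0$.) Separately, since $x<-1$ and $q>0$, Lemma~\ref{lem:xlefttoyup} implements from $(x,y)$ a point $(x_1',y_1')$ with $y_1'>1$, and then $x_1'>1$ since $(x_1'-1)(y_1'-1)=q>0$. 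Both $(x',y')$ and $(x_1',y_1')$ are implementable from $(x,y)$, and implementations compose, so anything built from series and parallel compositions of these two points is implementable from $(x,y)$.

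Next I would mimic Lemma~\ref{lem:3227g1}. Choose an even integer~$j$ with $(x')^j<1-q/4$ (possible since $|x'|<1$ and $1-q/4>1/2$). Applying Corollary~\ref{cor:shiftbigqbigT} to $(x_1',y_1')$ --- which has $x_1'\notin[-1,1]$ and $q>0$ --- with target weight $T=(1-q/4)/(x')^j>1$ and a sufficiently small slack $\pi$, we implement a point $(x'',y'')$ with $(1-q/2)/(x')^j<x''<1/(x')^j$. Taking the series composition of $(x'',y'')$ with $j$ copies of $(x',y')$ then implements $(x^*,y^*)$ with $x^*=x''(x')^j\in(1-q/2,1)$; hence $x^*-1\in(-q/2,0)$ and $y^*=q/(x^*-1)+1<-1$. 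Finally, the parallel composition of $(x^*,y^*)$ with $\ell$ copies of $(x',y')$, for a sufficiently large fixed~$\ell$, implements $(x_1,y_1)$ with $y_1=y^*(y')^\ell$; since $y^*<-1$ and $0<(y')^\ell<1$ with $(y')^\ell\to 0$ as $\ell\to\infty$, this lands $y_1$ in $(-1,0)$, as required.

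The one delicate point is the invocation of Corollary~\ref{cor:shiftbigqbigT}: the slack $\pi$ must be chosen small enough for the corollary (relative to the now-fixed constants $x_1',y_1'$ and the bounds $T^-,T^+$) and at the same time below $q/(4(x')^j)$, which is what guarantees that the implemented $x''$ really falls in the window $\bigl((1-q/2)/(x')^j,\ 1/(x')^j\bigr)$ needed for the series step. This is precisely the bookkeeping already carried out in the proof of Lemma~\ref{lem:3227g1} (the choice $\pi=q/(8(x')^j)$, or any smaller value, works), so it carries over without change. All the parameters $k$, $j$, $\ell$, and the Corollary~\ref{cor:shiftbigqbigT} gadget, have size bounded by a constant depending only on the fixed point $(x,y)$.
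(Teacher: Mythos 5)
Your proof is correct, but it takes a genuinely different and more elaborate route than the paper's. The paper's argument is entirely elementary: it $j$-thickens $(x,y)$ to a point $(x',y')$ with $-1<x'<0$ (choosing $j$ so that $0<y^j<1-q/2$), and then takes the series composition of $(x,y)$ \emph{itself} with $k$ copies of $(x',y')$ for a large odd~$k$, giving $x_1=x(x')^k$. Since $x<-1$ and $x'\in(-1,0)$, the product $x(x')^k$ is positive for odd~$k$ and tends to~$0$ as $k\to\infty$, so it can be placed in $(0,1-q/2)\subset(1-q,1-q/2)$; this gives $x_1-1\in(-q,-q/2)$ and hence $y_1=q/(x_1-1)+1\in(-1,0)$ directly. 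Your version instead discards the leverage of $|x|>1$ in the series step, and rebuilds it by invoking Lemma~\ref{lem:xlefttoyup} to reach a point with $x_1'>1$ and then Corollary~\ref{cor:shiftbigqbigT} (which imports \cite[Lemma 3.26]{planartutte}) to tune the $x$-coordinate into a target window before running the tail of Lemma~\ref{lem:3227g1}. This works, and the $\pi$-bookkeeping is handled correctly, but it is considerably heavier machinery than necessary: the paper's two-step thicken-then-series argument is self-contained and avoids the approximate implementation of Corollary~\ref{cor:shiftbigqbigT} altogether. The trade-off is that your approach is essentially the same one that handles the more delicate vicinity-of-the-origin cases, so it is more uniform; the paper's is tailored to exploit the hypothesis $x<-1$.
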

\begin{proof}
Let $q=(x-1)(y-1)$.
Note that $1-q/2>0$ since $q<2$.
Let 
$j$ be a sufficiently large integer  that 
$0<y^j < 1-q/2$.
Note that $1-q<0$ so
$1-q < y^j < 1-q/2$.
Implement $(x',y')$ by $j$-thickening from the point~$(x,y)$ so
that $x'=q/(y^j-1)+1$. Note that  $-1<x'<0$. 
Now let $k$ be an odd integer which is sufficiently large that
$0 < x (x')^k < 1-q/2$ so $1-q < x (x')^k< 1-q/2$.
Implement $(x_1,y_1)$ by 
taking a series composition of $(x,y)$ with $k$ copies of $(x',y')$
so $x_1 = x (x')^k$.  Then $y_1 = q/(x (x')^k-1)+1$ so $-1<y_1<0$, as required.
\end{proof}

\begin{lemma}
\label{lem:F}
Suppose $(x,y)$ is a point satisfying $x<-1$ and $0<y<1$.
Suppose that $q=(x-1)(y-1)>2$ is not an integer.
Then $(x,y)$ can be used to implement a point $(x',y')$ with $y'<0$.
\end{lemma}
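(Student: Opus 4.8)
The plan is to use a gadget built from a near-complete graph, exploiting the non-integrality of $q$; this is exactly the case where series/parallel constructions fail, since starting from an edge-weight with $x<-1$ and $0<y<1$ every point obtained by series or parallel composition still has $y\ge 0$, so a genuinely non-series-parallel $\Upsilon$ is needed. Write $n=\lfloor q\rfloor$, so $n\ge 2$ and $n<q<n+1$. For a two-terminal graph $\Upsilon$ with terminals $s,t$ and a uniform weight $\gamma$, let $\Upsilon/st$ denote $\Upsilon$ with $s$ and $t$ identified. From $Z(\Upsilon)=Z_{st}(\Upsilon)+Z_{s|t}(\Upsilon)$ and the easily-checked identity $Z(\Upsilon/st)=Z_{st}(\Upsilon)+q^{-1}Z_{s|t}(\Upsilon)$ one computes that the point $(x',y')$ implemented via~(\ref{eq:implement}) satisfies
\[
y'=w^*+1=\frac{(q-1)\,Z(\Upsilon/st)}{Z(\Upsilon)-Z(\Upsilon/st)},
\]
and the implementation is valid precisely when $Z(\Upsilon)-Z(\Upsilon/st)\ne 0$. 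So it suffices to produce a graph $\Upsilon$ and an \emph{implementable} uniform edge-weight $\gamma^\dagger$ for which $Z(\Upsilon/st)$ and $Z(\Upsilon)-Z(\Upsilon/st)$ have opposite nonzero signs.

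I would take $\Upsilon$ to be $K_{n+2}$ with one edge $st$ deleted, $s,t$ being the endpoints of the deleted edge, and first analyse the weight $\gamma=-1$. Using deletion--contraction and the fact that a doubled edge collapses to a single edge at $\gamma=-1$, one gets $Z(\Upsilon;q,-1)=P(K_{n+2}-e;q)=P(K_{n+2};q)+P(K_{n+1};q)=(q-n)\,P(K_{n+1};q)$; and in $\Upsilon/st$ the $n$ doubled edges incident to the identified vertex each collapse at $\gamma=-1$, so $Z(\Upsilon/st;q,-1)=P(K_{n+1};q)$. Since $n<q<n+1$ we have $P(K_{n+1};q)=\prod_{j=0}^{n}(q-j)>0$ and $0<q-n<1$, hence $0<Z(\Upsilon;q,-1)<Z(\Upsilon/st;q,-1)$, so $Z(\Upsilon;q,-1)-Z(\Upsilon/st;q,-1)=(q-n-1)P(K_{n+1};q)<0$ and $y'|_{\gamma=-1}=(q-1)/(q-n-1)<0$, a finite nonzero value with nonvanishing denominator.

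Finally I would pass from $\gamma=-1$ to an implementable weight by continuity. The quantities $Z(\Upsilon;q,\gamma)$ and $Z(\Upsilon/st;q,\gamma)$ are polynomials in $\gamma$, so there is a constant $\epsilon_0>0$ such that for every $\gamma\in(-1,-1+\epsilon_0)$ the implementation remains valid and $y'<0$. Since $0<y<1$, a $k$-thickening of $(x,y)$ implements the edge-weight $y^k-1$, and choosing $k$ a large enough constant (depending only on the fixed point) puts $y^k-1$ into $(-1,-1+\epsilon_0)$; call this weight $\gamma^\dagger$. Composing this shift with the gadget $\Upsilon$, all of whose edges carry weight $\gamma^\dagger$, then implements a point $(x',y')$ with $y'<0$, as required; all graphs involved have constant size.

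The main work is the sign bookkeeping of the second paragraph --- confirming that $P(K_{n+1};q)$ and $P(K_{n+2}-e;q)$ sit the right way round on $(n,n+1)$, and that the doubled edges of $\Upsilon/st$ collapse correctly at $\gamma=-1$ --- and this is exactly where non-integrality of $q$ enters: for integer $q$ one of these chromatic polynomials vanishes and the construction breaks. Checking that $\Upsilon=K_{n+2}-st$ genuinely takes us beyond series/parallel implementations (equivalently, that restoring the edge $st$ yields a graph with a $K_4$-minor for every $n\ge 2$) is routine.
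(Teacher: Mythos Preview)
Your proof is correct and is essentially the paper's own argument: both use the gadget $\Upsilon=K_{\lfloor q\rfloor+2}$ minus an edge with terminals at the missing edge, compute the implemented weight at $\gamma=-1$ via chromatic-polynomial identities to obtain $y'=(q-1)/(q-\lfloor q\rfloor-1)<0$, and then pass to an implementable weight $y^k-1$ by continuity and $k$-thickening. The only cosmetic difference is that you package the calculation through $Z(\Upsilon)$ and $Z(\Upsilon/st)$ (using the doubled-edge collapse at $\gamma=-1$) whereas the paper works directly with $Z_{st}$ and $Z_{s|t}$, but these are equivalent and yield the same numerical value.
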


\begin{proof}
Let $q=(x-1)(y-1)$.
Let us first examine what points we can implement from the point 
$(x_1,y_1) = (1-q,0)$ 
and from points nearby.
We will later show how to implement points near $(x_1,y_1)$ from the given point~$(x,y)$.
Let $n=\lfloor q \rfloor + 2$. Note that $n\geq 4$ and that  $n-2<q<n-1$. 
Let $\Gamma_n$ be the graph obtained from the complete graph~$K_n$ 
on $n$ vertices by deleting some edge $(s,t)$.  
Let $\bgamma$ be the weight function that
gives every edge of $\Gamma_n$ 
weight~$ y_1-1=-1$.
From Section~\ref{sec:shiftdef}, the graph $\Gamma_n$ and the
weight function $\bgamma$ implement the weight
\begin{equation}
\label{defw} 
w(q,n) = 
\frac{q Z_{st}(\Gamma_n;q, -1)}{Z_{s|t}(\Gamma_n;q, -1)}.
\end{equation}

We wish to calculate some properties of $w(q,n)$.
Recall from the introduction that $Z(G;q,-1)$ is equal to the chromatic polynomial $P(G;q)$.
We will next calculate
$Z_{st}(\Gamma_n;q,-1)$
and $Z_{s|t}(\Gamma_n;q,-1)$ as polynomials in~$q$
using known facts about the chromatic polynomial.
In particular, when $q$ is a positive integer, $Z(G;q,-1)$ gives the number of proper $q$-colourings
of~$G$.

Now, let $V$ denote the vertex set of~$K_n$.
We can  expand the definition of $Z(K_n;q,-1)$
as 
$$Z(K_n;q,-1) = \sum_{A \subseteq E-(s,t)} \left(
q^{\kappa(V,A\cup \{(s,t)\})} {(-1)}^{|A|+1} + q^{\kappa(V,A)} {(-1)}^{|A|}
\right),$$
If a subset $A$ connects $s$ and $t$ then $\kappa(V,A\cup\{(s,t)\}) = \kappa(V,A)$
so the contribution from this~$A$ is zero. On the other hand, if
a subset $A$ does not connect $s$ and $t$ then
$\kappa(V,A\cup\{s,t\}) =  \kappa(V,A)-1$. Thus, 
\begin{align}
Z(K_n;q,-1) &= Z_{st}(\Gamma_n;q,-1)(1-1) + Z_{s|t}(\Gamma_n;q,-1)(1-\tfrac1q) \nonumber\\
                   &= Z_{s|t}(\Gamma_n;q,-1)(1-\tfrac1q). \label{eq:first}
\end{align}
Note that the factor $(1-\tfrac1q)$ is positive. 

Similarly,
$$Z_{st}(\Gamma_n;q,-1) = Z(\Gamma_n;q,-1) - Z_{s|t}(\Gamma_n;q,-1),$$
so we have
\begin{equation}
\label{eq:st}
Z_{st}(\Gamma_n;q,-1) = Z(\Gamma_n;q,-1) - \frac{Z(K_n;q,-1)}{1-\tfrac1q},
\end{equation}
and
\begin{equation}
\label{eq:sbart}
Z_{s|t}(\Gamma_n;q,-1) = \frac{Z(K_n;q,-1)}{1-\tfrac1q}.
\end{equation}
 
The properties of $w(q,n)$ that we require will follow from 
(\ref{defw}) and Equations~(\ref{eq:st}) and~(\ref{eq:sbart}).
First note that $Z(K_n;q,-1)=\prod_{i=0}^{n-1}(q-i)$.
This is clear at positive integer~$q$, since both sides can be in
interpreted as the number of $q$-colourings of an $n$-clique.
But we know that $Z(K_n;q,-1)$ is a polynomial in~$q$, so the two sides 
must be equal for all~$q$.
Let $N_{q,n} = \prod_{i=0}^{n-2}(q-i)$, so $Z(K_n;q,-1) = N_{q,n}(q-n+1)$.
Then $Z(\Gamma_n;q,-1) = N_{q,n}(q-n+2)$ since, again, both sides may be 
interpreted as the number of $q$ colourings of a certain graph, in this case~$\Gamma_n$. 
(If you colour the vertices of $\Gamma_n$ in order, colouring~$s$
last, there are $q-(n-2)$ choices for~$s$, rather than $q-(n-1)$ in $K_n$.)
Then, from~(\ref{defw}), (\ref{eq:st}) and~(\ref{eq:sbart}),
\begin{align*} 
w(q,n) =
\frac{q Z_{st}(\Gamma_n;q,-1) }{Z_{s|t}(\Gamma_n;q,-1)}
&= 
\frac{q Z(\Gamma_n;q,-1) - q\frac{Z(K_n;q,-1)}{1-1/q}}
{\frac{Z(K_n;q,-1)}{1-1/q}}  \\
&= 
\frac{(q-1) Z(\Gamma_n;q,-1) - q {Z(K_n;q,-1)}}
{{Z(K_n;q,-1)} } \\
&= 
\frac{(q-1) (q-n+2) - q(q-n+1)} 
{q-n+1} \\
&= 
\frac{   n-2}
{ q-n+1}, 
 \end{align*}
where we use the fact that $q$ is not integral, so $Z(K_n;q,-1)\not=0$.
 
Now since $n>2$ and $1<q<n-1$ we can see that the numerator
$n-2$ is positive, the denominator $q-n+1$ is negative
and $n-2> n-q-1$, and hence $w(q,n)<-1$.

 We now have 
 \begin{equation}
 \label{endlessedits}
 \frac{q Z_{st}(\Gamma_n;q,-1) }{Z_{s|t}(\Gamma_n;q,-1)}
 <-1.\end{equation}
 Unfortunately, we are not finished, because we cannot necessarily implement the
 weight $-1$ exactly from the given point~$(x,y)$.
 However, by continuity, Equation~(\ref{endlessedits}) implies 
 that there is a small positive $\epsilon$ (depending on $q$ and $n$)
such that, if
$|z- Z_{st}(\Gamma_n;q,-1)|\leq\epsilon$ 
and $|z'- Z_{s|t}(\Gamma_n;q,-1)|\leq\epsilon$, 
then we have
  $\frac{q  z}{ z'}
 <-1.$
 
To finish, we will show that we can implement an edge weight $-1+\delta$
from~$(x,y)$ so that 
 $|Z_{st}(\Gamma_n;q,-1+\delta)- Z_{st}(\Gamma_n;q,-1)|\leq\epsilon$ 
and $|Z_{s|t}(\Gamma_n;q,-1+\delta)- Z_{s|t}(\Gamma_n;q,-1)|\leq\epsilon$.
Thus, we can implement an edge-weight less than~$-1$ by
using~$\Gamma_n$ with all edge weights equal to $-1+\delta$.

We finish with the relevant technical details.
First, let $V$ be the vertex set of~$\Gamma_n$.
For any $\delta \in (0,\epsilon/(2^m q^n m))$,
note that
\begin{align*} 
Z_{st}(\Gamma_n;q,-1+\delta)- Z_{st}(\Gamma_n;q,-1)
&= \sum_{A } q^{\kappa(V,A)} {(-1)}^{|A|+1}
\left({1-(1-\delta)}^{|A|}  \right) \\
&\leq \sum_{A} q^{\kappa(V,A)}  
\left({1-(1-\delta)}^{|A|}  \right) \\
&\leq 2^m q^n  m \delta \\
&< \epsilon,
\end{align*}
where the sum is over edge subsets~$A$ with $s$ and $t$ in the same component.
Similarly, $Z_{st}(\Gamma_n;q,-1) -Z_{st}(\Gamma_n;q,-1+\delta) < \epsilon$
and 
 $|Z_{s|t}(\Gamma_n;q,-1+\delta)- Z_{s|t}(\Gamma_n;q,-1)|\leq\epsilon$.

It remains to show that we can implement weight $-1+\delta$ 
from the given $(x,y)$. 
Using $(x,y)$ coordinates, the point that we wish to implement is $(x'',y'')= (1+q/(\delta-1),\delta)$.
This can be done using a $k$-thickening from $(x,y)$, choosing $k$
to be sufficiently large that 
$y^k \leq \epsilon/(2^m q^n m)$.
\end{proof}

As we shall see shortly, Region~B consists of those points $(x,y)$ 
for which $\min(x,y) \leq -1$ and $\max(x,y)<0$. Also,
Region~G consists of points $(x,y)$ with $\max(|x|,|y|)<1$ and
$q=(x-1)(y-1)>32/27$.
We use these definitions in the following lemma.

\begin{lemma}
\label{lem2:F}
Suppose $(x,y)$ is a point satisfying $x<-1$ and $0<y<1$.
Suppose that $q=(x-1)(y-1)>2$ is not an integer.
Then $(x,y)$ can be used to implement a point $(x',y')$ 
apart from the special  point $(-1,-1)$ which is either in Region~B or in Region~G.
\end{lemma}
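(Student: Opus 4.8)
The plan is to re-run the construction from the proof of Lemma~\ref{lem:F}, but with the parameter $n$ taken much larger than $\lfloor q\rfloor+2$. Recall from that proof that, for a non-integer $q$ and \emph{any} integer $n\ge 2$, the graph $\Gamma_n=K_n-\{(s,t)\}$ with all edge weights equal to $-1$ implements the weight $w(q,n)=(n-2)/(q-n+1)$: the derivation of this formula used only that $q$ is non-integral (so that $Z(K_n;q,-1)\ne 0$), not the particular choice $n=\lfloor q\rfloor+2$. Moreover, as soon as $n>q+1$, the argument in Lemma~\ref{lem:F} (numerator positive, denominator negative, $n-2>n-1-q$ because $q>1$) gives $w(q,n)<-1$. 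In $(x,y)$ coordinates, $\Gamma_n$ therefore implements the point $(x',y')=\bigl(q/w(q,n)+1,\,w(q,n)+1\bigr)$, which lies on the hyperbola $(x'-1)(y'-1)=q$, with $y'=w(q,n)+1<0$.

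Now note that $w(q,n)=-(n-2)/(n-1-q)\to -1$ as $n\to\infty$, so $x'=q/w(q,n)+1\to 1-q$. Since $q>2$ we have $1-q<-1$ with room to spare, so we may fix an integer $n$ large enough that $x'<-1$; an elementary rearrangement shows that any $n>(q^2+q-4)/(q-2)$ works, and this bound is finite because $q>2$. For such an $n$, the implemented point $(x',y')$ satisfies $\min(x',y')\le x'<-1$ and $\max(x',y')<0$ (both coordinates are negative), so it lies in Region~B; and since $x'<-1$ it is certainly not the special point $(-1,-1)$. Thus Region~G is never needed here. Note also that $\Gamma_n$ has only a constant number of edges, as $n$ depends only on the fixed value $q$.

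The one genuine wrinkle, exactly as in Lemma~\ref{lem:F}, is that we cannot implement the edge weight $-1$ exactly from the given point $(x,y)$. Instead, for a small $\delta>0$ we implement the weight $-1+\delta$ from $(x,y)$ by a $k$-thickening with $k$ chosen so that $y^k$ is sufficiently small (possible since $0<y<1$), and we run $\Gamma_n$ with all edges at weight $-1+\delta$. The implemented weight $w^*=q\,Z_{st}(\Gamma_n;q,-1+\delta)/Z_{s|t}(\Gamma_n;q,-1+\delta)$ depends continuously on $\delta$ and equals $w(q,n)$ at $\delta=0$, with $Z_{s|t}(\Gamma_n;q,-1)\ne 0$ since $q$ is non-integral; and the two conditions $x'<-1$ and $y'<0$ hold \emph{strictly} at $(x',y')$, which moreover is bounded away from $(-1,-1)$. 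Hence for $\delta$ small enough the implemented point $\bigl(q/w^*+1,\,w^*+1\bigr)$ still lies in Region~B and differs from $(-1,-1)$. Since shifts compose, this produces an implementation of the required point from $(x,y)$, completing the proof. The only thing to get right is the realization that $n$ may be taken as large as we please while keeping $w(q,n)<-1$; after that, the argument is just the computation already performed in Lemma~\ref{lem:F} together with a routine continuity estimate.
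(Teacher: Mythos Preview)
Your proof is correct, and it takes a different route from the paper's.  The paper uses Lemma~\ref{lem:F} as a black box: with the particular choice $n=\lfloor q\rfloor+2$ it obtains some point $(x',y')$ with $y'<0$, observes that if this point is not already in Region~B or~G then it must satisfy $0\le x'<1$ and $y'\le-1$, and in that case composes $(x',y')$ in parallel with enough copies of $(x,y)$ to pull $y'$ into $(-1,0)$, landing in Region~B or~G.  You instead open up the proof of Lemma~\ref{lem:F} and exploit the freedom in the parameter~$n$: since $w(q,n)=(n-2)/(q-n+1)\to-1$ as $n\to\infty$, the implemented $x'$-coordinate tends to $1-q<-1$, so a sufficiently large (but still $q$-dependent, hence constant) $n$ puts the implemented point squarely in the interior of Region~B with $x'<-1$, and no post-processing is needed.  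Your approach is a bit more economical---Region~G is never invoked and there is no case split---at the cost of redoing a small piece of the analysis of Lemma~\ref{lem:F}; the paper's approach is more modular in that it reuses Lemma~\ref{lem:F} verbatim.  Both arguments rely on the same continuity step to pass from the ideal edge weight $-1$ to an achievable weight $-1+\delta$, and your observation that the target inequalities $x'<-1$, $y'<0$ are strict (and bounded away from the exceptional point $(-1,-1)$) is exactly what is needed there.
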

\begin{proof}
By Lemma~\ref{lem:F}, the point $(x,y)$ can be used to implement 
a point $(x',y')$ with $y'<0$.
We know that $(x',y')$ is not   the special point $(-1,-1)$ 
since $q$ is not an integer.
If $(x',y')$ is in Region~B or Region~G,
then we are finished.
Otherwise, the point $(x',y')$ satisfies  
$0\leq x'<1$ and
$y'\leq -1$.
Let $j$ be a sufficiently large integer so that
$|y'| y^j <1$.
Then implement the point $(x'',y'')$ by taking the parallel composition of $(x',y')$ with
$j$ copies of $(x,y)$ so $y'' = y' y^j$. Note that $-1<y''<0$ so the point
$(x'',y'')$ is in Regions~B or~G, as required.\end{proof}

\subsection{The Flow Polynomial}
\label{sec:flow}

In order to implement new edge weights from Region~F
(and also to show tractability results and NP-completeness results for Region~F in Section~\ref{Sec:PosE})
we must introduce a specialisation of the Tutte polynomial called the flow polynomial.

A $q$-\emph{flow} of an undirected graph~$G=(V,E)$
is defined as follows~\cite[Section 2.4]{sokal}.
Choose an arbitrary direction for each edge.
Let $H$ be any Abelian group of order~$q$.
A $q$-flow is a mapping $\psi:E\rightarrow H$
such that the flow into each vertex is equal to the flow out (doing
arithmetic in~$H$). 

Consider the following polynomial, 
where the sum is over $q$-flows of~$G$
 (see \cite[(2.21)]{sokal}).
 $$F(G;q, u) = \sum_{\psi} \prod_{e \in E} \left(1 +  u\delta(\psi(e),0)\right),$$ 
where $\delta$ is the Kronecker delta function defined by $\delta(a,b)=1$ if $a=b$
and $\delta(a,b)=0$ otherwise.
It is a non-trivial fact
that $F(G;q,u)$ depends only on $q$, the size of $H$, and not on $H$ itself.
This polynomial is related to the Tutte polynomial
via the following identity \cite[(2.22)]{sokal}.
\begin{fact}\label{fact:Flow}
If $q$ is a positive integer then
$F(G;q, q/\gamma) = q^{-|V|}  {\big(\frac{q}{\gamma}\big)}^{|E|} Z(G;q,\gamma)$.
\end{fact}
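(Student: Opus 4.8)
The plan is to expand the product defining $F(G;q,u)$, interpret the resulting coefficients as numbers of group-valued flows on edge-deleted subgraphs, count those flows, and re-index the sum so that it matches the definition of $Z$.

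First I would expand $\prod_{e\in E}\bigl(1+u\delta(\psi(e),0)\bigr)=\sum_{B\subseteq E}u^{|B|}\prod_{e\in B}\delta(\psi(e),0)$ and interchange the order of summation to obtain
$$F(G;q,u)=\sum_{B\subseteq E}u^{|B|}\,\bigl|\{\psi:\psi(e)=0\text{ for all }e\in B\}\bigr|,$$
where $\psi$ ranges over $q$-flows of $G$ for a fixed Abelian group $H$ of order $q$ and a fixed reference orientation of the edges. A flow that vanishes on every edge of $B$ is precisely the zero-extension of a flow of $G\setminus B$, so the inner count equals the number of $H$-flows of $G\setminus B$.

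Next I would establish the standard counting lemma: for any graph $G'=(V',E')$ and any finite Abelian group $H$ of order $q$, the number of $H$-flows of $G'$ is $q^{\,|E'|-|V'|+\kappa(V',E')}$. This follows from a spanning-forest argument: the $|E'|-|V'|+\kappa(V',E')$ non-forest edges may be assigned values of $H$ arbitrarily, after which the forest-edge values are uniquely forced by flow conservation, the conservation constraint at one vertex of each tree being redundant because the sum of the conservation equations over a component is identically $0$. As a by-product this lemma delivers the independence of $F(G;q,u)$ from the choice of $H$ asserted in the footnote, since only $|H|=q$ enters.

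Finally I would substitute this count, extract the factor $q^{|E|-|V|}$, set $u=q/\gamma$ so that $(u/q)^{|B|}=\gamma^{-|B|}$, re-index by $A=E\setminus B$, and recognise $\sum_{A\subseteq E}q^{\kappa(V,A)}\gamma^{|A|}$ as $Z(G;q,\gamma)$; collecting powers of $q$ and $\gamma$ gives $q^{|E|-|V|}\gamma^{-|E|}Z(G;q,\gamma)=q^{-|V|}(q/\gamma)^{|E|}Z(G;q,\gamma)$. The only genuine obstacle is the flow-counting lemma, in particular pinning down the redundancy of the last conservation equation in each component; the rest is bookkeeping with the geometric expansion and a change of summation variable.
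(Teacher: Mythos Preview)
Your argument is correct: expanding the product, interchanging sums, invoking the flow-counting lemma $|\{\text{$H$-flows of }G'\}|=q^{|E'|-|V'|+\kappa(V',E')}$, and re-indexing by the complement $A=E\setminus B$ yields exactly the stated identity. The spanning-forest justification you sketch for the flow count is the standard one and is sound, including the redundancy of one conservation equation per component.

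There is no proof in the paper to compare against: the authors state this as a Fact and simply cite identity~(2.22) of Sokal's survey~\cite{sokal}. Your derivation is essentially the one Sokal gives there (he also expands the product over edges and appeals to the same cycle-space count), so you have reconstructed the cited argument rather than found an alternative route.
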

The \emph{flow polynomial} of~$G$, which we write as  $F(G;q)$, is given by~$F(G;q,-1)$.
A $q$-flow $\psi$ of a graph~$G=(V,E)$ is said to be \emph{nowhere-zero} if, for every $e\in E$,
$\psi(e)\neq 0$.
From Fact~\ref{fact:Flow} it is easy to see that if $q$ is a positive integer
then $F(G;q) = q^{-|V|}  {\left( -1\right)}^{|E|} Z(G;q,-q)$ is the number of nowhere-zero $q$-flows of~$G$.

\subsection{Region F}

\begin{lemma}
\label{lem:Eq01}
Suppose $(x,y)$ is a point satisfying $0<x<1$ and $y<-1$  and
$0<(x-1)(y-1)<1$.
Then $(x,y)$ can be used to implement a point $(x_1,y_1)$ with $0<y_1<1$.
\end{lemma}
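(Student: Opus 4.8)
The plan is to reach $(x_1,y_1)$ directly from the given point by a single $j$-stretch, with $j$ a suitably large (fixed) integer. Write $q=(x-1)(y-1)$; since $0<x<1$ and $y<-1$, both factors $x-1$ and $y-1$ are negative, so $q>0$, and the hypothesis gives $q<1$. Recall from Section~\ref{sec:shiftdef} that a $j$-stretch is a shift from $(x,y)$ to $\bigl(x^j,\ q/(x^j-1)+1\bigr)$; this is well defined here because $x\neq1$, and no parity restriction on~$j$ is needed since $0<x<1$.

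First I would note that $x^j\in(0,1)$ for every positive integer~$j$, so $x^j-1\in(-1,0)$. Hence the new $y$-coordinate $y_1=q/(x^j-1)+1$ satisfies $y_1<1$ automatically, because $q>0$ and $x^j-1<0$ force $q/(x^j-1)<0$. The remaining requirement $y_1>0$ is equivalent to $q/(1-x^j)<1$, i.e.\ to $x^j<1-q$. This is precisely where the hypothesis $q<1$ enters: it makes $1-q$ positive, and since $0<x<1$ we have $x^j\to0$, so I may fix~$j$ large enough that $x^j<1-q$. For this~$j$, implementing $(x_1,y_1)=\bigl(x^j,\,q/(x^j-1)+1\bigr)$ by the $j$-stretch of $(x,y)$ completes the proof, since then $x_1\in(0,1)$ and $0<y_1<1$.

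There is essentially no obstacle here: the $j$-stretch is a standard composable shift, well defined throughout, and (as $x,y$ are fixed) $j$ is just a constant, so the gadget has constant size. The only mild subtlety is keeping track of inequality directions when clearing the negative denominator $x^j-1$, which is routine.
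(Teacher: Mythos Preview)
Your proof is correct and follows exactly the same approach as the paper: choose $j$ large enough that $x^j<1-q$ and take a $j$-stretch, yielding $y_1=q/(x^j-1)+1\in(0,1)$. Your write-up simply spells out the inequality checks that the paper's two-line proof leaves implicit.
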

\begin{proof}
Let $j$ be a sufficiently large positive integer such that $x^j<1-q$.
Implement $(x_1,y_1)$ by a $j$-stretch of $(x,y)$ so that $y_1 = q/(x^j-1)+1$.
\end{proof}

\begin{lemma}
\label{lem:Eq12}
Suppose $(x,y)$ is a point satisfying $0<x<1$ and $y<-1$  and
$1<(x-1)(y-1)<2$.
Then $(x,y)$ can be used to implement a point $(x_1,y_1)$ with $-1<y_1<0$.
\end{lemma}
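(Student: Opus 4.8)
The plan is to implement the target point with a single stretch, closely paralleling the short proof of Lemma~\ref{lem:Eq01} but requiring a slightly stronger bound on the stretch length. Throughout, set $q = (x-1)(y-1)$, so the hypotheses give $0 < x < 1$ and $1 < q < 2$.

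The first step is to pick the stretch parameter. Since $q < 2$ we have $1 - q/2 > 0$, and since $0 < x < 1$ the powers $x^j$ decrease to $0$; hence there is an integer $j$ with $0 < x^j < 1 - q/2$. A $j$-stretch of $(x,y)$ then implements, via the series-composition identities of Section~\ref{sec:shiftdef}, the point $(x_1,y_1) = \bigl(x^j,\ q/(x^j-1)+1\bigr)$.

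The second step is to check that $-1 < y_1 < 0$; this is the only calculation and it is routine. From $0 < x^j < 1 - q/2$ we get $x^j - 1 \in (-1, -q/2)$, so $q/(x^j - 1)$ is negative. The inequality $x^j < 1 - q/2$ rearranges to $q/(x^j-1) > -2$, i.e. $y_1 > -1$, while $x^j > 0 > 1 - q$ (using $q > 1$) rearranges to $q/(x^j-1) < -1$, i.e. $y_1 < 0$.

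I do not expect a real obstacle. The reason a single stretch suffices here --- unlike in Lemma~\ref{lem:Fq12}, where the analogous single operation leaves the wrong coordinate inside the good interval and a further series composition is needed --- is that stretching drives the $x$-coordinate of a point with $0 < x < 1$ down to $0^+$, and $q/(0^+ - 1) + 1 = 1 - q$ lies strictly in $(-1,0)$ exactly when $1 < q < 2$. The one item worth a sentence is that the series compositions making up the stretch never degenerate (no division by $q + w_1 + w_2 = 0$), which holds because the intermediate shifts have $x$-coordinate $x^i \neq 1$ for all $i \geq 1$.
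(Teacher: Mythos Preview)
Your proof is correct and essentially identical to the paper's own proof: both choose a $j$-stretch with $j$ large enough that $0<x^j<1-q/2$, obtaining $(x_1,y_1)=(x^j,\,q/(x^j-1)+1)$, and then verify $-1<y_1<0$ from the bounds $1-q<x^j<1-q/2$. Your added remarks about non-degeneracy of the series compositions and the comparison with Lemma~\ref{lem:Fq12} are accurate but not needed for the argument.
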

\begin{proof}Let $q=(x-1)(y-1)$.
Note that $1-q/2>0$ since $q<2$.
Let 
$j$ be a sufficiently large integer  that 
$0<x^j < 1-q/2$.
Note that $1-q<0$ so
$1-q < x^j < 1-q/2$.
Implement $(x_1,y_1)$ by $j$-stretch from the point~$(x,y)$ so
that $y_1=q/(x^j-1)+1$. Note that  $-1<y_1<0$.  \end{proof}
 
\begin{lemma}
\label{lem:Eq2}
Suppose $(x,y)$ is a point satisfying $0<x<1$ and $y<-1$  
for which $q=(x-1)(y-1)$ is not an integer.
Suppose $2<q<4$. 
Then $(x,y)$ can be used to implement a point $(x',y')$ with $x'<0$.
\end{lemma}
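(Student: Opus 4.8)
The plan is to run the argument of Lemma~\ref{lem:F}, but with the complete-graph (chromatic-polynomial) gadget $\Gamma_n$ there replaced by a gadget built from the Petersen graph and the flow polynomial $F$ of Section~\ref{sec:flow}. Write $q=(x-1)(y-1)$; the hypotheses give $q>0$ and $2<q<4$. The first reduction: by Equation~(\ref{eq:implement}), to implement a point $(x',y')$ with $x'<0$ it is enough to produce a graph $\Upsilon$ with terminals $s,t$ and an edge-weight function whose implemented weight $w^{*}=qZ_{st}(\Upsilon)/Z_{s|t}(\Upsilon)$ satisfies $w^{*}\in(-q,0)$, since then $x'=q/w^{*}+1<0$.

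For the gadget, let $P$ be the Petersen graph, let $e_0=st$ be an edge of $P$, take $\Upsilon=P\setminus e_0$ with the endpoints of $e_0$ as the terminals $s,t$, and give every edge of $\Upsilon$ the ``flow weight'' $-q$. Re-inserting an edge of weight $-q$ between $s$ and $t$ recovers $P$ with all edges of weight $-q$, while identifying $s$ with $t$ recovers the contraction $P/e_0$ with all edges of weight $-q$; from the definition of $Z$ these give
\begin{align*}
Z_{st}(\Upsilon)+Z_{s|t}(\Upsilon)&=Z(P\setminus e_0;q,-q),\\
qZ_{st}(\Upsilon)+Z_{s|t}(\Upsilon)&=qZ(P/e_0;q,-q).
\end{align*}
Solving for $Z_{st}(\Upsilon)$ and $Z_{s|t}(\Upsilon)$, substituting the polynomial identity $Z(G;q,-q)=(-1)^{|E(G)|}q^{|V(G)|}F(G;q)$ (Fact~\ref{fact:Flow}, extended from positive-integer $q$ to all $q$ because both sides are polynomials), and using the flow deletion--contraction rule $F(P;q)=F(P/e_0;q)-F(P\setminus e_0;q)$, one gets
\[
w^{*}=\frac{q\,F(P;q)}{(q-1)\,F(P/e_0;q)-q\,F(P;q)}.
\]
Feeding in the known factorisation $F(P;q)=(q-1)(q-2)(q-3)(q-4)(q^{2}-5q+10)$ together with the companion $F(P\setminus e_0;q)=(q-1)(q-2)(q-3)(q^{2}-5q+10)$ --- which follows by suppressing the two degree-$2$ vertices of $P\setminus e_0$ and noting that the resulting cubic graph is non-bipartite, so its flow polynomial has roots exactly $1,2,3$, while $q^{2}-5q+10>0$ for every real $q$ --- the expression collapses to a simple rational function of $q$ (in fact $w^{*}=q(q-4)/3$), which lies in $(-q,0)$ exactly when $1<q<4$, hence throughout $(2,4)$. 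The only point of $(2,4)$ at which the construction degenerates (all the polynomials above vanishing) is the integer $q=3$, excluded by hypothesis.

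It remains to handle implementability, as in Lemma~\ref{lem:F}: the weight $-q$, i.e., the point $(0,1-q)$, need not be implementable from $(x,y)$ directly, but a weight arbitrarily close to $-q$ is. A $j$-stretch of $(x,y)$ produces the point $\bigl(x^{j},\,q/(x^{j}-1)+1\bigr)$, and since $0<x<1$ we have $x^{j}\to0^{+}$, so this point tends to $(0,1-q)$, i.e., the weight tends to $-q$. Replacing each edge of $\Upsilon$ by such a $j$-stretch gives a gadget of size polynomial in $j$; since $Z_{st}(\Upsilon;q,\cdot)/Z_{s|t}(\Upsilon;q,\cdot)$ is continuous in the common edge weight near $-q$ (legitimate because $Z_{s|t}(\Upsilon;q,-q)\neq0$ when $q\in(2,4)\setminus\{3\}$) and $(-q,0)$ is open, for $j$ large enough the implemented weight is still in $(-q,0)$, so the gadget implements a point with $x'<0$.

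The main obstacle is the middle step: controlling the flow polynomials of the Petersen graph and of the Petersen graph minus an edge precisely enough to pin down the sign and magnitude of $w^{*}$ on all of $(2,4)$ --- this is also exactly where the hypotheses ``$2<q<4$'' and ``$q$ not an integer'' are used. The remaining ingredients --- the reduction, the two linear relations between $Z_{st}$ and $Z_{s|t}$, and the continuity-plus-stretch argument --- are routine and mirror the proof of Lemma~\ref{lem:F}.
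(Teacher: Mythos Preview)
Your approach is exactly the paper's: use the Petersen graph with edge weight $-q$ (approximated via long stretches from $(x,y)$) as the gadget, express the implemented weight in terms of flow polynomials, and check that it lands in $(-q,0)$. The two linear relations, the deletion--contraction identity $F(P)=F(P/e_0)-F(P\setminus e_0)$, and the continuity/stretch argument are all correct, and your formula
\[
w^{*}=\frac{q\,F(P;q)}{(q-1)\,F(P/e_0;q)-q\,F(P;q)}
\]
is equivalent (after substituting $F(P/e_0)=F(P)+F(P\setminus e_0)$) to the paper's
\[
w(q)=\frac{q\,F(P;q)}{(q-1)\,F(P\setminus e_0;q)-F(P;q)}.
\]

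However, your factorisation of $F(P\setminus e_0;q)$ is wrong. The paper computes
\[
F(P\setminus e_0;q)=q^{5}-12q^{4}+58q^{3}-138q^{2}+157q-66=(q-1)(q-2)(q-3)(q^{2}-6q+11),
\]
not $(q-1)(q-2)(q-3)(q^{2}-5q+10)$. The justification you offer (``suppress the two degree-$2$ vertices and note the resulting cubic graph is non-bipartite, so its flow polynomial has roots exactly $1,2,3$'') does not establish the quadratic factor at all, and non-bipartiteness is not what governs a root at $q=3$. Consequently your closed form $w^{*}=q(q-4)/3$ is incorrect; the actual value is
\[
w^{*}=\frac{q(q-4)(q^{2}-5q+10)}{2q^{2}-13q+29}.
\]
Fortunately the conclusion survives: $q^{2}-6q+11=(q-3)^{2}+2>0$ and $2q^{2}-13q+29>0$ everywhere, so for $q\in(2,4)\setminus\{3\}$ one still gets $w^{*}\in(-q,0)$ (equivalently, the paper checks directly that $F(P;q)$ and $F(P\setminus e_0;q)$ have opposite signs on $(2,3)$ and on $(3,4)$). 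So the gap is a miscomputed polynomial and an invalid justification for it, not a flaw in the strategy; replace the quadratic factor by $q^{2}-6q+11$ (or simply quote the explicit degree-$5$ polynomial and verify its sign, as the paper does) and the proof is complete.
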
 
\begin{proof}

Let $q=(x-1)(y-1)$.
As in the proof of Lemma~\ref{lem:F}, we
start by examining what points we can implement from the point $(x',y')=(0,1-q)$ and from points
nearby. 

Suppose that $G$ is a graph  
which contains the edge $(s,t)$. Let $\Gamma = G - (s,t)$.
Following the approach of
Lemma~\ref{lem:F},
let 
\begin{equation}
\label{defneww} 
w(q) = 
\frac{q Z_{st}(\Gamma;q, -q)}{Z_{s|t}(\Gamma;q, -q)},
\end{equation}
which is the weight implemented by $\Gamma$ with edge weight~$-q$.

Then, using similar reasoning to the derivation of (\ref{eq:first}),
\begin{align}
Z(G;q,-q) &= Z_{st}(\Gamma;q,-q)(1-q) + \tfrac1q Z_{s|t}(\Gamma;q,-q)(q-q) \nonumber\\
                   &= Z_{st}(\Gamma;q,-q)(1-q) . \label{eq:second}
\end{align}

Also, 
\begin{align*}
Z_{s|t}(\Gamma;q,-q) &= Z(\Gamma;q,-q) - Z_{st}(\Gamma;q,-q)\\
                                       &= Z(\Gamma;q,-q) + Z(G;q,-q) /(q-1).
                                       \end{align*}
                                       
Thus,  we can use Fact~\ref{fact:Flow} to see that
\begin{align*}
w(q) &= \frac{q Z_{st}(\Gamma;q,-q)}
{Z_{s|t}(\Gamma;q,-q)}     \\                  
&= -q \left(\frac{Z(G;q,-q)/(q-1)}
{Z(\Gamma;q,-q) + Z(G;q,-q) /(q-1)}\right)\\
&= -q \left(\frac{Z(G;q,-q) }
{(q-1)Z(\Gamma;q,-q) + Z(G;q,-q)  }\right) \\
&= -q   \left( \frac{F(G;q) }
{ F(G;q) - (q-1)  F(\Gamma;q) }\right)
.
\end{align*}

First, suppose $2<q<3$.
Following the reasoning in Lemma~\ref{lem:F}, we will show below that,
for a suitable $G$,
$F(G;q)>0$ and $F(\Gamma;q)<0$. Together, these imply that the denominator 
$ F(G;q) - (q-1)  F(\Gamma;q)$ is positive and also that it is larger than the numerator~$F(G;q)$.
Thus, $w(q)<0$ and $w(q)>-q$.
(It is our goal to implement a $\gamma'$ in the range $-q<\gamma'<0$ since, 
for this $\gamma'$, $q/\gamma'+1<0$, so the corresponding $x$-coordinate is less than~$0$.)

By continuity, there is a positive~$\epsilon$ (which depends upon~$q$ 
and~$G$) such that,  
if
$|z- Z_{st}(\Gamma;q,-q)|\leq\epsilon$ 
and $|z'- Z_{s|t}(\Gamma;q,-q)|\leq\epsilon$, 
then
  $-q < \frac{q  z}{ z'}
 <0$. As in the proof of Lemma~\ref{lem:F}, we can show that, for a sufficiently small $\delta\in (0,1)$,
  $|Z_{st}(\Gamma;q,-q-\delta)- Z_{st}(\Gamma;q,-q)|\leq\epsilon$ 
and $|Z_{s|t}(\Gamma;q,-q-\delta)- Z_{s|t}(\Gamma;q,-q)|\leq\epsilon$.
Then we finish by implementing the weight~$-q-\delta$ from the given $(x,y)$ using a large stretch
so $$(x'',y'') = (x^k,q/(x^k-1)+1)= (  \delta/(q+\delta),1-q-\delta).$$

For $3<q<4$ the proof will be similar except that we will establish
$F(G;q)<0$ and $F(\Gamma;q)>0$ so that the denominator of the final expression for $w(q)$ is
negative and is larger in absolute value than the numerator.

To complete the proof, we must establish
that $F(G;q)$ and $F(\Gamma;q)$ have different signs.
Let  $G$ be the Petersen graph. 
Since $G$ is edge-transitive, the edge $(s,t)$ may be chosen arbitrarily.
It can be verified, e.g., using Maple, that
$$F(G,q) = q^6 - 15 q^5 + 95 q^4 - 325 q^3 + 624 q^2 - 620q + 240,$$
and
$$F(\Gamma,q) = q^5 - 12 q^4 + 58 q^3 -138 q^2  + 157 q - 66.$$
Now we note that $F(G;q)$ has four real zeroes at $q= 1,2,3,4$
and two complex zeroes, and $F(G;2.5)>0$.
Also, $F(\Gamma;q)$ has three real zeroes at $q=1,2,3$ and two complex zeroes, and $F(\Gamma;2.5)<0$.
\end{proof}

\begin{remark}
\label{rem:flow}
 
The construction used in the proof of Lemma~\ref{lem:Eq2} breaks down for $q>4$ because
$F(G;q)$ and $F(\Gamma;q)$ have  the same signs.
It is conceivable that the 
lemma could be proved for non-integer $q$ in the 
range $4<q<6$  by using a
generalised Petersen graph rather than a Petersen graph in the construction. 
Indeed, Jacobsen and Salas have shown \cite{JacobsenSalas}
that there are generalised Petersen graphs whose flow polynomials have roots between~$5$ and~$6$.
Given the current state of knowledge, we are pessimistic about the
prospects of proving the lemma for all $q>4$. 
Currently, it is an open question~\cite{JacobsenSalas} 
whether there is a  uniform upper bound~$Q$ for real zeros of arbitrary bridgeless graphs
(so that every bridgeless graph~$G$ would have $F(G;q)>0$ for all $q>Q$).
If so, then computing the sign of the flow polynomial will be trivial 
for $q>Q$, so computing the sign of the Tutte polynomial will also be trivial for $y<-Q+1$ along the $y$-axis.
If not, then it seems likely that the hardness construction can be extended.
(Thus, it doesn't seem to be possible to resolve all of the unresolved points in Region~F without solving
the open problem about flow polynomials.)
\end{remark}

\section{The main theorem}
\label{sec:mainthm}

This section is devoted to a formal statement of our results concerning the
complexity of $\signtutte(q,\gamma)$ and $\tutte(q,\gamma)$.  In what follows,
\#P-hardness is defined with respect to 
polynomial-time Turing reductions.
NP-hardness is defined by a many-one reduction from an NP-complete decision problem,
whose instance is a ``yes instance'' if the corresponding instance of $\signtutte(q,\gamma)$
has a positive sign, and a ``no instance'' otherwise.
In Figure~\ref{fig:one}, which is a pictorial representation of  our theorem, 
\#P-hard points are depicted in red,
NP-complete points are depicted in blue, and FP points are depicted in green.
Points depicted in white are unresolved.

Theorem~\ref{thm:main} gives 
a complete description of what we know about the complexity 
of $\signtuttep$ and $\tuttep$.  For consistency with existing work by 
a variety of authors, we classify the complexity in terms of the $(x,y)$
parameterisation.  Throughout, we maintain the connection
between the parametrisations $(q,\gamma)$ and $(x,y)$, so that always $\gamma=y-1$
and $q=(x-1)(y-1)$.

\setcounter{counter:save}{\value{theorem}}
\setcounter{theorem}{0} 
\begin{theorem} \label{thm:main}

The points in the $(x,y)$ plane are classified as follows.

\begin{itemize}
\item Region A: Points $(x,y)$ with $x\geq 0$ and $y\geq 0$.  
In this region, $\signtuttep\in\FP$ and $\tuttep\in\numPQ$. 
When $q=0$ we have the stronger $\tuttep\in\FP$.
 
\item Region B: Points $(x,y)$ with 
$\min(x,y) \leq -1$ and $\max(x,y)<0$.  
In this region $\signtuttep$ is $\numP$-hard, except at the point $(x,y)=(-1,-1)$, where
$\tuttep\in\FP$. 

\item Region C: Points $(x,y)$ with $x<-1$ and $y>1$.  In this region $\signtuttep$ is $\numP$-hard.

\item Region D: Points $(x,y)$ with $x>1$ and $y<-1$.  In this region $\signtuttep$ is $\numP$-hard.
  
\item Region E: Points $(x,y)$ with $x\leq -1$ and $0< y \leq 1$. 
Note that these points have $q\geq 0$.
When $q=0$, we have $\tuttep\in\FP$.
When $q\not=0$ is an integer, $\signtuttep\in\FP$ and $\tuttep\in\numPQ$.
When $q$ is a non-integer, $\signtuttep$ is $\numP$-hard, 
apart from the line segment with $x=-1$
and $11/27\leq y < 1$, which is unresolved.  

\item Region F: Points $(x,y)$ with $0< x \leq 1$ and $y\leq -1$. 
Once again, these points have $q\geq 0$.
When $q=0$, we have $\tuttep\in\FP$.
When $q\not=0$ is an integer, $\signtuttep\in\FP$ and $\tuttep\in\numPQ$.
When $q$ is a non-integer satisfying $0<q<4$, $\signtuttep$ is $\numP$-hard,
apart from the line segment with $y=-1$ and $11/27\leq x<1$, which is unresolved. 
Points with non-integer $q>4$ are also unresolved. 

\item The boundary between regions~B and~E: Points $(x,y)$ with $x\leq -1$ and $y=0$.
Note that $q\geq2$.
When $q$ is not an integer, i.e., $x$ is not an integer, $\signtuttep$ is $\numP$-hard.
At $(x,y)=(-1,0)$ we have $\tuttep\in\FP$, while 
at the rest of the points 
$(x,0)$, where $x$ is a negative integer,
$\signtuttep$ is $\NP$-complete and $\tuttep\in\numPQ$.

\item The boundary between regions~B and~F: Points $(x,y)$ with $x=0$ and $y\leq -1$.
 Note that $q\geq2$.
When $2<q<4$  
is not an integer, i.e., $-3<y<-1$ is not an integer, 
$\signtuttep$ is $\numP$-hard.
When $q>4$ is not an integer, i.e., $y<-3$ is not an integer,
the complexity of $\signtuttep$ and $\tuttep$ is unresolved.
At the points $(0,-2)$ and $(0,-3)$, $\signtuttep$ is $\NP$-complete and $\tuttep\in\numPQ$.
The complexity at the point $(0,-4)$ is unresolved.
At the rest of the points $(0,y)$, where $y\leq-5$ is a negative integer, 
$\signtuttep\in\FP$ and $\tuttep\in\numPQ$.

\item Region G: Points $(x,y)$ with $\max(|x|,|y|)<1$ and $q>32/27$.
In this region, $\signtuttep$ is $\numP$-hard. 

\item Region H: Points $(x,y)$ with $\max(|x|,|y|)<1$, $q\leq 32/27$ and $x<-2y-1$.
In this region, $\signtuttep$ is $\numP$-hard, apart from points with 
$q=1$, where $\tuttep\in\FP$. 
 
\item Region I: Points $(x,y)$ with $\max(|x|,|y|)<1$, $q\leq 32/27$ and $y<-2x-1$.
In this region, $\signtuttep$ is $\numP$-hard, apart from points with 
$q=1$,  where  $\tuttep\in\FP$.

\item Region J: Points $(x,y)$ with $-1\leq x < 0$ and $y\geq 1$. 
In this region, $\signtuttep\in\FP$ and $\tuttep\in\numPQ$.

\item Region K: Points $(x,y)$ with $x\geq 1$ and $-1\leq y < 0$. 
In this region, $\signtuttep\in\FP$ and $\tuttep\in\numPQ$.

\item Region L: Points $(x,y)$ with $0<x<1$ and $-x<y<0$.
In this region, $\signtuttep\in\FP$ and $\tuttep\in\numPQ$.

\item Region M: Points $(x,y)$ with $0<y<1$ and $-y<x<0$.
In this region, $\signtuttep\in\FP$ and $\tuttep\in\numPQ$.

\item The rest: There are some remaining unresolved points. These points (simultaneously) satisfy
all of the following inequalities:  $\max(|x|,|y|)<1$, $y<-x$, 
$q\leq 32/27$, $y\geq -2x-1$, $x\geq -2y-1$,     
and $q\neq 1$.
\end{itemize}

\end{theorem}
\setcounter{theorem}{\value{counter:save}}

\begin{proof}

The proof follows from the following lemmas, which appear in the rest of the paper.

\begin{itemize}
\item Region A:  Lemma~\ref{lem:upper}.

\item Region B:  Corollaries \ref{cor:B}--\ref{cor:BGhorizontalboundary} and Section~\ref{refsec}.

\item Region C:  Corollary~\ref{cor:CD}.

\item Region D:  Corollary~\ref{cor:CD}.
 
\item Region E:  Corollaries \ref{cor:F} and \ref{cor:GFboundary}, and Observation~\ref{obs:Fpos}. 

\item Region F:  Corollaries \ref{cor:E} and \ref{cor:GEboundary}, and Observation~\ref{obs:Epos}.

\item The boundary between regions~B and~E:  
Corollary \ref{cor:FBboundary} and Observation~\ref{obs:posBF}.

\item The boundary between regions~B and~F:  
Corollary \ref{cor:EBboundary} and Observation~\ref{obs:flows}.

\item Region G:  Corollary~\ref{cor:GIH}.

\item Region H:  Corollaries \ref{cor:GIH} and \ref{cor:IH}, and Section~\ref{sec:HI}.
 
\item Region I: Corollaries \ref{cor:GIH} and \ref{cor:IH}, and Section~\ref{sec:HI}.

\item Region J:  Corollary~\ref{cor:J}.

\item Region K:  Corollary~\ref{cor:K}.

\item Region L:  Corollary~\ref{cor:L}.

\item Region M:  Corollary~\ref{cor:M}.

\end{itemize}

\end{proof}

All \#P-hardness results are proved in the following Section~\ref{sec:red}. 
Tractability results and NP-completeness results
are proved in Section~\ref{sec:positive} where we also 
show that $\tutte(q,\gamma)$ is in $\numPQ$ for these points.

\section{\#P-hardness}
\label{sec:red} 
In this section and the next we use the following shorthand.
We say that a point $(x,y)$ is \#P-hard, NP-complete, or in FP,
if, for $\gamma=y-1$ and $q=(x-1)(y-1)$, the corresponding problem 
$\signtutte(q,\gamma)$ is \#P-hard, NP-complete, or in FP, respectively.

\subsection{Points in Region B}

\begin{corollary}
\label{cor:B} Suppose that $(x,y)$ is a point 
such that $\min(x,y) < -1$ and $\max(x,y)<0$. 
Then~$(x,y)$ is $\numP$-hard.
\end{corollary}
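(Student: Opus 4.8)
The plan is to obtain this as a corollary of Lemma~\ref{lem:signhardqbig} (the $q>1$ hardness lemma) by feeding it the edge weights produced by the four Region~B shift lemmas, Lemmas~\ref{lem:shiftB1}--\ref{lem:shiftB4}. First I would record that the hypothesis $q>1$ needed by Lemma~\ref{lem:signhardqbig} holds throughout the region: since $\max(x,y)<0$ we have $x<0$ and $y<0$, hence $x-1<-1$ and $y-1<-1$, and therefore $q=(x-1)(y-1)=|x-1|\,|y-1|>1$.

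Next I would carry out the case split matching Lemmas~\ref{lem:shiftB1}--\ref{lem:shiftB4} and check that it is exhaustive. If $x<-1$ then, as $\max(x,y)<0$, $y<0$, and according as $y<-1$, $y=-1$, or $-1<y<0$ we are in the situation of Lemma~\ref{lem:shiftB1}, Lemma~\ref{lem:shiftB2}, or Lemma~\ref{lem:shiftB3}. Otherwise $x\geq-1$; combined with $\max(x,y)<0$ this gives $-1\leq x<0$, and then $\min(x,y)<-1$ forces $y<-1$, which is exactly the hypothesis of Lemma~\ref{lem:shiftB4}. In each case the relevant lemma shows that $(x,y)$ can be used to implement a point $(x_1,y_1)$ with $y_1\in(-1,0)$ and a point $(x_2,y_2)$ with $y_2\notin[-1,1]$. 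It is worth noting that the strict inequality $\min(x,y)<-1$ is what both excludes the special point $(-1,-1)$ and, in the last case, guarantees $y<-1$ rather than merely $y\leq-1$.

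Finally I would translate into edge-weight coordinates and invoke Lemma~\ref{lem:signhardqbig}: setting $\gamma_1=y_1-1\in(-2,-1)$ and $\gamma_2=y_2-1\notin[-2,0]$, that lemma gives that $\signtutte(q;\gamma_1,\gamma_2)$ is $\numP$-hard. Since $\gamma_1$ and $\gamma_2$ are each implemented from the single weight $\gamma=y-1$ by a gadget of constant size, the substitution described around Equation~(\ref{eq:shift}) (legitimate because $q\neq0$, and with a scalar factor that is a nonzero rational whose value---and hence sign---is computable) yields a polynomial-time Turing reduction from $\signtutte(q;\gamma_1,\gamma_2)$ to $\signtutte(q;\gamma)$, so $(x,y)$ is $\numP$-hard. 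I do not expect a genuine obstacle here: all the real content is in Lemma~\ref{lem:signhardqbig} and Lemmas~\ref{lem:shiftB1}--\ref{lem:shiftB4}, and the only things that need care are the exhaustiveness of the case analysis and the dictionary $y_1\in(-1,0)\Leftrightarrow\gamma_1\in(-2,-1)$, $y_2\notin[-1,1]\Leftrightarrow\gamma_2\notin[-2,0]$ relating Region-B coordinates to the weights required by Lemma~\ref{lem:signhardqbig}.
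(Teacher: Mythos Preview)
Your proposal is correct and matches the paper's own proof, which simply notes $q>1$ and cites Lemmas~\ref{lem:signhardqbig} and \ref{lem:shiftB1}--\ref{lem:shiftB4}. You have unpacked the implicit case analysis and the translation between $(x,y)$ and $(q,\gamma)$ coordinates that the paper leaves to the reader, but the underlying argument is identical.
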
 
 
\begin{proof}
Note that $q=(x-1)(y-1)>1$. The corollary follows from Lemmas~\ref{lem:signhardqbig}, \ref{lem:shiftB1},
 \ref{lem:shiftB2}, \ref{lem:shiftB3}, and \ref{lem:shiftB4}.\end{proof} 
 
\begin{corollary}
\label{cor:BGverticalboundary}
Suppose that $(x,y)$ is a point satisfying $x=-1$ and $-1<y<0$.
Then $(x,y)$ is $\numP$-hard. 
\end{corollary}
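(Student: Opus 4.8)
The plan is to reduce to Lemma~\ref{lem:signhardqbig}. With $x=-1$ we have $q=(x-1)(y-1)=2(1-y)$, and since $-1<y<0$ this gives $q\in(2,4)$, so in particular $q>1$, as that lemma requires. Lemma~\ref{lem:signhardqbig} asks for two edge weights at this fixed~$q$: a weight $\gamma_1\in(-2,-1)$ and a weight $\gamma_2\notin[-2,0]$, equivalently (in $(x,y)$-coordinates) a point whose $y$-coordinate lies in $(-1,0)$ and a point whose $y$-coordinate lies outside $[-1,1]$. The first is immediate: the point $(x,y)=(-1,y)$ itself has $y\in(-1,0)$, so we may take $\gamma_1=y-1$, which is realised by a single edge. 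Hence the only real work is to implement from $(-1,y)$ some point $(x_2,y_2)$ with $y_2\notin[-1,1]$.

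The obstacle is that $(-1,y)$ lies exactly on the line $x=-1$, which is degenerate for series composition / $k$-stretch and which is excluded (by the strict inequality $x<-1$) from Lemmas~\ref{lem:xlefttoyup} and~\ref{lem:shiftB3}, the natural tools for pushing a point on the left up to $y>1$ or over to $y\in(-1,0)$. The fix is a single preliminary parallel step: a $2$-thickening of $(-1,y)$ yields the point $(x',y')$ with $y'=y^2\in(0,1)$ and $x'=1+q/(y^2-1)$; since $y^2-1\in(-1,0)$ and $q>2$, one checks that $q/(y^2-1)<-q<-2$, so $x'<-1$. Now $x'<-1$ and $q>0$, so Lemma~\ref{lem:xlefttoyup} applies to $(x',y')$ and produces a point $(x_2,y_2)$ with $y_2>1$, whence $\gamma_2=y_2-1>0\notin[-2,0]$, as needed.

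Finally, since implementations (shifts) compose (Section~\ref{sec:shiftdef}), the singleton weight set $\{y-1\}$ available at $(x,y)=(-1,y)$ implements both $\gamma_1=y-1$ and $\gamma_2$. Replacing, in an instance of $\signtutte(q;\gamma_1,\gamma_2)$, each $\gamma_2$-edge by the (constant-size) gadget above alters $Z(G;q,\bgamma)$ only by a fixed, explicitly computable, nonzero rational factor via $(\ref{eq:shift})$, hence changes the sign in a way we can undo; this gives a polynomial-time Turing reduction from $\signtutte(q;\gamma_1,\gamma_2)$ to $\signtutte(q;y-1)$, and Lemma~\ref{lem:signhardqbig} then makes $(x,y)$ $\numP$-hard. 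The only care needed is the routine bookkeeping of the sign of the shift factor; the one genuine idea is the preliminary thickening that moves off the degenerate line $x=-1$.
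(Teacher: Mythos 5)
Your proof is correct. The paper instead applies a $3$-thickening to $(-1,y)$, yielding a point $(x',y')=\bigl(\tfrac{-1+y+y^2}{1+y+y^2},y^3\bigr)$ with $x'<-1$ and $-1<y'<0$, which lands squarely in Region~B; it then invokes Corollary~\ref{cor:B}, which in turn bundles Lemma~\ref{lem:signhardqbig} with the shift lemmas for Region~B. You keep the original point as the source of $\gamma_1=y-1\in(-2,-1)$ and use a $2$-thickening (to escape the degenerate line $x=-1$) followed by Lemma~\ref{lem:xlefttoyup} to produce a $\gamma_2\notin[-2,0]$, then call Lemma~\ref{lem:signhardqbig} directly. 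The two routes are close in spirit --- both hinge on a preliminary thickening to move off $x=-1$ and both ultimately rest on Lemma~\ref{lem:signhardqbig} plus Lemma~\ref{lem:xlefttoyup} --- but yours is marginally more direct (it skips the detour through Corollary~\ref{cor:B} and only needs a $2$-thickening), whereas the paper's version reuses an already-established corollary, which keeps the bookkeeping centralized. Your sign-tracking remark about the shift factor in~(\ref{eq:shift}) is also fine: the factor is a known nonzero rational of the fixed $q$ and the gadget, so it does not obstruct the Turing reduction.
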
 
\begin{proof}
A $3$-thickening from $(x,y)$ implements the point 
$$(x',y') = \left(\frac{-1+y+y^2}{1+y+y^2},y^3\right).$$
Now $x'<-1$ and $-1<y'<0$ so $(x',y')$ was
already shown to be \#P-hard by Corollary~\ref{cor:B}.
\end{proof}

Similarly, we have the following.
\begin{corollary}
\label{cor:BGhorizontalboundary}
Suppose that $(x,y)$ is a point satisfying $y=-1$ and $-1<x<0$.
Then $(x,y)$ is $\numP$-hard. 
\end{corollary}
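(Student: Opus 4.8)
The plan is to reduce this to Corollary~\ref{cor:BGverticalboundary}, which has just been proved, by exploiting the $x\leftrightarrow y$ duality of the Tutte polynomial. The statement to be proved (Corollary~\ref{cor:BGhorizontalboundary}) asserts that a point with $y=-1$ and $-1<x<0$ is $\numP$-hard, and this is exactly the ``transpose'' of Corollary~\ref{cor:BGverticalboundary}, which handles $x=-1$ and $-1<y<0$.

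First I would note that the point $(x,y)$ with $y=-1$, $-1<x<0$ has $q=(x-1)(y-1) = -2(x-1) = 2(1-x) \in (2,4)$, so in particular $q>1$ and the hypotheses we need later are satisfied. Then, mirroring the proof of Corollary~\ref{cor:BGverticalboundary}, I would apply a $3$-thickening to $(x,y)$. A $3$-thickening replaces each edge of weight $\gamma=y-1$ by three parallel edges of that weight, yielding a new edge-weight $\gamma'=(1+\gamma)^3-1 = y^3-1$; equivalently it implements the point $(x',y')$ with $y'=y^3$ and $x'$ determined by $(x'-1)(y'-1)=q$. With $y=-1$ this gives $y'=-1$ again (not what we want), so the correct move is to thicken in the \emph{other} coordinate. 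Concretely, I would instead apply a $3$-\emph{stretch} (series composition of three edges), which is the $x$-dual of the thickening: it implements the point $(x',y')$ with $x'=x^3$ and $(x'-1)(y'-1)=q$. Writing this out via the series-composition formula from Section~\ref{sec:shiftdef}, a $3$-stretch from $(x,-1)$ implements
\[
(x',y') = \left(x^3,\ \frac{-1+x+x^2}{1+x+x^2}\right).
\]
Since $-1<x<0$ we have $x^3 \in (-1,0)$, so $x'\in(-1,0)$; and a short calculation shows $y' = \frac{x^2+x-1}{x^2+x+1}<-1$ in this range (the numerator is negative, the denominator is positive but smaller in absolute value, because $(x^2+x-1)-(-(x^2+x+1)) = 2(x^2+x) = 2x(x+1)<0$ for $x\in(-1,0)$). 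Hence $(x',y')$ has $x'\in(-1,0)$ and $y'<-1$, so it lies in Region~B and was already shown to be $\numP$-hard by Corollary~\ref{cor:B}. Since shifts compose and hardness transfers back along an implementation (using $q\neq 0$, which holds as $q\in(2,4)$), the original point $(x,y)$ is $\numP$-hard.

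The only genuine obstacle is verifying the sign claim $y'<-1$ after the stretch; everything else is a one-line invocation of the machinery. I would double-check that the relevant denominator $1+x+x^2$ (equivalently $q+w_1+w_2$ in the series formula) is nonzero throughout $x\in(-1,0)$ — it is, since $1+x+x^2 = (x+\tfrac12)^2+\tfrac34 > 0$ — so the $3$-stretch is well-defined and the extra multiplicative factor it introduces in $Z$ is a nonzero, easily-computed rational, which is all that is needed for the Turing reduction to go through. An alternative, if one prefers to avoid even this computation, is simply to appeal directly to $x\leftrightarrow y$ duality: swapping $x$ and $y$ turns the hypothesis $y=-1$, $-1<x<0$ into $x=-1$, $-1<y<0$, and $T(G;x,y)=T(G;y,x)$ for the Tutte polynomial (equivalently, transposing the graph's cycle/cocycle roles), so Corollary~\ref{cor:BGverticalboundary} applies verbatim. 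I would present the explicit $3$-stretch argument as the primary proof since it parallels the preceding corollary and keeps the exposition self-contained.
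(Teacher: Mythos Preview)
Your primary argument via the $3$-stretch is correct and is exactly the approach the paper intends: it writes only ``Similarly, we have the following'', meaning one dualises the $3$-thickening of Corollary~\ref{cor:BGverticalboundary} to a $3$-stretch, landing at $(x',y')=\bigl(x^3,\frac{x^2+x-1}{x^2+x+1}\bigr)$ with $-1<x'<0$ and $y'<-1$, and then invokes Corollary~\ref{cor:B}. Your sign check for $y'<-1$ is fine.

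One correction to your proposed alternative: the identity $T(G;x,y)=T(G;y,x)$ is \emph{false} in general. The $x\leftrightarrow y$ duality of the Tutte polynomial is $T(G;x,y)=T(G^{*};y,x)$ for the \emph{dual} matroid (or planar dual graph) $G^{*}$; for arbitrary graphs there is no such $G^{*}$ within the class of graphs, so you cannot simply transpose the variables and quote Corollary~\ref{cor:BGverticalboundary}. The explicit $3$-stretch argument is therefore not just a stylistic choice but the actual proof.
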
  
 
\subsection{Points in Regions~G, H and I }

\begin{corollary}
\label{cor:GIH} Suppose that $(x,y)$ is a point 
satisfying $\max(|x|,|y|)<1$
  and $q=(x-1)(y-1)>1$.
Suppose that $(x,y)$ also satisfies at least one of the following conditions.
\begin{itemize}
\item $q>32/27$, or
\item $y<-1-2x$, or
\item $x<-1-2y$. 
\end{itemize}
Then~$(x,y)$ is $\numP$-hard.
\end{corollary} 
\begin{proof}
The corollary follows from Lemmas~\ref{lem:signhardqbig}, \ref{lem:3227g2}, \ref{lem:triangle1}, \ref{lem:triangle2}, 
and \ref{lem:3227g1}.
\end{proof}

\begin{corollary}
\label{cor:IH} Suppose that $(x,y)$ is a point 
satisfying $\max(|x|,|y|)<1$
and $q=(x-1)(y-1)<1$.
Suppose that $(x,y)$ also satisfies at least one of the following conditions.
\begin{itemize}
\item $y<-1-2x$, or
\item $x<-1-2y$. 
\end{itemize}
Then~$(x,y)$ is $\numP$-hard.
\end{corollary} 
\begin{proof}
Note that $q>0$.
The corollary follows from Lemmas~\ref{lem:signhardqsmall}, 
\ref{lem:triangle1} and \ref{lem:triangle2}.
We implement the point $(x_1,y_1)$ required by Lemma~\ref{lem:signhardqsmall}
by taking a $2$-thickening of~$(x,y)$ so $y_1 = y^2 \in (0,1)$.
\end{proof}

\subsection{Points in Regions~C and~D}

\begin{corollary}
\label{cor:CD}
Suppose $(x,y)$ is a point satisfying  
one of the following.
\begin{itemize}
\item $y>1$ and $x<-1$, or
\item $x>1$ and $y<-1$.
\end{itemize}
Then~$(x,y)$ is $\numP$-hard. 
\end{corollary}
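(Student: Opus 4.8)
The plan is to derive this from Lemma~\ref{lem:signhardqsmall} by a gadget substitution, exactly in the style of Corollary~\ref{cor:B}. The first observation is that in both of the stated cases $q=(x-1)(y-1)$ is a product of a positive and a negative factor, so $q<0$; in particular $q<1$ and $q\neq 0$. Hence Lemma~\ref{lem:signhardqsmall} applies and tells us that $\signtutte(q;\gamma_1,\gamma_2)$ is \#P-hard whenever $\gamma_1\in(-1,0)$ and $\gamma_2\notin[-2,0]$. All that remains is to realise such a pair of weights as implementations starting from the single available weight $\gamma=y-1$.

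The weight $\gamma_2$ requires no work: if $y>1$ then $\gamma=y-1>0$, while if $y<-1$ then $\gamma=y-1<-2$, so in either case $\gamma\notin[-2,0]$ and we may take $\gamma_2=\gamma$ with the trivial one-edge implementation. For $\gamma_1$ I would invoke Lemma~\ref{lem:stretchCD}, whose hypotheses are precisely the two alternatives in the present statement: from $(x,y)$ it implements a point $(x_1,y_1)$ with $y_1\in(0,1)$, i.e.\ an edge weight $\gamma_1=y_1-1\in(-1,0)$, as needed.

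To conclude, I would spell out the Turing reduction: given an instance $(G,\bgamma)$ of $\signtutte(q;\gamma_1,\gamma_2)$, replace each edge of weight $\gamma_1$ by the gadget $\Upsilon$ implementing $\gamma_1$ and leave each $\gamma_2$-edge as a single $\gamma$-edge, producing a graph $G'$ in which every edge carries weight $\gamma$. Iterating Equation~(\ref{eq:shift}) shows that $Z(G';q,\gamma)$ equals $Z(G;q,\bgamma)$ multiplied by a factor that is nonzero and whose sign is computable in polynomial time (it is a product of the series-composition normalising constants of the copies of $\Upsilon$, all of which are nonzero wherever the stretch used in Lemma~\ref{lem:stretchCD} is defined). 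So one call to a $\signtutte(q;\gamma)$ oracle recovers the sign of $Z(G;q,\bgamma)$, giving the reduction.

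There is no real obstacle here: the corollary is a bookkeeping consequence of Lemmas~\ref{lem:signhardqsmall} and~\ref{lem:stretchCD}, and the only facts needing a moment's attention are that $q<0$ (so Lemma~\ref{lem:signhardqsmall} is applicable) and that the gadget substitution lets one recover the sign, both immediate from the framework of Section~\ref{sec:shiftdef}.
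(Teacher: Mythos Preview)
Your argument is correct and matches the paper's own proof: both observe $q<0$, take $(x_2,y_2)=(x,y)$ itself to supply $\gamma_2\notin[-2,0]$, invoke Lemma~\ref{lem:stretchCD} for the weight $\gamma_1\in(-1,0)$, and then appeal to Lemma~\ref{lem:signhardqsmall}. Your additional paragraph spelling out the gadget substitution and the sign recovery via Equation~(\ref{eq:shift}) is correct but not strictly needed, since that mechanism is already built into the ``implementation'' framework of Section~\ref{sec:shiftdef}.
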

\begin{proof}
Note that $q<0$.
The corollary follows from Lemmas~\ref{lem:signhardqsmall}
and~\ref{lem:stretchCD}. The point $(x_2,y_2)$ required by Lemma~\ref{lem:signhardqsmall}
is just $(x,y)$ itself.
\end{proof}
 
\subsection{Points 
with non-integer $q$ in Region E and on the boundary between Regions B and~E}
\label{sec:F} 
  
Note that $q$ is an integer when $(x,y)=(-1,0)$ and
when  $y=1$. We will discuss these points in Section~\ref{sec:positive}.
  
\begin{corollary}
\label{cor:F}
Suppose $(x,y)$ is a point satisfying $x<-1$ and $0<y<1$.
Suppose that $q=(x-1)(y-1)>0$ is not an integer.
Then $(x,y)$  is $\numP$-hard.
\end{corollary}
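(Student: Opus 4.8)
The plan is a case analysis on $q$, reducing in each case to one of the two master hardness results --- Lemma~\ref{lem:signhardqbig} (valid for $q>1$) or Lemma~\ref{lem:signhardqsmall} (valid for $q<1$) --- after using the implementation lemmas of Section~\ref{sec:shift} to build the two edge weights those lemmas require out of the single weight $\gamma=y-1$ available at $(x,y)$. Since $x<-1$ and $0<y<1$ already force $q>0$, and $q$ is a non-integer by hypothesis, it suffices to treat the three ranges $0<q<1$, $1<q<2$, and $q>2$.

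When $0<q<1$ I would invoke Lemma~\ref{lem:signhardqsmall}, which needs a point with $y$-coordinate in $(0,1)$ and one with $y$-coordinate outside $[-1,1]$. The point $(x,y)$ itself supplies the first; for the second, Lemma~\ref{lem:xlefttoyup} (whose hypotheses $q>0$ and $x<-1$ hold) implements a point with $y'>1$. When $1<q<2$ I would instead invoke Lemma~\ref{lem:signhardqbig}, which needs a point with $y$-coordinate in $(-1,0)$ and one outside $[-1,1]$: Lemma~\ref{lem:Fq12} delivers the first (its hypotheses are exactly $x<-1$, $0<y<1$, $1<q<2$), and Lemma~\ref{lem:xlefttoyup} again delivers the second. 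In both cases the implemented weights are substituted into the graph supplied by the relevant master lemma, giving a Turing reduction of a \#P-complete problem to $\signtutte(q;\gamma)$.

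When $q>2$ (still non-integer) I would route through the regions already classified in this section rather than back to the master lemmas directly: Lemma~\ref{lem2:F} shows that $(x,y)$ implements some point $(x',y')\neq(-1,-1)$ lying in Region~B or in Region~G. Every point of Region~B other than $(-1,-1)$ is \#P-hard --- by Corollary~\ref{cor:B} for points with $\min(x,y)<-1$, and by Corollaries~\ref{cor:BGverticalboundary} and~\ref{cor:BGhorizontalboundary} for the two boundary segments $x=-1$ and $y=-1$ --- while every point of Region~G is \#P-hard by Corollary~\ref{cor:GIH} (there $q>32/27>1$). Because implementations compose and leave $q$ unchanged, the hardness of $(x',y')$ pulls back to $(x,y)$.

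I do not expect a genuinely hard step inside this corollary: all the heavy lifting sits in the lemmas being cited, in particular Lemma~\ref{lem:F}/Lemma~\ref{lem2:F} with their $K_n$-gadget evaluation of $w(q,n)$ and the binary-search arguments inside the master lemmas. The only thing needing care is bookkeeping --- verifying that the three ranges of $q$ are exhaustive over the non-integer reals greater than $0$, and that in each case the cited lemma's hypotheses (the interval containing $q$, non-integrality where demanded, and $0<y<1$) are actually met.
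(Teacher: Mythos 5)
Your proposal matches the paper's proof exactly: the same three-way split on $q\in(0,1)$, $q\in(1,2)$, $q>2$, with Lemma~\ref{lem:signhardqsmall} plus Lemma~\ref{lem:xlefttoyup} in the first case, Lemma~\ref{lem:signhardqbig} plus Lemmas~\ref{lem:Fq12} and~\ref{lem:xlefttoyup} in the second, and Lemma~\ref{lem2:F} routing into Regions~B and~G (via Corollaries~\ref{cor:B}, \ref{cor:BGverticalboundary}, \ref{cor:BGhorizontalboundary}, \ref{cor:GIH}) in the third. The bookkeeping you flag --- exhaustiveness of the ranges for non-integer $q>0$ and verification of the cited lemmas' hypotheses --- is exactly what the paper leaves implicit.
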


\begin{proof}
If $0<q<1$ then the result follows from Lemmas~\ref{lem:signhardqsmall} and \ref{lem:xlefttoyup}.
If $1<q<2$ then the result follows from Lemmas~\ref{lem:signhardqbig},
\ref{lem:Fq12} and  \ref{lem:xlefttoyup}.
So suppose $q>2$.
By Lemma~\ref{lem2:F}, the point $(x,y)$ can be used to implement 
a point, other than the special point $(-1,-1)$  that is in  Regions B or~G.
All of these points are known to be \#P-hard by
Corollaries~\ref{cor:B}, \ref{cor:BGverticalboundary}, \ref{cor:BGhorizontalboundary}
and~\ref{cor:GIH}.
\end{proof}

\begin{corollary}
\label{cor:FBboundary}
Consider a point $(x,y)$ satisfying $x<-1$
and $y=0$.
Suppose that $q=(x-1)(y-1)$ is not an integer.
Then $(x,y)$ is
$\numP$-hard.
\end{corollary}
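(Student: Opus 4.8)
The plan is to reduce to Corollary~\ref{cor:F} by applying a single stretch that moves the point off the line $y=0$ into the interior of Region~E\null. Since $y=0$, we have $q=(x-1)(y-1)=1-x$, and the hypothesis $x<-1$ forces $q>2$; by assumption $q$ is not an integer. So it is enough to implement, from $(x,0)$, a point $(x',y')$ with $x'<-1$ and $0<y'<1$: such a point has the same parameter $q$ (a stretch is a shift, and shifts preserve~$q$), so $q>2$ is still a non-integer, whence Corollary~\ref{cor:F} tells us $(x',y')$ is $\numP$-hard. Hardness then transfers back to $(x,0)$ in the usual way, since shifts compose and carry $\numP$-hardness along (the extra factor produced by each series composition is $q+w_1+w_2$, which is positive throughout the construction below, so the sign of the Tutte polynomial is preserved up to a known positive scalar).

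Concretely, I would take the $3$-stretch of $(x,0)$, i.e.\ the series composition of three edges each of weight $\gamma=y-1=-1$. By Section~\ref{sec:shiftdef} this implements the point $(x',y')=\bigl(x^3,\;q/(x^3-1)+1\bigr)$, and I claim it lies in the interior of Region~E\null. First, $x<-1$ gives $x^3<x<-1$, so $x'<-1$. Second, $x^3-1<x-1=-q<0$, hence $\lvert x^3-1\rvert>q$, and therefore $q/(x^3-1)\in(-1,0)$, giving $0<y'<1$. Thus $(x',y')$ satisfies the hypotheses of Corollary~\ref{cor:F} (with the same non-integer $q>2$), and that corollary applies.

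I do not expect any genuine obstacle here; the only thing that needs checking is the pair of inequalities above, which show that an \emph{odd} stretch of a point with $x<-1$ and $y=0$ lands strictly inside Region~E rather than on its boundary --- and, in particular, does not behave like the $2$-stretch, which would land in Region~A where the problem is tractable. Once that is verified, the result is immediate from Corollary~\ref{cor:F}.
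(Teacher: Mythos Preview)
Your proof is correct and essentially identical to the paper's own argument: both take the $3$-stretch of $(x,0)$ to land at $(x^3,\,q/(x^3-1)+1)=(x^3,\,(x+x^2)/(1+x+x^2))$, verify this lies in the interior of Region~E (i.e.\ $x'<-1$, $0<y'<1$), and then invoke Corollary~\ref{cor:F}. Your verification of the inequalities is slightly different in presentation but equivalent, and your extra remark about the positivity of the series-composition factors is a harmless elaboration (the paper's framework for shifts does not require it, but it is true here since $q>2$).
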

\begin{proof}
Note that $q=(x-1)(0-1)=1-x>0$. 
Let 
$$(x',y') = \left(x^3,\frac{x+x^2}{1+x+x^2}\right)$$
be the point implemented by a $3$-stretch from $(x,y)$.
Note that $x+x^2>0$ so $0<y'<1$. Also, $x'<-1$.
Thus, $(x',y')$ is \#P-hard by Corollary~\ref{cor:F}.
\end{proof}

\begin{corollary}
\label{cor:GFboundary}
Suppose that $(x,y)$ is a point satisfying $x=-1$ and $0<y<11/27$.
Then $(x,y)$ is $\numP$-hard.
\end{corollary}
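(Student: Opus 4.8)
The plan is to reduce this boundary point to a point in Region~G, which is already known to be \#P-hard by Corollary~\ref{cor:GIH}. Write $q=(x-1)(y-1)=2(1-y)$. Since $0<y<11/27$ we have $q\in(32/27,2)$; in particular $q>32/27$, which is precisely the $q$-condition defining Region~G\null. The only obstacle to applying Corollary~\ref{cor:GIH} directly is that $x=-1$ lies on the boundary $|x|=1$ rather than inside the open box $\max(|x|,|y|)<1$; so it suffices to move $(x,y)$ to a nearby point with $\max(|x'|,|y'|)<1$ while preserving~$q$ (thickenings preserve $q$ automatically).

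First I would apply a $2$-thickening to $(x,y)=(-1,y)$. This is a parallel composition of two edges of weight $\gamma=y-1$, so, as recalled in Section~\ref{sec:shiftdef}, it implements the point $(x',y')$ with $y'=y^2$ and $x'$ determined by $(x'-1)(y'-1)=q$; moreover the transfer factor in Equation~(\ref{eq:shift}) cancels for parallel composition, so the reduction is completely sign-clean. A one-line computation gives $x'=(y-1)/(y+1)$.

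Next I would check that $(x',y')$ indeed lies in Region~G\null. Since $0<y<1$ we have $y'=y^2\in(0,1)$; and $x'=(y-1)/(y+1)$ is negative with $|x'|=(1-y)/(1+y)<1$ (because $y>0$), so $\max(|x'|,|y'|)<1$. Combined with $q=(x'-1)(y'-1)=2(1-y)>32/27$ noted above, this places $(x',y')$ in Region~G, hence it is \#P-hard by Corollary~\ref{cor:GIH}. Since a $2$-thickening is a shift and \#P-hardness of the implemented point transfers back to $(x,y)$, we conclude that $(x,y)$ is \#P-hard.

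There is no genuine difficulty here; the proof is of the same one-step flavour as Corollaries~\ref{cor:BGverticalboundary} and~\ref{cor:FBboundary}, and the entire content is the elementary verification that the thickened point falls strictly inside Region~G (and in particular stays strictly away from the boundary line $x=-1$).
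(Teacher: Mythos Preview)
Your proposal is correct and is essentially identical to the paper's own proof: the paper also applies a $2$-thickening to $(x,y)=(-1,y)$, obtains $(x',y')=\big(\tfrac{-1+y}{1+y},\,y^2\big)$, verifies $-1<x'<0$ and $0<y'<1$ so that $(x',y')$ lies in Region~G, and then invokes Corollary~\ref{cor:GIH}.
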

\begin{proof}
Note that $q=(x-1)(y-1)>32/27$.
Implement $(x',y')$ by a $2$-thickening from $(x,y)$ so
$(x',y') = \big(\frac{-1+y}{1+y},y^2\big)$. Note that $-1<x'<0$ and $0<y'<1$ so
$(x',y')$ is in Region~G, and is \#P-hard by Corollary~\ref{cor:GIH}.
\end{proof}

\subsection{Points with non-integer $q$  in Region F and on the boundary between regions B and~F}
\label{sec:E}

Note that $q$ is an integer when $(x,y)=(0,-1)$ and when $x=1$. We will discuss these points in 
Section~\ref{sec:positive}.

 \begin{corollary}
\label{cor:E}
Suppose $(x,y)$ is a point satisfying $0<x<1$ and $y<-1$.
Suppose that $q=(x-1)(y-1)$ is not an integer.
Suppose $0<q<4$.
Then $(x,y)$  is $\numP$-hard.
\end{corollary}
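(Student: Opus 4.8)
The plan is to mirror the structure of Corollary~\ref{cor:F}, splitting the argument according to the value of $q=(x-1)(y-1)$ into the ranges $0<q<1$, $1<q<2$, and $2<q<4$, and in each case using the shift lemmas of Section~\ref{sec:shift} to move from the given point $(x,y)$ (which lies in Region~F, so $0<x<1$ and $y<-1$) to a point where $\signtutte(q;\gamma)$ has already been shown $\numP$-hard by one of Lemmas~\ref{lem:signhardqbig}, \ref{lem:signhardqsmall} or by a previous corollary. Throughout, note that $q=(x-1)(y-1)>0$ since both factors are negative, so the hardness lemmas for $q<1$ and $q>1$ are both potentially applicable.

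First I would dispose of the range $0<q<1$: here Lemma~\ref{lem:Eq01} lets us implement a point $(x_1,y_1)$ with $0<y_1<1$ from $(x,y)$, and then we can invoke Lemma~\ref{lem:signhardqsmall} (with $q<1$, $q\neq 0$), taking $\gamma_1=y_1-1\in(-1,0)$ and $\gamma_2=y-1\notin[-2,0]$ (indeed $\gamma_2<-2$). Next, for $1<q<2$, Lemma~\ref{lem:Eq12} implements a point $(x_1,y_1)$ with $-1<y_1<0$ from $(x,y)$, so $\gamma_1=y_1-1\in(-2,-1)$; together with $\gamma_2=y-1\notin[-2,0]$ and $q>1$, Lemma~\ref{lem:signhardqbig} gives $\numP$-hardness. (One must check that both $\gamma_1$ and $\gamma_2$ are implementable from the single point $(x,y)$, but $\gamma_2$ is just $(x,y)$ itself and $\gamma_1$ comes from Lemma~\ref{lem:Eq12}.)

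The remaining range $2<q<4$ is the substantive case and the main obstacle, since the direct reduction lemmas require an edge weight in $(-2,-1)$ or $(-1,0)$ and no short series/parallel shift from a Region~F point with $q>2$ obviously produces one. Here I would use Lemma~\ref{lem:Eq2}, which (using the Petersen-graph gadget) shows that from such $(x,y)$ one can implement a point $(x',y')$ with $x'<0$. I would then argue, exactly as in Lemma~\ref{lem2:F}, that after possibly composing $(x',y')$ in parallel with enough copies of $(x,y)$ to bring the second coordinate into $(-1,0)$, and noting $q\neq 2,3$ rules out the special point $(-1,-1)$, the resulting point lands in Region~B or Region~G (or on their boundaries), all of which are already known to be $\numP$-hard by Corollaries~\ref{cor:B}, \ref{cor:BGverticalboundary}, \ref{cor:BGhorizontalboundary} and~\ref{cor:GIH}. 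The delicate point is confirming that the parallel-composition trick keeps the first coordinate in a range consistent with membership in one of these resolved regions; this is a finite case analysis on where $(x',y')$ can fall, entirely analogous to the proof of Lemma~\ref{lem2:F}, so I would not expect genuine difficulty beyond bookkeeping.
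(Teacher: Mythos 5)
Your decomposition into $0<q<1$, $1<q<2$, and $2<q<4$ is exactly the paper's, and the first two cases are handled as the paper does: Lemma~\ref{lem:Eq01} plus Lemma~\ref{lem:signhardqsmall}, respectively Lemma~\ref{lem:Eq12} plus Lemma~\ref{lem:signhardqbig}.

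The $2<q<4$ case, however, has a genuine gap. You propose to ``compose $(x',y')$ in parallel with enough copies of $(x,y)$ to bring the second coordinate into $(-1,0)$,'' by analogy with Lemma~\ref{lem2:F}. That trick works in Lemma~\ref{lem2:F} precisely because there the base point satisfies $0<y<1$, so parallel compositions multiply the $y$-coordinate by factors $y^j\to 0$. Here the base point has $y<-1$: a parallel composition with $j$ copies of $(x,y)$ multiplies the $y$-coordinate by $y^j$ with $|y|^j\to\infty$, pushing the second coordinate \emph{away} from $(-1,0)$. So that step cannot be carried out. (A minor further slip: the special point $(-1,-1)$ has $q=4$, not $2$ or $3$, so citing ``$q\neq 2,3$'' does not rule it out; what rules it out is that $q<4$.)

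In fact no further composition is needed, and the paper does none. Once Lemma~\ref{lem:Eq2} produces $(x',y')$ with $x'<0$ (and $q=(x'-1)(y'-1)>2$, forcing $y'<1$), a glance at the region map shows $(x',y')$ must lie in Region~E, Region~B, Region~G, or one of the boundaries between them --- there is nowhere else in the half-plane $x'<0$ compatible with $q>2$. Each of those is already known \#P-hard: Region~E and its boundaries by Corollaries~\ref{cor:F}, \ref{cor:FBboundary}, \ref{cor:GFboundary}, and Regions~B, G and their boundaries by Corollaries~\ref{cor:B}, \ref{cor:BGverticalboundary}, \ref{cor:BGhorizontalboundary}, \ref{cor:GIH}. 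The only potential exceptions are the special point $(-1,-1)$ (excluded since $q<4$ and $q$ is non-integral) and the unresolved segment $x=-1$, $11/27\leq y<1$ in Region~E (excluded since $q>2$). Replacing your parallel-composition step with this direct classification, and in particular allowing Region~E as a permissible landing site, repairs the argument and reproduces the paper's proof.
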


\begin{proof} 
If $0<q<1$ then the result follows from Lemmas~\ref{lem:signhardqsmall} and \ref{lem:Eq01}.
If $1<q<2$ then the result follows from Lemmas~\ref{lem:signhardqbig} and
\ref{lem:Eq12}.
So suppose $2<q<4$.
By Lemma~\ref{lem:Eq2}, $(x,y)$ can be used to implement a point
$(x',y')$ with $x'<0$. The point $(x',y')$ is in one of regions E, B, or G. 
It is not the special point $(-1,-1)$ from Region~B, since $q$ is not an integer.
It is not the unresolved line segment from Region~E, since $q>2$.
Thus, $(x',y')$ is \#P-hard by Corollaries~\ref{cor:B}, \ref{cor:BGverticalboundary}, \ref{cor:BGhorizontalboundary},
\ref{cor:GIH}, \ref{cor:F}, \ref{cor:FBboundary},
and \ref{cor:GFboundary}.
\end{proof} 
 
As we explained in Remark~\ref{rem:flow}, it seems possible that Corollary~\ref{cor:E} could be extended, say
up to $q=6$, by doing more complicated calculations in the proof of Lemma~\ref{lem:Eq2}, analysing the flow
polynomial of generalised Petersen graphs, rather than just the flow polynomial of the Petersen graph.
However, our lack of knowledge about the zeroes of the flow polynomial seems to be a barrier
to extending the lemma to cover all~$q$.

 \begin{corollary}
\label{cor:EBboundary}
Consider a point $(x,y)$ satisfying $x=0$ and $y<-1$.
Suppose that $q=(x-1)(y-1)$ is not an integer and that $q<4$.
Then $(x,y)$ is $\numP$-hard.
\end{corollary}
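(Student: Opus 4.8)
The plan is to reduce to Corollary~\ref{cor:E}, which already gives $\numP$-hardness for every point $(x',y')$ with $0<x'<1$, $y'<-1$ and non-integer $q'=(x'-1)(y'-1)$ satisfying $0<q'<4$. So it suffices to show that such a point can be implemented from the given point $(0,y)$. First I would record the basic arithmetic on the line $x=0$: there $q=(0-1)(y-1)=1-y$, so the hypothesis $y<-1$ automatically gives $q>2$, and combined with the assumption $q<4$ we have $2<q<4$ throughout.

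Next I would apply a $3$-thickening (parallel composition of three copies of the edge) to $(x,y)=(0,y)$. By the identities in Section~\ref{sec:shiftdef} this preserves $q$ and shifts $(x,y)$ to $(x',y')$ with $y'=y^3$ and $x'=q/(y^3-1)+1$. Since $y<-1$ we have $y^3<-1$, so $y'<-1$, and $q'=q$ is still non-integer and lies in $(2,4)$. The only point that needs a short verification is that $0<x'<1$, which is where placing the image strictly inside Region~F (rather than on its unresolved boundary segment $y=-1$) matters. From $y<-1$ we get $y^3<y$, hence $y^3-1<-2<0$, which gives $x'<1$ immediately; and $y^3<y$ rewrites as $q=1-y<1-y^3=-(y^3-1)$, i.e.\ $q/(y^3-1)>-1$, which gives $x'>0$.

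Thus $(x',y')$ satisfies $0<x'<1$, $y'<-1$ and has non-integer $q'\in(2,4)$, so it is $\numP$-hard by Corollary~\ref{cor:E}; since implementing an edge weight is a Turing reduction (Section~\ref{sec:shiftdef}), $(0,y)$ is $\numP$-hard as well. I do not expect a genuine obstacle here: the argument is just one thickening step plus the elementary inequality $y^3<y$ valid for $y<-1$, which is exactly what is needed to keep $x'$ strictly between $0$ and~$1$ and hence land inside the region already handled by Corollary~\ref{cor:E}.
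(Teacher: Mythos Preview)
Your proposal is correct and follows exactly the same approach as the paper: apply a $3$-thickening to $(0,y)$ to land at $(x',y')=(q/(y^3-1)+1,\,y^3)$, verify $0<x'<1$ and $y'<-1$, and invoke Corollary~\ref{cor:E}. The paper presents $x'$ in the simplified form $x'=(y+y^2)/(1+y+y^2)$ and reads off $0<x'<1$ directly from $y+y^2>0$, but this is merely a cosmetic difference from your use of the general formula and the inequality $y^3<y$.
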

\begin{proof} 
Note that  $2<q<4$.
Let 
$$(x',y') = \left(\frac{y+y^2}{1+y+y^2},y^3\right)$$
be the point implemented by a $3$-thickening from $(x,y)$.
Note that $y+y^2>0$ so $0<x'<1$. Also, $y'<-1$.
Thus, $(x',y')$ is \#P-hard by Corollary~\ref{cor:E}.
\end{proof}
 
 \begin{corollary}
\label{cor:GEboundary}
Suppose that $(x,y)$ is a point satisfying $0<x<11/27$ and $y=-1$.
Then $(x,y)$ is $\numP$-hard.
\end{corollary}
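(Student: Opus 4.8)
The plan is to mirror the proof of Corollary~\ref{cor:GFboundary}, but with the roles of the two coordinates interchanged: since here the fixed coordinate is $y=-1$ rather than $x=-1$, I would use a stretch in place of a thickening. First I would record that $q=(x-1)(y-1)=(x-1)(-2)=2(1-x)$; from $0<x<11/27$ we get $1-x>16/27$, hence $q>32/27$, and also $q<2$ since $1-x<1$. In particular $q\neq 4$, so the series composition used below is well defined.

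Next I would implement $(x',y')$ by a $2$-stretch from $(x,y)$. By the series-composition formulas this gives $(x',y')=(x^2,\,q/(x^2-1)+1)$, and substituting $q=-2(x-1)$ simplifies the second coordinate to $y'=(x-1)/(x+1)$. Since $0<x<1$ we have $0<x^2<1$; and since $x-1<0<x+1$ with $|x-1|=1-x<1+x$, we get $-1<y'<0$. Hence $\max(|x'|,|y'|)<1$, while the stretch preserves $q$, so $q>32/27$. Therefore $(x',y')$ lies in Region~G. Because shifts compose and Turing reductions transfer hardness backwards along a shift, it remains only to invoke the $\numP$-hardness of points in Region~G, which is Corollary~\ref{cor:GIH} (its hypothesis $\max(|x|,|y|)<1$, $q>1$, $q>32/27$ is met). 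There is essentially no obstacle; the only things worth checking are that the $2$-stretch is legitimate (ensured by $q\neq 4$) and that $(x',y')$ is not one of the special or unresolved points of the classification — it is not, since Region~G contains none.
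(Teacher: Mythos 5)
Your proposal is correct and follows the paper's own argument essentially verbatim: a $2$-stretch yields $(x',y')=\bigl(x^2,(x-1)/(x+1)\bigr)$, which lies in Region~G since $\max(|x'|,|y'|)<1$ and $q>32/27$, and the conclusion follows from Corollary~\ref{cor:GIH}. The added remark about $q\neq 4$ (equivalently $x\neq -1$, so the series-composition denominator $q+2\gamma$ is nonzero) is a harmless extra check that the paper leaves implicit.
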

\begin{proof}  
Note that $q=(x-1)(y-1)>32/27$.
Implement $(x',y')$ by a $2$-stretch from $(x,y)$ so
$(x',y') = \big(x^2,\frac{-1+x}{1+x}\big)$. Note that $0<x'<1$ and $-1<y'<0$   so
$(x',y')$ is in Region~G, and is \#P-hard by Corollary~\ref{cor:GIH}.
\end{proof}

\section{Tractability results and NP-completeness results}  
\label{sec:positive}
 
As we mentioned earlier,  
we say that a point $(x,y)$ is in FP
if $\signtutte(q,\gamma)$
can be solved in polynomial time,
where $q=(x-1)(y-1)$ and $\gamma=y-1$.
These points are depicted in green in Figure~\ref{fig:one}.
For each point in FP, and also for the points that are NP-complete (depicted in blue), we show that
$\tutte(q,\gamma)$ is  in $\numPQ$. Thus,
$\tutte(q,\gamma)$ can be efficiently approximated using an NP oracle.  
 
\subsection{Points in Region A}

The following lemma is implicit in the work of Tutte~\cite{Tutte54,Tutte84}.
The connection is explained explicitly in~\cite[Section 2.3]{tutteone}.

\begin{lemma}
\label{lem:upper} 
Suppose $(x,y)$ is a point satisfying $\min(x,y)\geq 0$. 
Let $q=(x-1)(y-1)$ and $\gamma=y-1$.
Then for every graph~$G$,
$Z(G;q,\gamma)>0$ so $\signtutte(q,\gamma)$ is in $\FP$.
Furthermore,  $\tutte(q,\gamma)$ is in $\numPQ$. 
In the case $q=0$, we have $Z(G;q,\gamma)=0$ and $\tutte(q,\gamma)$ is trivially
in $\FP$.
\end{lemma}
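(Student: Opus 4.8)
The plan is to pass from the random-cluster form to the classical $(x,y)$-form of the Tutte polynomial via~(\ref{eq:ZtoT}) and then to invoke Tutte's spanning-tree (activities) expansion of $T(G;x,y)$. First I would record that, when $x\neq 1$ and $y\neq 1$, identity~(\ref{eq:ZtoT}) gives $Z(G;q,\gamma)=(y-1)^{|V|}(x-1)^{\kappa(V,E)}\,T(G;x,y)$, and that the prefactor $(y-1)^{|V|}(x-1)^{\kappa(V,E)}$ is a rational number computable in polynomial time from~$G$, since $|V|$ and $\kappa(V,E)$ are. The two degenerate lines $x=1$ and $y=1$ (where $q=0$, and $Z(G;q,\gamma)$ is trivial) I would treat directly. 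So everything reduces to understanding the sign and the arithmetic complexity of $T(G;x,y)$ for $x,y\ge 0$.

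Next I would invoke Tutte's theorem~\cite{Tutte54,Tutte84} (the connection to the present setting being spelled out in~\cite[Section~2.3]{tutteone}): fixing a linear order on $E$, for connected $G$ one has $T(G;x,y)=\sum_t x^{\iota(t)}y^{\varepsilon(t)}$, summed over spanning trees $t$, where $\iota(t)$ and $\varepsilon(t)$ are the internal and external activities of~$t$; for disconnected $G$, $T(G;x,y)$ is the product of these polynomials over the components. Consequently $T(G;x,y)$ is a polynomial in $x$ and $y$ with nonnegative integer coefficients, so $T(G;x,y)\ge 0$ for $x,y\ge 0$, and in fact $T(G;x,y)>0$ when $x>0$ and $y>0$ (any graph has a spanning forest, contributing a strictly positive monomial). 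Combined with the reduction above, this determines the sign of $Z(G;q,\gamma)$ as the polynomial-time-computable sign of the prefactor, placing $\signtutte(q,\gamma)$ in $\FP$.

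For membership in $\numPQ$ I would clear denominators in the activities expansion: writing $x=x_1/x_2$ and $y=y_1/y_2$ with $x_i,y_i$ positive integers, the quantity $x_2^{|V|-1}y_2^{|E|-|V|+1}\,T(G;x,y)$ (a product over components in the disconnected case) is a sum of positive integers, each of polynomially many bits, and hence lies in $\numP$ via a nondeterministic machine that guesses a subset of $E$, checks that it is a spanning forest, computes its activities, and then guesses a witness bounded by the value of the corresponding monomial. Multiplying by the polynomial-time rational prefactor from the first step then exhibits $Z(G;q,\gamma)$ in the form $a(G)/b(G)$ with $a\in\numP$ and $b\in\FP$, so $\tutte(q,\gamma)\in\numPQ$.

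The work here is entirely bookkeeping rather than any new idea; the one point that needs genuine care is that $\numPQ$ is a class of nonnegative-valued functions whereas the prefactor $(y-1)^{|V|}(x-1)^{\kappa(V,E)}$ changes sign across Region~A, so one must split off its (polynomial-time computable) sign and run the $\numP$ argument on $|Z(G;q,\gamma)|$, and likewise dispose of the degenerate loci $x=1$, $y=1$ (and the axes $x=0$, $y=0$, where $T$ can vanish) as explicit special cases. That sign-and-degeneracy bookkeeping is the only part of the argument that takes care; the heart of it is just the positivity of $T$ coming from Tutte's expansion.
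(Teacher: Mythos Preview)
Your approach via Tutte's spanning-tree activities expansion is exactly what the paper invokes by its references to~\cite{Tutte54,Tutte84} and~\cite[Section~2.3]{tutteone}. You are in fact more careful than the lemma's statement deserves: the strict inequality $Z>0$ is not literally true throughout Region~A (as you noticed, the prefactor $(y-1)^{|V|}(x-1)^{\kappa}$ changes sign, and there are degenerate loci), but the substantive $\FP$ and $\numPQ$ conclusions survive for the reasons you give.
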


\subsection{Points in Region B}
\label{refsec}

It is known \cite{JVW90} that 
$\tutte(4,-2)$ is in FP (so it is certainly in $\numPQ$). 
Thus, the point $(x,y)=(-1,-1)$ is in FP.
     
\subsection{Points with Integer $q$ in Region E}
\label{Sec:PosF}

The points  in Region~E have $x\leq -1$ and $0< y \leq 1$.
Thus, they have $q=(x-1)(y-1)\geq0$ and $\gamma=y-1$.  

First, if $y=1$ then $q=0$. We will handle this easy case below.  So, suppose $y<1$ so  $-1 < \gamma<0$.
Note that $q> 0$ so, since we restrict attention to integer~$q$, $q\geq 1$.  
Consider the Potts-model partition function for $G$ (see \cite[(2.7)]{sokal}).

$$Z_{\mathrm{Potts}}(G;q,\gamma) = \sum_{\sigma: V \rightarrow [q]} \prod_{e=(u,v) \in E}
\left(1 + \gamma \delta(\sigma(u),\sigma(v))
\right),$$
where $\delta$ is the Kronecker delta function defined by $\delta(a,b)=1$ if $a=b$
and $\delta(a,b)=0$ otherwise.
The following well-known fact is due to Fortuin and Kasteleyn (see~\cite[Theorem 2.3]{sokal}).
\begin{fact} \label{fact:Potts}
If $q\geq 1$ is an integer then
$Z_{\mathrm{Potts}}(G;q,\gamma) = Z(G;q,\gamma)$.
\end{fact}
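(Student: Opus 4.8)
The plan is to establish this (the Fortuin--Kasteleyn identity) by the standard edge-by-edge expansion of the product defining the Potts partition function. First I would fix a colouring $\sigma:V\to[q]$ and multiply out the product over edges: for each edge we choose either the term $1$ or the term $\gamma\,\delta(\sigma(u),\sigma(v))$, and recording the set $A\subseteq E$ of edges for which the second term is chosen gives
\[
\prod_{e=(u,v)\in E}\bigl(1+\gamma\,\delta(\sigma(u),\sigma(v))\bigr)
=\sum_{A\subseteq E}\gamma^{|A|}\prod_{e=(u,v)\in A}\delta(\sigma(u),\sigma(v)).
\]
Substituting this into the definition of $Z_{\mathrm{Potts}}(G;q,\gamma)$ and interchanging the two finite sums (over $\sigma$ and over $A$) yields
\[
Z_{\mathrm{Potts}}(G;q,\gamma)=\sum_{A\subseteq E}\gamma^{|A|}\sum_{\sigma:V\to[q]}\prod_{e=(u,v)\in A}\delta(\sigma(u),\sigma(v)).
\]

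The second step is to evaluate the inner sum for a fixed $A\subseteq E$. The product $\prod_{e=(u,v)\in A}\delta(\sigma(u),\sigma(v))$ equals $1$ exactly when $\sigma$ assigns the same colour to the two endpoints of every edge of $A$, i.e.\ when $\sigma$ is constant on each connected component of the graph $(V,A)$, and it equals $0$ otherwise. Hence the inner sum counts the colourings that are constant on every component of $(V,A)$, and since each of the $\kappa(V,A)$ components may be coloured independently with any of the $q$ colours, this count is $q^{\kappa(V,A)}$. This is the only step where integrality of $q\ge1$ is used: it guarantees that $[q]$ is a set of size exactly~$q$.

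Combining the two displays gives
\[
Z_{\mathrm{Potts}}(G;q,\gamma)=\sum_{A\subseteq E}q^{\kappa(V,A)}\gamma^{|A|},
\]
which is precisely the definition~(\ref{eq:Zdef}) of $Z(G;q,\gamma)$, as required. I do not expect any genuine obstacle here: the argument is entirely elementary, and the only mild care needed is in the bookkeeping of the edge expansion and in the component-counting identity, both of which are routine. (One could alternatively cite \cite[Theorem 2.3]{sokal} directly, since the statement is quoted from there, but the self-contained derivation above is short enough to include.)
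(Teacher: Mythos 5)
Your derivation is correct and is the standard Fortuin--Kasteleyn argument. The paper itself does not prove this fact — it simply cites \cite[Theorem 2.3]{sokal} — so there is no ``paper's proof'' to compare against, but your edge-by-edge expansion followed by the component-counting identity $\sum_{\sigma}\prod_{e\in A}\delta(\sigma(u),\sigma(v))=q^{\kappa(V,A)}$ is exactly the argument given in the cited source, and your remark that integrality of $q$ enters only in interpreting $[q]$ as a set of $q$ colours is the right thing to flag.
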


The following observation now follows from Fact~\ref{fact:Potts}.

\begin{observation}\label{obs:Fpos}
Let $(x,y)$ be a point with  $x\leq -1$ and $0< y \leq 1$.
Let $q=(x-1)(y-1)$ and $\gamma=y-1$.
Suppose that $q$ is an integer.
\begin{itemize}
\item
If $y=1$ then $Z(G;q,\gamma)=0$
so $\signtutte(q,\gamma)$ and $\tutte(q,\gamma)$ are both in $\FP$.\footnote{The case $y=1$ is trivial for us, 
because we are using the $(q,\gamma)$ parameterisation, where a single point $(q,\gamma)=(0,0)$ corresponds to the line $(x,1)$ in the $(x,y)$ parameterisation. This issue is touched on in the Introduction.}

\item
Otherwise, 
$Z(G;q,\gamma)> 0$ so $\signtutte(q,\gamma)$ is in $\FP$.
Also, $\tutte(q,\gamma)$ is in $\numPQ$.
\end{itemize}
\end{observation}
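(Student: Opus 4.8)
The plan is to reduce both clauses to Fact~\ref{fact:Potts} and the elementary observation that, when $\gamma\in(-1,0)$, every factor occurring in the Potts-model sum is a positive rational. First I would dispose of the degenerate case $y=1$. Here $\gamma=y-1=0$ and $q=(x-1)(y-1)=0$, so every term $q^{\kappa(V,A)}\gamma^{|A|}$ of~(\ref{eq:Zdef}) vanishes (for any graph on at least one vertex, $\kappa(V,A)\ge 1$). Thus $Z(G;q,\gamma)=0$ identically; the sign is always $0$, so $\signtutte(q,\gamma)\in\FP$, and since the constant $0$ is trivially of the form $a(G)/b(G)$ with $a\in\numP$ and $b\in\FP$, we also get $\tutte(q,\gamma)\in\numPQ$ (indeed it is in $\FP$).

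Next I would treat the main case $0<y<1$. Now $-1<\gamma<0$, and since $x\le-1<1$ and $y<1$ we have $q=(x-1)(y-1)>0$; as $q$ is an integer by hypothesis, $q\ge1$, so Fact~\ref{fact:Potts} applies and $Z(G;q,\gamma)=Z_{\mathrm{Potts}}(G;q,\gamma)$. For any colouring $\sigma\colon V\to[q]$ and any edge $e=(u,v)$, the factor $1+\gamma\,\delta(\sigma(u),\sigma(v))$ equals $1$ or $1+\gamma$, and since $-1<\gamma<0$ both values lie in $(0,1]$; hence every factor, and therefore every summand $\prod_{e}\bigl(1+\gamma\,\delta(\sigma(u),\sigma(v))\bigr)$, is strictly positive. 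Because $q\ge1$ there is at least one colouring, so $Z(G;q,\gamma)>0$ for every~$G$, and computing the sign (always positive) is in $\FP$.

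For the claim $\tutte(q,\gamma)\in\numPQ$, I would clear denominators in the Potts sum, exactly along the lines of Lemma~\ref{lem:upper} (see also \cite[Section~2.2]{tutteone}). Write the fixed rational $\gamma$ as $\gamma=-v/w$ with integers $0<v<w$; then $w+\gamma w\,\delta=w-v\,\delta\in\{w,\,w-v\}$ is a positive integer, so
\[
w^{|E|}\,Z(G;q,\gamma)=\sum_{\sigma\colon V\to[q]}\ \prod_{e=(u,v)\in E}\bigl(w-v\,\delta(\sigma(u),\sigma(v))\bigr)
\]
is a non-negative integer. This quantity lies in $\numP$: a nondeterministic machine guesses $\sigma\colon V\to[q]$ and then, independently for each edge $e=(u,v)$, guesses one of $w-v$ tokens if $\sigma$ is monochromatic on~$e$ and one of $w$ tokens otherwise, accepting on every branch; since $q$, $v$, $w$ are constants this runs in polynomial time and the number of accepting branches equals the displayed sum. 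As $w^{|E|}$ is computable in polynomial time, $Z(G;q,\gamma)=a(G)/b(G)$ with $a\in\numP$ and $b\in\FP$.

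I do not anticipate a serious obstacle, since the whole argument is a specialisation of Lemma~\ref{lem:upper}. The one point that needs care is that the numerator of the $\numPQ$ representation must be a genuine counting ($\numP$) function rather than a signed sum; this is precisely why the strict positivity of each factor matters, and that positivity in turn relies on the hypotheses $-1<\gamma<0$ (equivalently $0<y<1$) together with $q\ge 1$.
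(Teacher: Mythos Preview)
Your proposal is correct and follows the same route as the paper: the paper simply states that the observation ``follows from Fact~\ref{fact:Potts}'', and you have spelled out exactly that argument, together with the standard denominator-clearing trick to place the Potts sum in $\numP$. The only cosmetic point is a notational clash (you use $v$ both for an edge endpoint and for the numerator of $-\gamma$), which you may wish to rename.
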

 
Note that Observation~\ref{obs:Fpos} disproves \cite[Conjecture 10.3(e)]{JacksonSokal}. 
Jackson and Sokal 
conjectured that for every fixed $x\leq -1$ and $0<y<1$ satisfying 
$q=(x-1)(y-1)>32/27$,  for all sufficiently large~$n$ an~$m$, there are $2$-connected graphs
with $n$ vertices and $m$ edges that make $Z(G;q,y-1)$ non-zero with either sign, but this is clearly false
when $q$ is an integer.   
 
 \subsection{Points with Integer $q$ on the boundary between Regions B and~E}

These points have $x\leq -1$ and $y=0$. Since $q=(x-1)(y-1)=1-x$ is an integer, we conclude
that $x$ is an integer.
From Fact~\ref{fact:Potts},
$Z(G;q,-1)$ is the number of proper $q$-colourings of~$G$.

\begin{observation}\label{obs:posBF}
The point $(-1,0)$ is in $\FP$ since $Z(G;2,-1)$ is equal to the number of $2$-colourings of~$G$,
and this can be computed in polynomial time.
For integer $x<-1$, 
the point $(x,0)$ is NP-complete. $Z(G;1-x,-1)$ is positive if $G$ has a proper $(1-x)$-colouring,
and is $0$ otherwise. $\tutte(1-x,-1)$ is in $\numP$ so it is in $\numPQ$.
\end{observation}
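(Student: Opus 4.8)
The plan is to read everything off Fact~\ref{fact:Potts} together with the classical complexity of graph colouring, so no new machinery is needed. First I would record the basic identification: at a point $(x,y)$ with $x\le -1$ and $y=0$ we have $q=(x-1)(y-1)=1-x$ and $\gamma=y-1=-1$, and when $x$ is an integer $q$ is an integer with $q\ge 2$. By Fact~\ref{fact:Potts}, $Z(G;q,-1)=Z_{\mathrm{Potts}}(G;q,-1)=\sum_{\sigma:V\to[q]}\prod_{(u,v)\in E}(1-\delta(\sigma(u),\sigma(v)))$, which counts exactly the proper $q$-colourings of~$G$. In particular $Z(G;q,-1)$ is a non-negative integer whose sign is positive if $G$ is $q$-colourable and $0$ otherwise.

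For the point $(-1,0)$ we have $q=2$, so $Z(G;2,-1)$ is the number of proper $2$-colourings of~$G$; this is computable in polynomial time (test whether $G$ is bipartite and, if so, output $2^{\kappa(V,E)}$, else output $0$). Hence $\signtutte(2,-1)\in\FP$ and, since $\tutte(2,-1)$ is itself polynomial-time computable, $\tutte(2,-1)\in\FP\subseteq\numPQ$.

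For integer $x<-1$ we have $q=1-x\ge 3$. By the first paragraph, computing the sign of $Z(G;q,-1)$ amounts to deciding whether $G$ has a proper $q$-colouring. This places the problem in NP (a proper $q$-colouring is a certificate), and NP-hardness follows from the identity many-one reduction: $G$ is $q$-colourable if and only if $\signtutte(q,-1)$ returns a positive sign on~$G$. The only mildly delicate point, and the closest thing to an obstacle, is that one must invoke the standard fact that $q$-colourability is NP-complete for \emph{every} fixed integer $q\ge 3$, not merely $q=3$: given an instance of $3$-colourability, adjoin a clique on $q-3$ new vertices and join it completely to the original graph. Thus $(x,0)$ is NP-complete. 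Finally, $\tutte(q,-1)$ returns $Z(G;q,-1)$, the number of proper $q$-colourings of~$G$, so it is a $\numP$ function; since $\numP\subseteq\numPQ$ (take the denominator identically~$1$), $\tutte(q,-1)\in\numPQ$ and can therefore be approximated in polynomial time using an NP oracle. Apart from the colouring fact just noted, the Observation is essentially a corollary of Fact~\ref{fact:Potts}.
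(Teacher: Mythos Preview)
Your proposal is correct and follows exactly the paper's approach: the paper simply notes (in the text immediately preceding the Observation) that by Fact~\ref{fact:Potts}, $Z(G;q,-1)$ equals the number of proper $q$-colourings of~$G$, and then states the Observation without further proof. You have merely filled in the standard details the paper leaves implicit (the bipartiteness test for $q=2$, the reduction from $3$-colourability to $q$-colourability for $q\ge3$, and why counting proper colourings is in $\numP$).
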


\subsection{Points with Integer $q$ in Region F}
\label{Sec:PosE}
 
The points in Region~F have  $0< x \leq 1$ and $y\leq -1$.
They have $q=(x-1)(y-1)\geq 0$ and $\gamma=y-1$.

First, if $x=1$ then $q=0$. We will handle this easy case below.
So, let us restrict attention to the range $0\leq x < 1$.
This corresponds to $\gamma\leq-2$ and $q/\gamma\in (-1,0)$. 
Recall the definition of the flow polynomial from Section~\ref{sec:flow}.
Using Fact~\ref{fact:Flow} we obtain the following observation.

\begin{observation} 
\label{obs:Epos}
Let $(x,y)$ be a point with  $0< x \leq 1$ and $y\leq -1$.
Let $q=(x-1)(y-1)$ and $\gamma=y-1$.
Suppose that $q$ is an integer.
\begin{itemize}
\item
If $x=1$ then $Z(G;q,\gamma)=0$
so $\signtutte(q,\gamma)$ and $\tutte(q,\gamma)$ are both in $\FP$.
\item
Otherwise, 
$q^{-|V|}  {\left(\frac{q}{\gamma}\right)}^{|E|}
Z(G;q,\gamma)> 0$ so $\signtutte(q,\gamma)$ is in $\FP$.
Also, $\tutte(q,\gamma)$ is in $\numPQ$.
\end{itemize}
\end{observation}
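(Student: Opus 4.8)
The plan is to handle the trivial boundary case $x=1$ separately and then, for $0<x<1$, to obtain everything by substituting into Fact~\ref{fact:Flow}, which rewrites $Z(G;q,\gamma)$ in terms of the quantity $F(G;q,q/\gamma)$.

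If $x=1$ then $q=(x-1)(y-1)=0$. For any graph $G$ with at least one vertex we have $\kappa(V,A)\geq 1$ for every $A\subseteq E$, so each term $q^{\kappa(V,A)}\gamma^{|A|}$ of~(\ref{eq:Zdef}) is zero and $Z(G;q,\gamma)=0$. Hence both $\signtutte(q,\gamma)$ and $\tutte(q,\gamma)$ are computed by the algorithm that always outputs~$0$, and both lie in~$\FP$.

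Now assume $0<x<1$, so that $\gamma=y-1\leq -2$ and $q=(x-1)(y-1)$ is a positive integer. Set $u=q/\gamma$; since $q/\gamma=x-1$, we have $u\in(-1,0)$, so $1+u\in(0,1)$. In $F(G;q,u)=\sum_{\psi}\prod_{e\in E}(1+u\delta(\psi(e),0))$ every factor equals $1$ or $1+u$ and is therefore strictly positive, and the all-zero map is a $q$-flow, so the sum is non-empty; thus $F(G;q,u)>0$. By Fact~\ref{fact:Flow}, $F(G;q,u)=q^{-|V|}(q/\gamma)^{|E|}Z(G;q,\gamma)$, which is therefore positive, as claimed. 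Rearranging, $Z(G;q,\gamma)=q^{|V|}(\gamma/q)^{|E|}F(G;q,u)$; here $q^{|V|}>0$ and $F(G;q,u)>0$, while $\gamma/q=1/(x-1)<0$, so the sign of $Z(G;q,\gamma)$ is $(-1)^{|E|}$. This is computable in polynomial time, so $\signtutte(q,\gamma)\in\FP$.

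For the $\numPQ$ claim, write $u=q/\gamma=-a/b$ with coprime integers $0<a<b$. For a $q$-flow $\psi$ let $z(\psi)=|\{e:\psi(e)=0\}|$, so $\prod_{e}(1+u\delta(\psi(e),0))=((b-a)/b)^{z(\psi)}$. Substituting into $Z(G;q,\gamma)=q^{|V|}(\gamma/q)^{|E|}F(G;q,u)$, clearing the denominator $a^{|E|}$, and using $|Z(G;q,\gamma)|=(-1)^{|E|}Z(G;q,\gamma)$, gives
\[
a^{|E|}\,|Z(G;q,\gamma)|=q^{|V|}\sum_{\psi}(b-a)^{z(\psi)}\,b^{|E|-z(\psi)},
\]
a non-negative integer (each $b-a\geq 1$ and $b\geq 2$, and the exponents are non-negative). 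The right-hand side lies in $\numP$: a nondeterministic machine enumerates the $q$-flows of $G$ by guessing group elements on the non-forest edges of a fixed spanning forest and computing the forced values on the forest edges (using $H=\mathbb{Z}_q$, which is legitimate by the footnote to Fact~\ref{fact:Flow}), then for each edge $e$ branches $b-a$ ways if $\psi(e)=0$ and $b$ ways otherwise, and finally branches $q^{|V|}$ extra ways; the number of accepting computations is exactly the sum above. Since $a^{|E|}\in\FP$ and the sign $(-1)^{|E|}$ is computable in polynomial time, $\tutte(q,\gamma)\in\numPQ$. The only delicate point in the whole argument is this last step — producing the $\numP$ algorithm, which hinges on the fact that the set of $q$-flows, despite the conservation constraint, can be enumerated by polynomial-time nondeterministic branching via the spanning-forest parametrisation; the positivity and sign statements are immediate from Fact~\ref{fact:Flow} once one observes that $1+u>0$.
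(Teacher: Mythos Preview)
Your proof is correct and takes essentially the same approach as the paper: both derive the observation directly from Fact~\ref{fact:Flow}, noting that $u=q/\gamma=x-1\in(-1,0)$ makes every factor $1+u\,\delta(\psi(e),0)$ strictly positive. The paper leaves the $\numPQ$ membership implicit, whereas you spell out an explicit $\numP$ witness via the spanning-forest parametrisation of $q$-flows; this extra detail is fine and your algebra clearing the denominator $a^{|E|}$ is correct.
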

 
Like Observation~\ref{obs:Fpos},
Observation~\ref{obs:Epos} provides counter-examples to \cite[Conjecture 10.3(3)]{JacksonSokal}.
  They conjectured that for every fixed $0<x\leq 1$ and $y\leq -1$ satisfying
$q=(x-1)(y-1)>32/27$,  for all sufficiently large~$n$ and~$m$ (including even~$m$), there are $2$-connected graphs
with $n$ vertices and $m$ edges that make $Z(G;q,y-1)$ non-zero with either sign, but this is clearly false
when $q$ is an integer.
 
\subsection{Points with Integer $q$ on the boundary between Regions B and~F}
 
These points have $x=0$ and $y\leq -1$. Since $q=(x-1)(y-1)=1-y$ is an integer, we conclude that $y$
is an integer.

Recall from Section~\ref{sec:flow} that
if $q$ is a positive integer
then $q^{-|V|}  {\left( -1\right)}^{|E|} Z(G;q,-q)$ is the number of nowhere-zero $q$-flows of~$G$.
A graph has a nowhere-zero $2$-flow iff it is Eulerian~\cite[Theorem 11.21]{CLZ}. Thus, this can be
tested in polynomial time.
On the other hand, it is NP-complete to test whether a graph has a nowhere-zero $3$-flow, even if the graph is
planar. To see this, note that
a planar graph has a nowhere-zero $3$-flow iff its dual has a proper $3$-colouring,
and it is NP-complete to determine whether a planar graph is $3$-colourable.
It is also NP-complete to test whether a graph has a nowhere-zero $4$-flow, even if the graph is
cubic. To see this, consider a cubic graph~$G$ and let~$H$ be the Abelian group~$Z_2\times Z_2$.
A $4$-flow maps the edges of~$G$ to~$(0,1)$, $(1,0)$ and~$(1,1)$. To be nowhere-zero, it maps
one of each to the edges adjacent to each vertex. So the number of nowhere-zero $4$-flows is the same
as the number of proper $3$-edge-colourings of~$G$. 
But it is NP-complete to decide whether a graph has such an edge colouring~\cite{Holyer}.
A ``bridge'' (or cut-edge) of a graph is an edge whose deletion increases the number
of connected components.
It is known~\cite[Corollary 11.26]{CLZ}, that no graph with a bridge 
has a nowhere-zero $q$-flow for any integer $q\geq 2$.
However, Seymour has shown
\cite[Theorem 11.32]{CLZ} that every bridgeless graph has a nowhere-zero $6$-flow.
Thus, determining whether a graph has a nowhere-zero $q$-flow is in FP for $q\geq 6$.
We do not know the complexity of determining whether a graph has a nowhere-zero  
$5$-flow. Indeed, it is currently an open question whether 
there exists a bridgeless graph without a nowhere zero $5$-flow.

\begin{observation}\label{obs:flows}
The point $(0,-1)$ is in $\FP$ since $Z(G;2,-2)$ is computable from the number of nowhere-zero $2$-flows of~$G$,
and this can be computed in polynomial time.
The point $(0,-2)$ is $\NP$-complete since $Z(G;3,-3)$ allows one to determine
the number of nowhere-zero $3$-flows of~$G$.
The point $(0,-3)$ is $\NP$-complete since $Z(G;4,-4)$ allows one to determine the
number of nowhere-zero $4$-flows of~$G$.
For integer $y\leq -5$, the point $(0,y)$ is in FP since
$Z(G;1-y,y-1)$ is computable from the number of nowhere-zero $(1-y)$-flows of~$G$.
This quantity is positive iff $G$ has no bridge.
$\tutte(1-x,-1)$ is in $\numP$ so it is in $\numPQ$.
\end{observation}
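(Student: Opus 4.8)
The plan is to read the whole observation off Fact~\ref{fact:Flow} together with the elementary facts about nowhere-zero flows collected just before it. Fix a point $(0,y)$ with $y\le -1$ an integer, so that $q=(x-1)(y-1)=1-y$ is an integer $\ge 2$ and $\gamma=y-1=-q$; thus we are studying $Z(G;q,-q)$. By Fact~\ref{fact:Flow}, $F(G;q)=q^{-|V|}(-1)^{|E|}Z(G;q,-q)$ is the number of nowhere-zero $q$-flows of $G$, a non-negative integer. Hence $\mathrm{sign}\,Z(G;q,-q)$ equals $(-1)^{|E|}$ when $G$ has a nowhere-zero $q$-flow and equals $0$ otherwise; since $(-1)^{|E|}$ is computed in linear time, $\signtutte(q,-q)$ is polynomial-time Turing equivalent to deciding whether $G$ has a nowhere-zero $q$-flow. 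Moreover $|Z(G;q,-q)|=q^{|V|}F(G;q)$, and this quantity lies in $\numP$ (a nondeterministic machine can guess both a nowhere-zero $q$-flow of $G$ and a map $V\to[q]$, accepting iff the flow is valid and nowhere zero, which has exactly $q^{|V|}F(G;q)$ accepting runs). So in each regime below, $Z(G;q,-q)$ is, up to the linear-time-computable sign $(-1)^{|E|}$, a function in $\numPQ$, which gives the stated approximability with an NP oracle.

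The two tractable regimes are immediate. For $q=2$ (the point $(0,-1)$) the only candidate for a nowhere-zero flow over $\mathbb{Z}_2$ is the all-ones assignment, which is a flow exactly when every vertex of $G$ has even degree; this is tested in polynomial time, and since $F(G;2)\in\{0,1\}$ the value $Z(G;2,-2)=2^{|V|}(-1)^{|E|}F(G;2)$ is then computed in polynomial time, giving $\signtutte(2,-2)\in\FP$ and $\tutte(2,-2)\in\FP\subseteq\numPQ$. For $q\ge 6$ (the points $(0,y)$ with $y\le -5$), a graph with a bridge has no nowhere-zero $q$-flow \cite[Corollary 11.26]{CLZ}, while by Seymour's six-flow theorem \cite[Theorem 11.32]{CLZ} every bridgeless graph has a nowhere-zero $6$-flow, and hence a nowhere-zero $q$-flow for every integer $q\ge 6$; therefore for such $q$ a graph has a nowhere-zero $q$-flow iff it is bridgeless, which is decidable in polynomial time, giving $\signtutte(q,-q)\in\FP$ and $\tutte(q,-q)\in\numPQ$.

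The substantive case is $q=3$ (the point $(0,-2)$). Deciding ``$G$ has a nowhere-zero $3$-flow'' is in $\NP$ by guessing and checking the flow. For $\NP$-hardness, use the standard planar-duality identity $P(G^*;k)=k\,F(G;k)$ for a connected plane graph $G$ with dual $G^*$: it says $G$ has a nowhere-zero $3$-flow iff $G^*$ is properly $3$-colourable. Since $3$-colourability of simple planar graphs is $\NP$-complete, from such a graph $H$ we form $G:=H^*$, so $(H^*)^*\cong H$ and $G$ is a connected plane multigraph; if $|E(G)|=|E(H)|$ is odd we subdivide one edge of $G$, which raises $|E(G)|$ by $1$ and, by flow conservation at the new degree-$2$ vertex, leaves the number of nowhere-zero $3$-flows unchanged. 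The resulting graph $G'$ has an even number of edges, so $Z(G';3,-3)=3^{|V(G')|}F(G';3)$ is positive iff $H$ is $3$-colourable and zero otherwise; this is a many-one reduction witnessing $\NP$-hardness in the sense of Section~\ref{sec:red}, and together with membership in $\NP$ it shows $(0,-2)$ is $\NP$-complete. The $\numPQ$ assertion for $\tutte(3,-3)$ is contained in the first paragraph.

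I expect the only real obstacle to be this $q=3$ reduction, and within it the bookkeeping of the sign: the factor $(-1)^{|E|}$ in Fact~\ref{fact:Flow} must be neutralised --- here by the edge-subdivision parity adjustment, which is legitimate precisely because subdivision preserves the flow count --- so that ``$Z(G';3,-3)>0$'' faithfully encodes ``$H$ is $3$-colourable''. Everything else is assembly of Fact~\ref{fact:Flow} with the cited facts about nowhere-zero flows. For completeness one should note that this approach leaves the points $(0,-3)$ and $(0,-4)$, i.e.\ $q=4,5$, unresolved, since deciding the existence of a nowhere-zero $4$- or $5$-flow is neither known to be in $\FP$ nor $\NP$-hard (the $q=5$ case subsuming Tutte's five-flow conjecture), consistent with the classification in Section~\ref{sec:red}.
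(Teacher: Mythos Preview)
Your proof is correct and follows the same approach as the paper: the observation is essentially a direct consequence of Fact~\ref{fact:Flow} together with the standard facts about nowhere-zero flows (Eulerian $\Leftrightarrow$ nowhere-zero $2$-flow; planar duality reducing nowhere-zero $3$-flows to $3$-colourability; Seymour's six-flow theorem for $q\ge 6$), all of which the paper records in the paragraph immediately preceding the observation rather than in a separate proof.

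One point worth noting: you go further than the paper in handling the parity of $|E|$ for the $q=3$ case. The paper's stated convention for NP-hardness (yes-instances map to \emph{positive} sign) does require neutralising the factor $(-1)^{|E|}$, and your edge-subdivision trick is the clean way to do this; the paper leaves this detail implicit. Your treatment of the $\numPQ$ claim is also more explicit than the paper's (which simply asserts membership in $\numP$, apparently via a copy from the preceding observation), though one should bear in mind that $Z(G;q,-q)$ itself can be negative, so strictly speaking it is $|Z(G;q,-q)|=q^{|V|}F(G;q)$ that lies in $\numP$, with the sign recoverable separately---exactly as you describe.
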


\subsection{Points in Regions H and I}\label{sec:HI}

It is known \cite{JVW90} that 
points $(x,y)$ with $(x-1)(y-1)=1$ are
in FP since $\tutte(1,\gamma)$ is in FP
so $\signtutte(1,\gamma)$ is also in FP.
   
\subsection{Matroids}
    
The definitions from Section~\ref{sec:preliminaries} can be generalised from graphs to matroids.
To deal with Regions~J and~K (and also with regions~L and~M in 
future sections), it is advantageous 
to work with matroids, rather than with graphs, because we can then exploit 
a duality between the variables~$x$ and~$y$.
In order to avoid difficulties over how matroids should be presented, we will work with the class of binary matroids.
This is a more general class than the class of graphs --- every graph corresponds to a binary matroid, but there are binary matroids 
that do not correspond to graphical matroids.

A matroid $\calM$ is a combinatorial structure  defined  by a 
set $E$ (the ``ground set'') 
together with a ``rank function'' $r_{\calM}:2^E\to\mathbb{N}$
which must satisfy  the following conditions (see \cite{oxley} for details).
\begin{enumerate}
\item $0\leq r_{\calM}(A) \leq |A|$,
\item $A \subseteq B$ implies $r_{\calM}(A) \leq r_{\calM}(B)$ (monotonicity), and
\item $r_{\calM}(A\cup B) + r_{\calM}(A \cap B) \leq r_{\calM}(A) + r_{\calM}(B)$ (submodularity).
\end{enumerate}

A subset $A\subseteq E$ satisfying $r_{\calM}(A)=|A|$
is said to be {\it independent}.
Every other subset $A\subseteq E$ is said to be {\it dependent.}
A  maximal (with respect to inclusion)
independent set is   a {\it basis}, and a minimal dependent set is   a {\it circuit}.
A~circuit with one element is  a {\it loop}. 
 
The
multivariate Tutte polynomial  
of a matroid~$\matroid$ with ground set~$\columns$ and rank function $\rank_{\matroid}$
is 
defined as follows (see~\cite[(1.3)]{sokal}), where the weight function $\bgamma$ 
assigns weights to elements of the ground set.
\begin{equation}
\label{tildedef}
\ZtildeTutte(\matroid;q,\boldgamma)=\sum_{\subsetcols\subseteq\columns}
q^{-\rank_{\matroid}(\subsetcols)}
\prod_{\column\in\subsetcols}
\gamma_\column.
\end{equation}
If $\bgamma$ assigns weight~$\gamma$ to every element of~$E$ then
we use $\ZtildeTutte(\matroid;q,\gamma)$ as shorthand for $\ZtildeTutte(\matroid;q,\bgamma)$.

Let $\matrix$ be a matrix over a field $\field$ with 
row set~$\rows$ and column set~$\columns$.
$\matrix$ is said to ``represent'' a matroid~$\matroid$ with ground set $\columns$.
The rank  $\rank_{\matroid}(\subsetcols)$ of a set of columns~$\subsetcols$
in this matroid
is defined to be the rank of the submatrix consisting of those columns.  A matroid is said to be \emph{representable} over the field~$\field$ if it can be represented in this way.
It is said to be \emph{binary} it is representable over the two-element field~$\gf2$.   
  
The \emph{cycle matroid} of an undirected graph $\graph=(\graphvertices,\graphedges)$
is the binary matroid $\matroid(\graph)$ represented by the vertex-edge
incidence matrix~$\matrix$ of~$\graph$ (in which rows are vertices and columns
are edges).   It can be deduced from the definition above that
$\rank_{\matroid(\graph)}(\subsetcols) = |\graphvertices| - \kappa(\graphvertices,\subsetcols)$.
The Tutte polynomial of a cycle matroid $\matroid(\graph)$ is very closely connected to the Tutte polynomial
of the underlying graph $\graph$ . In particular, (see~\cite[(1.2) and (1.3)]{sokal}),
\begin{equation} 
 \label{eq:graphmatroid}
Z(\graph;q,\boldgamma)= q^{|V|}\,\ZtildeTutte(\matroid(\graph);q,\boldgamma).
\end{equation}

Every matroid~$\calM$ has a dual matroid~$\calM^*$ with the same ground set. Furthermore, $\calM^*$ is
binary if and only if $\calM$ is (see \cite{oxley}), and a binary matrix representing $\calM^*$
can be efficiently computed from a representation of $\calM$~\cite[p.63]{TruemperBook}.
A {\it cocircuit\/} in $\calM$ is a set that is a circuit in~$\calM^*$;
equivalently, a cocircuit is a minimal set that intersects every basis.
A~cocircuit with one element is a {\it coloop}. 
We use the following fact~\cite[(4.14a)]{sokal}.

\begin{fact}
\label{fact:duality}
Suppose that $\calM$ is a matroid with ground set~$E$
and that $\bgamma$ is a weight function assigning weights to elements in~$E$.
Let $\calM^*$ be the \emph{dual} of~$\calM$ and let $\bgamma^*$
be the weight function that assigns weight $q/\gamma_e$ to every ground set element $e\in E$.
Then 
$$\Ztilde(\calM^*;q,\bgamma) = q^{-r_{\calM^*}(E)} \left(
\prod_{e \in E}{\gamma_e}
\right)
\Ztilde(\calM;q,\bgamma^*).$$
\end{fact}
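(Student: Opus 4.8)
The plan is to derive Fact~\ref{fact:duality} by a direct calculation from the definition~(\ref{tildedef}), using the standard formula for the rank function of a dual matroid together with a re-indexing of the defining sum by complementation. The only matroid-theoretic input I would need is that, for every $A\subseteq E$,
\begin{equation}
\label{eq:dualrank}
r_{\calM^*}(A) = |A| + r_{\calM}(E\setminus A) - r_{\calM}(E)
\end{equation}
(see~\cite{oxley}), together with its special case $r_{\calM^*}(E) = |E| - r_{\calM}(E)$ obtained by taking $A=E$. Everything else is bookkeeping with the edge weights.

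Concretely, the calculation would go as follows. Expanding the left-hand side with~(\ref{tildedef}) and~(\ref{eq:dualrank}) and then setting $B=E\setminus A$ (so that $|A|=|E|-|B|$ and $\prod_{e\in A}\gamma_e = \big(\prod_{e\in E}\gamma_e\big)\big/\prod_{e\in B}\gamma_e$),
\begin{align*}
\Ztilde(\calM^*;q,\bgamma)
&= \sum_{A\subseteq E} q^{-r_{\calM^*}(A)}\prod_{e\in A}\gamma_e \\
&= \sum_{B\subseteq E} q^{\,r_{\calM}(E)-|E|+|B|-r_{\calM}(B)}\Big(\prod_{e\in E}\gamma_e\Big)\frac{1}{\prod_{e\in B}\gamma_e}.
\end{align*}
Now group the factor $q^{|B|}$ with $\prod_{e\in B}\gamma_e^{-1}$ to form $\prod_{e\in B}(q/\gamma_e)=\prod_{e\in B}\gamma^*_e$, and use $r_{\calM}(E)-|E| = -r_{\calM^*}(E)$. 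Pulling the $B$-independent factors out of the sum gives $q^{-r_{\calM^*}(E)}\big(\prod_{e\in E}\gamma_e\big)\sum_{B\subseteq E}q^{-r_{\calM}(B)}\prod_{e\in B}\gamma^*_e$, and the remaining sum is precisely $\Ztilde(\calM;q,\bgamma^*)$ by~(\ref{tildedef}). This is the claimed identity.

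I do not expect a genuine obstacle: the whole content is in the rank-duality formula~(\ref{eq:dualrank}), and the rest is algebra. The one point to be careful about is that the manipulation of $\prod_{e\in A}\gamma_e$ and the appearance of $q/\gamma_e$ presuppose $q\neq 0$ and $\gamma_e\neq 0$ for all $e$; this is exactly the setting in which $\bgamma^*$ is defined, so nothing is lost. I would also sanity-check the exponents of~$q$ against a small self-dual example, such as the two-element rank-one matroid, to guard against an off-by-one in the power of~$q$.
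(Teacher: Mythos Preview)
Your proposal is correct; the calculation via the dual rank formula~(\ref{eq:dualrank}) and complementation $B=E\setminus A$ is exactly the standard derivation, and your bookkeeping with the powers of~$q$ and the weights is clean. Note that the paper does not actually prove this fact but simply quotes it from Sokal~\cite[(4.14a)]{sokal}; your argument is the expected one-paragraph justification of that citation.
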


Two important matroid operations are deletion and contraction.  Suppose 
$e \in E $ is a member of the ground set of matroid~$\calM$.  
The {\it contraction
$\calM/e$ of $e$ from $\calM$} is the matroid on ground set $E-\{e\}$ with 
rank function given by $r_{\calM/e}(A)=r_\calM(A\cup \{e\})-r_\calM(\{e\})$, for
all $A\subseteq E-\{e\}$.
The {\it deletion
$\calM\backslash e$ of~$\{e\}$ from $\calM$} is the matroid on ground set $E-\{e\}$ with 
rank function given by $r_{\calM \backslash e}(A)=r_\calM(A)$, for
all $A\subseteq E-\{e\}$.   
Given a matrix representing a matroid~$\calM$, there are efficient algorithms
for constructing matrices representing contractions and deletions of~$\calM$
\cite[Chapter 3]{TruemperBook}.
We use the following fact (see, for example, \cite[(4.18b)]{sokal}).
\begin{fact}
\label{fact:loop}
If $\calM$ is a matroid with a loop~$e$  then 
$$\Ztilde(\calM;q,\bgamma) = (1+\gamma_e) \Ztilde(\calM \setminus e;q,\bgamma).$$
\end{fact}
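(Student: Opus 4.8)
The plan is to work directly from the definition~(\ref{tildedef}) of $\Ztilde$ and to split the sum over subsets of the ground set~$E$ according to whether or not a subset contains the loop~$e$. First I would record the one structural fact about loops that is needed: since $\{e\}$ is a circuit, i.e.\ a minimal dependent set, it is dependent, so $r_\calM(\{e\}) < |\{e\}| = 1$ and hence $r_\calM(\{e\}) = 0$; combining this with submodularity (axiom~3), for every $A \subseteq E - \{e\}$ one has
$$r_\calM(A\cup\{e\}) + r_\calM(A\cap\{e\}) \le r_\calM(A) + r_\calM(\{e\}) = r_\calM(A),$$
and since $A\cap\{e\}=\emptyset$ with $r_\calM(\emptyset)=0$, this yields $r_\calM(A\cup\{e\}) \le r_\calM(A)$, while monotonicity (axiom~2) gives the reverse inequality. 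So adjoining the loop to any set leaves its rank unchanged: $r_\calM(A\cup\{e\}) = r_\calM(A)$ for all $A\subseteq E-\{e\}$.

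Next I would carry out the splitting. Writing a general subset containing~$e$ as $A\cup\{e\}$ with $A\subseteq E-\{e\}$, the definition of $\Ztilde$ becomes
$$\Ztilde(\calM;q,\bgamma) = \sum_{A\subseteq E-\{e\}} q^{-r_\calM(A)}\prod_{f\in A}\gamma_f \;+\; \gamma_e\sum_{A\subseteq E-\{e\}} q^{-r_\calM(A\cup\{e\})}\prod_{f\in A}\gamma_f.$$
By the definition of deletion, $r_{\calM\setminus e}(A) = r_\calM(A)$ for every $A\subseteq E-\{e\}$, so the first sum is precisely $\Ztilde(\calM\setminus e;q,\bgamma)$. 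For the second sum, the rank identity from the previous paragraph gives $r_\calM(A\cup\{e\}) = r_\calM(A) = r_{\calM\setminus e}(A)$, so that sum also equals $\Ztilde(\calM\setminus e;q,\bgamma)$, and the whole expression collapses to $(1+\gamma_e)\,\Ztilde(\calM\setminus e;q,\bgamma)$, which is the claim.

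There is essentially no obstacle here: the only step that is not pure bookkeeping is the observation that a loop has rank zero and does not change the rank of any set to which it is adjoined, and that follows in two lines from the rank axioms stated just above the lemma. Everything else is a regrouping of a finite sum.
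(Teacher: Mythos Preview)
Your argument is correct. The paper does not actually prove this fact; it simply records it with a citation to Sokal~\cite[(4.18b)]{sokal}. Your direct derivation from the rank axioms and the definition~(\ref{tildedef}) is exactly the standard one-paragraph verification, and nothing more is needed.
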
 
We also use a related fact about  coloops (see, for example \cite[(2.6)]{mani}. \begin{fact}
\label{fact:bridge}
If $\calM$ is a matroid with a coloop~$e$ then
$$\Ztilde(\calM;q,\bgamma) = (1+\gamma_e/q) \Ztilde(\calM \contract e;q,\bgamma).$$
\end{fact}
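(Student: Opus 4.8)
The plan is to prove Fact~\ref{fact:bridge} directly from the definition~(\ref{tildedef}) of the multivariate Tutte polynomial of a matroid, by splitting the defining sum according to whether or not the coloop~$e$ lies in the index set~$A$.

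First I would record the rank-function characterisation of a coloop that drives the computation: $e$ is a coloop of~$\calM$ precisely when $\{e\}$ is independent (so $r_\calM(\{e\})=1$) and, for every $A\subseteq E\setminus\{e\}$, we have $r_\calM(A\cup\{e\})=r_\calM(A)+1$. This is standard — it is equivalent to $e$ lying in every basis; see~\cite{oxley}. Then, separating in $\Ztilde(\calM;q,\bgamma)=\sum_{A\subseteq E}q^{-r_\calM(A)}\prod_{f\in A}\gamma_f$ the terms with $e\notin A$ from those of the form $A=B\cup\{e\}$ with $B\subseteq E\setminus\{e\}$, and using $r_\calM(B\cup\{e\})=r_\calM(B)+1$ to rewrite $q^{-r_\calM(B\cup\{e\})}=q^{-1}q^{-r_\calM(B)}$, gives
\begin{align*}
\Ztilde(\calM;q,\bgamma)
&=\sum_{B\subseteq E\setminus\{e\}}q^{-r_\calM(B)}\prod_{f\in B}\gamma_f
 +\frac{\gamma_e}{q}\sum_{B\subseteq E\setminus\{e\}}q^{-r_\calM(B)}\prod_{f\in B}\gamma_f\\
&=\left(1+\frac{\gamma_e}{q}\right)\sum_{B\subseteq E\setminus\{e\}}q^{-r_\calM(B)}\prod_{f\in B}\gamma_f .
\end{align*}

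To finish I would identify the remaining sum with $\Ztilde(\calM\contract e;q,\bgamma)$. By the definition of contraction, $r_{\calM/e}(B)=r_\calM(B\cup\{e\})-r_\calM(\{e\})$, and the two facts $r_\calM(\{e\})=1$ and $r_\calM(B\cup\{e\})=r_\calM(B)+1$ give $r_{\calM/e}(B)=r_\calM(B)$ for every $B\subseteq E\setminus\{e\}$. Since $\calM\contract e$ has ground set $E\setminus\{e\}$ and the same rank function there as~$\calM$, the two Tutte polynomials agree term by term, so the bracketed sum equals $\Ztilde(\calM\contract e;q,\bgamma)$, which yields the claim. There is no substantive obstacle here; the only point needing care is the $q^{-r}$ normalisation in~(\ref{tildedef}) — unlike a loop (Fact~\ref{fact:loop}), which leaves the rank function unchanged and hence contributes the clean factor $1+\gamma_e$, a coloop raises the rank by one on the relevant sets, which is exactly why the factor becomes $1+\gamma_e/q$. (Alternatively one could derive the statement from Fact~\ref{fact:loop} via the duality Fact~\ref{fact:duality}, using that $e$ is a coloop of~$\calM$ iff it is a loop of~$\calM^*$ and that $(\calM\contract e)^*=\calM^*\setminus e$, but the direct argument is shorter.)
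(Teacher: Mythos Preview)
Your argument is correct. The paper, however, does not supply its own proof of this fact: it is simply recorded as a known identity with a citation to~\cite[(2.6)]{mani}. What you have written is the standard direct verification from the definition~(\ref{tildedef}), and every step checks out --- the coloop characterisation $r_\calM(B\cup\{e\})=r_\calM(B)+1$ for all $B\subseteq E\setminus\{e\}$ is exactly what is needed, and the identification $r_{\calM/e}(B)=r_\calM(B)$ on $E\setminus\{e\}$ follows cleanly from the definition of contraction in the paper together with $r_\calM(\{e\})=1$. So you have supplied a self-contained proof where the paper gives none; your parenthetical duality route via Facts~\ref{fact:loop} and~\ref{fact:duality} would also work and is perhaps closer in spirit to how such identities are often packaged, but the direct computation is shorter and entirely adequate here.
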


We introduce two computational problems for binary matroids.

\prob{$\msigntutte(q,\gamma)$.}
{A matrix representing a binary matroid~$\calM$
and an edge weight~$\gamma$.}  
{Determine whether the
sign of $\ZtildeTutte(\calM;q,\gamma)$ is positive, negative, or $0$.}

\prob{$\mtutte(q,\gamma)$.}
{A matrix representing a binary matroid~$\calM$ and an  edge weight~$\gamma$.} 
{$\ZtildeTutte(\calM;q,\gamma)$.}

\subsection{Points in Regions J and K}

The points in Regions~J and K
satisfy $-1 \leq \min(x,y) < 0$ and $\max(x,y) \geq 1$.
Let $q=(x-1)(y-1)$ and $\gamma=y-1$. Note that $q\leq 0$.
It is known (see \cite[Theorem 4.1]{JacksonSokal}
that in these regions, the sign of
$Z(G;q,\gamma)$ is essentially a trivial function of~$G$,
apart from some factors arising from loops in the 
matroid associated with~$G$ and in its dual matroid.
We will show that, for all of these points,
$\tutte(q,\gamma)$ is in $\numPQ$.
In fact, we will show that 
$\mtutte(q,\gamma)$ is in $\numPQ$.
Working with matroids, instead of with graphs, will enable us to 
prove the results for one region (Region~K) and immediately to
deduce the same results for the other region (Region~J), 
by duality of the variables~$x$ and~$y$.  (The replacement of $\gamma_e$
with $q/\gamma_e$ in Fact~\ref{fact:duality} is equivalent to swapping 
$x$ and~$y$.)

\subsubsection{Points in Region K}  
 
Points in Region~K have  $x\geq 1$ and $-1\leq y < 0$.
Let $q=(x-1)(y-1)$ and $\gamma=y-1$.

First, if $x=1$ then $q=0$. We will handle this easy case below. 
So, let us restrict attention to the range $x>1$.
Then  $q < 0$ and
$-2 \leq \gamma<-1$. 
We will use the following lemma, which is similar in spirit to  
\cite[Theorem 4.1]{JacksonSokal}.\footnote{We
need to repeat the steps of their proof here because we want to extract computational information
in addition to the sign.}

\begin{lemma}\label{lem:matroid} Suppose that $q<0$ and $\calM$ is a loopless
matroid. Suppose that $\bgamma$ is a weight function in
which every weight $\gamma_e$ satisfies  $-2 \leq \gamma_e \leq 0$.  
Then $\Ztilde(\calM;q,\bgamma)> 0$ and 
the problem of computing
$\Ztilde(\calM;q,\bgamma)$ is in $\numPQ$.
\end{lemma}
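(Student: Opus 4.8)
The plan is to prove both assertions together, by a recursion on the size of the ground set~$E$ of~$\calM$ that is organised so that every arithmetic step multiplies by a \emph{nonnegative} rational; strict positivity then follows by induction, and the recursion tree simultaneously supplies the witness structure needed for $\numPQ$. Throughout the recursion I maintain two invariants: the current matroid is loopless, and every weight lies in $[-2,0]$. The tools are the deletion--contraction identity $\Ztilde(\calM;q,\bgamma)=\Ztilde(\calM\setminus e;q,\bgamma)+(\gamma_e/q)\Ztilde(\calM/e;q,\bgamma)$ for any non-loop~$e$ (immediate from~(\ref{tildedef}) by splitting on whether $e\in A$), Fact~\ref{fact:bridge} for coloops, and Fact~\ref{fact:loop} for loops; the overall shape mirrors \cite[Theorem~4.1]{JacksonSokal}, with the bookkeeping needed to get $\numPQ$ added on top.

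Applied to a loopless~$\calM$ with all weights in $[-2,0]$, the recursion proceeds by the first applicable case. (i) If $E=\emptyset$, then $\Ztilde=1$. (ii) If some weight $\gamma_e=0$, then $\Ztilde(\calM;q,\bgamma)=\Ztilde(\calM\setminus e;q,\bgamma)$, so delete~$e$ and recurse. (iii) If $\calM$ has a coloop~$e$, then by Fact~\ref{fact:bridge} $\Ztilde(\calM;q,\bgamma)=(1+\gamma_e/q)\Ztilde(\calM/e;q,\bgamma)$, where $1+\gamma_e/q\ge 1$ (as $\gamma_e\le 0$ and $q<0$) and $\calM/e$ is loopless, so recurse on $\calM/e$. (iv) If $\calM$ has a parallel pair $e,f$, then combining deletion--contraction on~$f$ with Fact~\ref{fact:loop} (applied to the loop~$e$ of $\calM/f$) yields $\Ztilde(\calM;q,\bgamma)=\Ztilde(\calM\setminus f;q,\bgamma')$, where $\bgamma'$ agrees with $\bgamma$ except that $1+\gamma'_e=(1+\gamma_e)(1+\gamma_f)\in[-1,1]$, so $\gamma'_e\in[-2,0]$ and $\calM\setminus f$ is loopless; recurse. (v) Otherwise $\calM$ is simple with no coloop (hence $|E|\ge 3$); pick a fixed element~$e$ and apply deletion--contraction, $\Ztilde(\calM;q,\bgamma)=\Ztilde(\calM\setminus e;q,\bgamma)+(\gamma_e/q)\Ztilde(\calM/e;q,\bgamma)$, where $\gamma_e/q\ge 0$ and both $\calM\setminus e$ and $\calM/e$ are loopless (the latter since~$e$ has no parallel elements); recurse on both summands. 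In every case $|E|$ strictly decreases, so the recursion halts. Positivity now follows by induction on $|E|$: in (i), (ii), (iv) the value is positive by hypothesis/induction, in (iii) it is a factor $\ge 1$ times a positive value, and in (v) it is a positive value plus a nonnegative multiple of a positive value.

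For membership in $\numPQ$ I read $\Ztilde(\calM;q,\bgamma)$ off the recursion tree. Fix canonical choices in cases (ii)--(v) (e.g.\ always the element or pair of smallest index); then the tree is deterministic except for the binary delete/contract choice in case~(v), so each leaf is named by a $\{0,1\}$-string of length at most~$|E|$, and a candidate string is valid iff simulating the recursion consumes exactly that string and reaches~$\emptyset$. This simulation runs in polynomial time, since deletion, contraction, and the tests for loops, coloops and parallel elements of a binary matroid presented by a $\gf2$-matrix are all polynomial time (cf.~\cite[Chapter~3]{TruemperBook}). Unrolling the recursion gives $\Ztilde(\calM;q,\bgamma)=\sum_{\ell}\prod_{i=1}^{|E|}c^{(\ell)}_i$, summed over leaves~$\ell$, where $c^{(\ell)}_1,\dots,c^{(\ell)}_{|E|}$ are the nonnegative coefficients along the corresponding root-to-leaf path (one per element removed). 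Let $D$ be a common denominator of~$q$ and all the weights~$\gamma_e$, and $\hat q=|qD|$. Each coefficient $c_i$ is one of $1$, $1+\gamma_e/q$, $\gamma_e/q$, and equals $\hat a_i/\hat q$ for a positive integer $\hat a_i\le B:=\max\bigl(|qD|,\max_e|(q+\gamma_e)D|,\max_e|\gamma_e D|\bigr)$. Hence $\Ztilde(\calM;q,\bgamma)=N(\calM)/\hat q^{|E|}$ where $\hat q^{|E|}\in\FP$ and $N(\calM)=\sum_{\ell}\prod_{i=1}^{|E|}\hat a^{(\ell)}_i$ is a positive integer; and $N(\calM)\in\numP$ because a witness is a valid leaf string~$\ell$ together with a tuple $(s_1,\dots,s_{|E|})$ with $1\le s_i\le\hat a^{(\ell)}_i$ for all~$i$, which has polynomial size and is verifiable in polynomial time (simulate along~$\ell$, extracting the $\hat a_i$ in order, then check each~$s_i$). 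Therefore $\Ztilde(\calM;q,\bgamma)\in\numPQ$.

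The main obstacle is the $\numPQ$ part, in two linked respects. First, one has to arrange the recursion so that \emph{no} step produces a coefficient of uncertain sign: this is exactly why case~(iv) uses a parallel reduction rather than a plain deletion--contraction, since the latter would strip the loop of $\calM/f$ via Fact~\ref{fact:loop} and introduce the factor $1+\gamma_e\in[-1,1]$, which can be negative. Second, the recursion tree has up to $2^{|E|}$ leaves, so one must package it as a genuine $\numP$ witness set --- via the multiplicity blow-up of the path coefficients --- while keeping witnesses of polynomial size and polynomial-time verifiable, which is where the polynomial-time algorithms for binary-matroid deletion and contraction are essential.
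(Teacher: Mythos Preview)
Your argument is essentially the paper's: reduce parallel classes (your case~(iv), the paper's ``pre-processing''), then do deletion--contraction (your case~(v), the paper's ``recursive step''), with the base case handled either as the empty matroid (you) or the free matroid (the paper). The positivity induction is correct and matches the paper's.

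There is one technical slip in your $\numPQ$ bookkeeping. You claim each path coefficient $c_i$ equals $\hat a_i/\hat q$ with $\hat q=|qD|$ and $\hat a_i\in\mathbb{Z}_{>0}$, where $D$ is a common denominator of $q$ and the \emph{original} weights. But case~(iv) replaces $\gamma_e$ by $\gamma'_e=(1+\gamma_e)(1+\gamma_f)-1$, whose denominator is $D^2$ rather than~$D$; after several parallel reductions the current weight on a surviving element~$e$ can have denominator $D^{|S_e|}$, where $S_e$ is the set of original elements merged into~$e$. Consequently, when such an element is later removed in case~(iii) or the contract branch of case~(v), the coefficient $1+\gamma'_e/q$ or $\gamma'_e/q$ is generally \emph{not} of the form $\hat a_i/\hat q$. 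The fix is easy: since the sets $S_e$ partition~$E$, every path product $\prod_i c_i$ has denominator dividing $|p|^{|E|}D^{|E|}$ (writing $q=p/D$), so take $b(\calM)=(|p|D)^{|E|}\in\FP$; then $\widetilde Z\cdot b(\calM)$ is a sum over leaves of nonnegative integers of polynomial bit-length, and your (leaf, index-tuple) witness set works as stated. The paper's own proof is terser on this point and does not spell out the denominator at all, so your write-up is in fact more explicit; you just need to enlarge~$\hat q$ to absorb the case-(iv) weight modifications.
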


\begin{proof} 

We start with some pre-processing.
Before trying to compute $\Ztilde(\calM;q,\bgamma)$, we first
modify $\calM$, without changing its Tutte polynomial,
to get rid of any size-$2$ circuits.
We do this by parallel composition.
So if we have a size-$2$ circuit containing elements $e_1$ and $e_2$, 
we replace it 
with a new element~$e$ which is the parallel composition of the two
elements in the circuit.
In the matrix representing~$\calM$, the size-$2$ circuit
arises as a pair of identical columns. In the representation of
the new matroid, 
the columns corresponding to elements~$e_1$ and~$e_2$ are deleted and
the new element~$e$ corresponds to one of these columns.
The new weight $\gamma_e$ is given by 
$\gamma_{e_1} + \gamma_{e_2} + \gamma_{e_1} \gamma_{e_2}$
(see \cite[2.34]{JacksonSokal}).
The reason that we want to do this pre-processing is that, in the recursive step, we will want to be able
to contract an element of a circuit without  creating a loop.
The reason that we can do
the pre-processing 
without falsifying the conditions in the statement of the lemma
is that the region $-2 \leq \gamma \leq 0$ maintains itself for parallel composition:
If $-2\leq \gamma_{e_1} \leq 0$ and $-2\leq \gamma_{e_2} \leq 0$
then  $-2 \leq \gamma_e \leq 0$.

Now suppose that $\calM$ has no size-$2$ circuit. Let $r = r_\calM$ and
$E = E(\calM)$. Then

$$\Ztilde(\calM;q,\bgamma)=\sum_{A\subseteq E} q^{-r(A)} \prod_{e \in A} \gamma_e.$$
 .

{\bf Base Case:\quad} If $r(E)=|E|$ then, from the axioms of rank functions of
matroids, for every $S\subseteq E$, 
$r(S)=|S|$, so
$$\Ztilde(\calM;q,\bgamma)=\sum_{A\subseteq E} q^{-|A|} \prod_{e \in A} \gamma_e
= \sum_{A\subseteq E} 
\prod_{e \in A} \frac{\gamma_e}{q}.$$
The contribution from $A=\emptyset$ is~$1$ and the contribution from each other~$A$ is non-negative.
Also, $\Ztilde(\calM;q,\bgamma)$ can be computed by summing over the sets~$A$.

{\bf Recursive Step:\quad}
Pick any $e$ in a circuit.
Then from \cite[(4.18a)]{sokal},
$$\Ztilde(\calM;q,\bgamma) = 
\Ztilde(\calM \setminus e;q,\bgamma) + \frac{\gamma_e}{q}
 \Ztilde(\calM / e;q,\bgamma).$$
 
 Now the point is that the fraction $\gamma_e/q$ doesn't change the sign, and is easy
 to compute. Also,
 the two minors $\calM \setminus e$
 and $\calM / e$
 both satisfy the conditions of the theorem.
 
 Both minors are matroids on ground set $E \setminus e$.
 The rank functions are given by 
 $r_{\calM \setminus e}(A) = r(A)$
 and $r_{\calM /e}(A) = r(A\cup e)-1$.
 
 To see that $\calM/e$ has no loop, note
 that $r_{\calM/e}(\{e'\}) = r(\{e,e'\})-1$ and since $\{e,e'\}$ is not a circuit, by the
 pre-processing step, $r(\{e,e'\})=2$.
 \end{proof}

We can now classify the points in Region~K. 
See also \cite[Theorem 4.1]{JacksonSokal} which shows that the sign is trivial in this region.

\begin{lemma}\label{lem:K}
Let $(x,y)$ be a point with $x\geq 1$ and $-1\leq y < 0$. 
Let $q=(x-1)(y-1)$ and $\gamma=y-1$.
Then   $\msigntutte(q,\gamma)$ is in $\FP$ and $\mtutte(q,\gamma)$ is in $\numPQ$.
\end{lemma}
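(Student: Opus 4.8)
The plan is to reduce to the loopless case and invoke Lemma~\ref{lem:matroid}. First dispose of the boundary case $x=1$: there $q=(x-1)(y-1)=0$, which is degenerate for the $q^{-r}$ formulation of $\Ztilde$ and is handled trivially (at the level of graphs, $Z(G;0,\gamma)=0$ for every graph). So assume $x>1$. Then $q<0$, since $x-1>0$ and $y-1<0$, and the uniform weight $\gamma=y-1$ satisfies $-2\le\gamma<-1$, so in particular $\gamma\in[-2,0]$.

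Next I would strip the loops. In the matrix representing~$\calM$, the loops are exactly the all-zero columns, so they are trivial to locate; say there are $\ell$ of them. Applying Fact~\ref{fact:loop} once per loop (each loop $e$ has $\gamma_e=\gamma$, and $1+\gamma_e=1+(y-1)=y$) gives
\[
\Ztilde(\calM;q,\gamma)=y^{\ell}\,\Ztilde(\calM';q,\gamma),
\]
where $\calM'$ is obtained by deleting all $\ell$ loops, i.e.\ deleting the all-zero columns. Deletion never creates a loop, since $r_{\calM\setminus e}(\{e'\})=r_{\calM}(\{e'\})$, so $\calM'$ is a loopless binary matroid (represented by the surviving columns), each of whose elements still carries weight $\gamma\in[-2,0]$. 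Note also that $y\in[-1,0)$, so $y\neq 0$. Now apply Lemma~\ref{lem:matroid} to $\calM'$ with $q<0$ and all weights in $[-2,0]$: it yields $\Ztilde(\calM';q,\gamma)>0$ and puts the computation of $\Ztilde(\calM';q,\gamma)$ in $\numPQ$. Combining with the displayed identity: since $y<0$, $y\neq 0$, and $\Ztilde(\calM';q,\gamma)>0$, we get $\Ztilde(\calM;q,\gamma)\neq 0$ with sign $(-1)^{\ell}$; as $\ell$ is computed in polynomial time, $\msigntutte(q,\gamma)\in\FP$. For the value, $|y|^{\ell}$ is a positive rational of polynomially bounded bit-length, computable in $\FP$; writing $\Ztilde(\calM';q,\gamma)=a/b$ with $a\in\numP$, $b\in\FP$ and $|y|^{\ell}=p/r$ with $p,r\in\FP$, we have $|\Ztilde(\calM;q,\gamma)|=|y|^{\ell}\Ztilde(\calM';q,\gamma)=(ap)/(br)$ with $ap\in\numP$ and $br\in\FP$, so $|\Ztilde(\calM;q,\gamma)|\in\numPQ$. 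Together with the $\FP$ computation of the sign, this places $\mtutte(q,\gamma)$ in $\numPQ$; and by the graph--matroid correspondence~(\ref{eq:graphmatroid}), the same conclusions hold for $\signtutte(q,\gamma)$ and $\tutte(q,\gamma)$, so the point $(x,y)$ is in $\FP$.

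The main obstacle here is bookkeeping rather than mathematics: Lemma~\ref{lem:matroid} carries the real content, so what remains is to get three small points right — (i) the problem is posed for arbitrary binary matroids, which may have loops, so the reduction to the loopless case via Fact~\ref{fact:loop} is genuinely needed; (ii) each loop removal contributes the possibly-negative factor $y\in[-1,0)$, so the sign must be tracked separately from the magnitude and fed into the $\numPQ$ accounting as an $\FP$-computable sign; and (iii) the degenerate $q=0$ case at $x=1$ must be excluded or treated trivially. None of these is difficult, but all must be addressed for the statement to hold exactly as written.
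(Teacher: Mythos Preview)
Your proof is correct and follows essentially the same route as the paper: strip the loops via Fact~\ref{fact:loop}, picking up the factor $(1+\gamma)^k=y^k$, and then apply Lemma~\ref{lem:matroid} to the resulting loopless matroid with $q<0$ and weights in $[-2,0]$. You supply more detail than the paper does on the $\numPQ$ bookkeeping and on the degenerate $q=0$ boundary, but the argument is the same.
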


\begin{proof}
If $\calM$ has $k$ loops then, by 
Fact~\ref{fact:loop}, 
$\ZtildeTutte(\calM;q,\gamma) ={ (1+\gamma)}^k \ZtildeTutte(\calM';q,\gamma)$,
where $\calM'$ is the matrix formed from~$\calM$ by deleting these loops.
If $q=0$ then $\ZtildeTutte(\calM';q,\gamma)=1$.
Otherwise, $q<0$.
Now Lemma~\ref{lem:matroid} shows that
$\Ztilde(\calM';q,\gamma)> 0$ and can be computed in $\numPQ$. \end{proof} 

The following corollary follows immediately using Equation~(\ref{eq:graphmatroid}).

\begin{corollary}\label{cor:K}
Let $(x,y)$ be a point with $x\geq 1$ and $-1\leq y < 0$. 
Let $q=(x-1)(y-1)$ and $\gamma=y-1$.
Then   $\signtutte(q,\gamma)$ is in $\FP$ and $\tutte(q,\gamma)$ is in $\numPQ$.
\end{corollary}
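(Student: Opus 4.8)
The plan is to transfer the results of Lemma~\ref{lem:K} from binary matroids to graphs via the standard dictionary between a graph and its cycle matroid. First I would observe that, given an input graph $\graph=(\graphvertices,\graphedges)$, its vertex--edge incidence matrix $\matrix$ over $\gf2$ is computable in polynomial time and represents the cycle matroid $\matroid(\graph)$, so $\matroid(\graph)$ is binary and $(\matrix,\gamma)$ is a legal instance of $\msigntutte(q,\gamma)$ and of $\mtutte(q,\gamma)$. Then I would apply Equation~(\ref{eq:graphmatroid}), specialised to the constant weight $\gamma=y-1$, to write $Z(\graph;q,\gamma)=q^{|\graphvertices|}\,\ZtildeTutte(\matroid(\graph);q,\gamma)$. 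The whole content of the corollary is then that the extra scalar factor $q^{|\graphvertices|}$ is harmless.

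For the sign problem: in Region~K we have $q=(x-1)(y-1)\le 0$. If $x=1$ then $q=0$ and $Z(\graph;q,\gamma)=0$, since every term of~(\ref{eq:Zdef}) carries a factor $q^{\kappa(V,A)}$ with $\kappa(V,A)\ge1$; so both the sign and the value are $0$ and there is nothing to do. If $x>1$ then $q<0$, so $q^{|\graphvertices|}$ is nonzero with sign $(-1)^{|\graphvertices|}$, which is read off from $|\graphvertices|$ in polynomial time; multiplying this sign by the output of the $\FP$ algorithm for $\msigntutte(q,\gamma)$ on $(\matrix,\gamma)$ yields the sign of $Z(\graph;q,\gamma)$, so $\signtutte(q,\gamma)\in\FP$.

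For the evaluation problem: Lemma~\ref{lem:K} gives $\ZtildeTutte(\matroid(\graph);q,\gamma)\in\numPQ$ (with the overall sign, controlled by the parity of the number of loops since $1+\gamma=y<0$, separately computable in $\FP$). Writing the fixed rational $q$ in lowest terms as $p_1/p_2$, we have $q^{|\graphvertices|}=p_1^{|\graphvertices|}/p_2^{|\graphvertices|}$; absorbing $p_1^{|\graphvertices|}$ into the $\numP$ numerator and $p_2^{|\graphvertices|}$ into the $\FP$ denominator shows $\tutte(q,\gamma)\in\numPQ$. I expect no real obstacle here: all the substance lives in Lemmas~\ref{lem:matroid} and~\ref{lem:K}, and the only points needing care are the degenerate case $x=1$ and the elementary sign bookkeeping for the factors $q^{|\graphvertices|}$ and $(1+\gamma)^{k}=y^{k}$.
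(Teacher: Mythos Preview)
Your proposal is correct and is exactly the approach the paper takes: the paper's entire proof is the one-line remark that the corollary ``follows immediately using Equation~(\ref{eq:graphmatroid})'', and you have simply spelled out what that entails (constructing the binary cycle matroid, invoking Lemma~\ref{lem:K}, and doing the sign/scalar bookkeeping for the factor $q^{|V|}$, including the degenerate case $x=1$). There is nothing to add.
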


\subsubsection{Points in Region J}
\label{sec:posJ}
 
The following lemma classifies points in Region~J. See also \cite[Theorem 4.4]{JacksonSokal}. 
 
\begin{lemma}
Let $(x,y)$ be a point with $-1\leq x \leq 0$ and $y\geq 1$.
Let $q=(x-1)(y-1)$ and $\gamma=y-1$.
Then  $\msigntutte(q,\gamma)$ is in $\FP$ and $\mtutte(q,\gamma)$ is in $\numPQ$.
\end{lemma}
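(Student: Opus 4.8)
The plan is to prove this as the matroid-dual companion of Lemma~\ref{lem:K}: we follow the proof of that lemma, but exchange the roles of loops and coloops and insert one application of the duality identity Fact~\ref{fact:duality}, so that the substantive work is again discharged by Lemma~\ref{lem:matroid}. The coordinate arithmetic that makes this go through is the following. For $(x,y)$ with $-1\le x\le 0$ and $y\ge 1$ we have $x-1\in[-2,-1]$ and $y-1\ge 0$, hence $q=(x-1)(y-1)\le 0$ and $\gamma=y-1\ge 0$; and the ``dual'' weight is $q/\gamma = x-1\in[-2,-1]\subseteq[-2,0]$. So whereas in Region~K the given weight already lies in the interval $[-2,0]$ handled by Lemma~\ref{lem:matroid} and it is the \emph{loops} of $\calM$ that are peeled off first via Fact~\ref{fact:loop}, in Region~J it is the \emph{dual} weight that lies in $[-2,0]$, and correspondingly it is the \emph{coloops} of $\calM$ that must be peeled off so that the matroid we eventually feed to Lemma~\ref{lem:matroid} is loopless.

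Concretely, given a matrix representing a binary matroid $\calM$: if $y=1$ then $q=\gamma=0$ and the claim is trivial (as in Lemma~\ref{lem:K}), so assume $y>1$, whence $\gamma>0$ and $q<0$ strictly. Using Fact~\ref{fact:bridge} repeatedly, contract all $k$ coloops of $\calM$ to obtain a coloopless binary matroid $\calN$ with ground set $E'$, $|E'|=|E|-k$, rank $r_\calN(E')=r_\calM(E)-k$, and still uniform weight $\gamma$; this gives $\Ztilde(\calM;q,\gamma)=(1+\gamma/q)^{k}\,\Ztilde(\calN;q,\gamma)$, where $1+\gamma/q=1+1/(x-1)$ lies in $(0,\tfrac12]$ for $-1\le x<0$ and equals $0$ for $x=0$. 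Hence if $x=0$ and $k\ge1$ we get $\Ztilde(\calM;q,\gamma)=0$ and are done; otherwise $(1+\gamma/q)^{k}$ is a positive rational with polynomially many bits. Next apply Fact~\ref{fact:duality} with $\calN^{*}$ in the role of the matroid named ``$\calM$'' there (so its dual is $\calN$) and with the uniform weight $\gamma$, giving
\[
\Ztilde(\calN;q,\gamma)=q^{-r_\calN(E')}\,\gamma^{|E'|}\,\Ztilde\bigl(\calN^{*};q,\,q/\gamma\bigr),\qquad q/\gamma=x-1 .
\]
A binary representation of $\calN^{*}$ is computable in polynomial time. Since $\calN$ is coloopless, $\calN^{*}$ is loopless; and $q<0$ with uniform weight $x-1\in[-2,0]$, so Lemma~\ref{lem:matroid} applies: $\Ztilde(\calN^{*};q,x-1)>0$ and is computable in $\numPQ$.

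Putting the pieces together, in the non-degenerate case
\[
\Ztilde(\calM;q,\gamma)=(1+\gamma/q)^{k}\;q^{-r_\calN(E')}\;\gamma^{|E'|}\;\Ztilde(\calN^{*};q,x-1),
\]
where $(1+\gamma/q)^{k}>0$ and $\gamma^{|E'|}>0$ are polynomial-time-computable positive rationals, $q^{-r_\calN(E')}$ is a nonzero rational of sign $(-1)^{r_\calN(E')}$ (as $q<0$) with polynomially many bits, $r_\calN(E')=r_\calM(E)-k$ is read off from the rank of the input matrix, and $\Ztilde(\calN^{*};q,x-1)$ is a positive member of $\numPQ$. Hence the sign of $\Ztilde(\calM;q,\gamma)$ is $0$ (in the degenerate case) or $(-1)^{r_\calN(E')}$ otherwise, computable in $\FP$, and $|\Ztilde(\calM;q,\gamma)|$ is a fixed polynomial-time-computable positive rational times a $\numPQ$ function, hence in $\numPQ$. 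This establishes $\msigntutte(q,\gamma)\in\FP$ and $\mtutte(q,\gamma)\in\numPQ$. (As after Lemma~\ref{lem:K}, the corresponding statement for graphs then follows from Equation~(\ref{eq:graphmatroid}).)

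The main obstacle is bookkeeping rather than genuine difficulty: all of the positivity and the $\numPQ$-containment is imported from Lemma~\ref{lem:matroid}, whose hypothesis $q<0$ holds throughout the region except on the line $y=1$. The points needing care are: invoking Fact~\ref{fact:duality} in the correct direction, so that $\Ztilde(\calN;\cdot)$ is expressed through $\Ztilde(\calN^{*};\cdot)$ and not the other way round; ensuring the matroid handed to Lemma~\ref{lem:matroid} is loopless, which is exactly why one contracts the coloops of $\calM$ (via Fact~\ref{fact:bridge}) instead of deleting its loops as in Lemma~\ref{lem:K}; and noticing that the accumulated coloop factor $1+\gamma/q$ vanishes precisely on the boundary line $x=0$, forcing the conclusion $\Ztilde(\calM;q,\gamma)=0$ there when a coloop is present. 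One also uses the standard, already-cited facts that the dual and the contractions of a binary matroid are binary and computable in polynomial time from a matrix representation.
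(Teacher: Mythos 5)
Your proof is correct and mirrors the paper's approach, whose entire proof is the one line ``This follows from Fact~\ref{fact:duality} and from Lemma~\ref{lem:K}''; you have simply unrolled that dependency — contracting coloops via Fact~\ref{fact:bridge}, dualizing via Fact~\ref{fact:duality}, and invoking Lemma~\ref{lem:matroid} directly — which amounts to the same computation, since contracting coloops and then dualizing is the same as dualizing and then deleting loops (the route taken inside the proof of Lemma~\ref{lem:K}). Your explicit handling of the boundary $x=0$, where the coloop factor $1+\gamma/q$ vanishes and the dual point has $y^*=0$ (strictly outside the range $-1\le y<0$ stated in Lemma~\ref{lem:K}), is a welcome bit of extra care that the paper's terse citation glosses over.
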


\begin{proof} This follows from Fact~\ref{fact:duality} and from Lemma~\ref{lem:K}. \end{proof} 

The following corollary follows immediately using Equation~(\ref{eq:graphmatroid}).
\begin{corollary}\label{cor:J}
Let $(x,y)$ be a point with $-1\leq x \leq 0$ and $y\geq 1$.
Let $q=(x-1)(y-1)$ and $\gamma=y-1$.
Then  $\signtutte(q,\gamma)$ is in $\FP$ and $\tutte(q,\gamma)$ is in $\numPQ$.
\end{corollary}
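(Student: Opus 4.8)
The plan is to deduce this from the matroid statement at the start of Section~\ref{sec:posJ}, by the same passage that takes Lemma~\ref{lem:K} to Corollary~\ref{cor:K}. Given a graph $\graph=(\graphvertices,\graphedges)$, form its cycle matroid $\matroid(\graph)$; this is a binary matroid, represented over $\gf2$ by the vertex--edge incidence matrix of $\graph$, and that matrix can be written down from $\graph$ in polynomial time. Equation~(\ref{eq:graphmatroid}) then gives
$$Z(\graph;q,\gamma) = q^{|\graphvertices|}\,\ZtildeTutte(\matroid(\graph);q,\gamma),$$
so the whole question reduces to understanding the scalar factor $q^{|\graphvertices|}$ and invoking the Section~\ref{sec:posJ} lemma (which applies verbatim, since the hypotheses $-1\le x\le 0$, $y\ge1$ are the same and cycle matroids are exactly the allowed inputs).

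First I would dispose of the degenerate line $y=1$. There $q=(x-1)(y-1)=0$ and $\gamma=y-1=0$, and in $Z(\graph;0,0)$ every edge-set $A\neq\emptyset$ contributes $\gamma^{|A|}=0$, so $Z(\graph;0,0)=0^{|\graphvertices|}$; this is $0$ unless $\graph$ has no vertices, and in either case it is trivially computable and has trivial sign, so both claims hold. For the rest of the region, $y>1$, so $x-1\in[-2,-1)$ and $y-1>0$, whence $q<0$; thus $q^{|\graphvertices|}$ is a nonzero rational whose sign is $(-1)^{|\graphvertices|}$, computable from $\graph$ in polynomial time.

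Now I would assemble the two ingredients. Since $\msigntutte(q,\gamma)\in\FP$ by the Section~\ref{sec:posJ} lemma, on input $\graph$ we compute the sign of $\ZtildeTutte(\matroid(\graph);q,\gamma)$ and multiply by $(-1)^{|\graphvertices|}$ to recover the sign of $Z(\graph;q,\gamma)$; hence $\signtutte(q,\gamma)\in\FP$. For the value, the same lemma gives $\mtutte(q,\gamma)\in\numPQ$ (with the sign tracked by $\msigntutte(q,\gamma)$, as extracted in Lemmas~\ref{lem:matroid} and~\ref{lem:K}), say $\ZtildeTutte(\matroid(\graph);q,\gamma)=\pm a(\graph)/b(\graph)$ with $a\in\numP$, $b\in\FP$; composing with the polynomial-time map from a graph to its incidence matrix (using closure of $\numP$ and $\FP$ under polynomial-time pre-composition) and then absorbing the rational factor $q^{|\graphvertices|}$ — write $q=p/p'$ in lowest terms and multiply the numerator and denominator by $|p|^{|\graphvertices|}$ and ${p'}^{|\graphvertices|}$, the overall sign $(-1)^{|\graphvertices|}$ being carried by the $\signtutte$ computation — yields $\tutte(q,\gamma)\in\numPQ$.

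I do not expect a real obstacle here: all of the mathematical content sits in the Section~\ref{sec:posJ} lemma. The only points needing care are (i) the degenerate corner $y=1$ (and, vacuously, the empty graph), which must be handled separately, and (ii) the observation that $q<0$ strictly throughout the interior of the region, so the factor $q^{|\graphvertices|}$ contributes the sign $(-1)^{|\graphvertices|}$ and must be tracked explicitly rather than ignored. This is exactly parallel to the step from Lemma~\ref{lem:K} to Corollary~\ref{cor:K}.
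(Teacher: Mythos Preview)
Your proposal is correct and follows exactly the paper's approach: the paper's proof is the single sentence ``follows immediately using Equation~(\ref{eq:graphmatroid}),'' and you have simply unpacked this, handling the degenerate line $y=1$ and the sign factor $q^{|V|}$ explicitly. The extra care you take with the $q=0$ boundary and with tracking $(-1)^{|V|}$ is not in the paper but is exactly what ``immediately'' glosses over.
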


\subsection{Points in Regions L and M}
We use the following Lemma.
The statement is a slight generalisation of \cite[Theorem 5.4]{JacksonSokal}.
However, their proof (a straightforward generalisation of their
proof of   \cite[Theorem 5.1]{JacksonSokal}) suffices.

\begin{lemma} (Jackson and Sokal)
\label{newJS}
Let $\calM$ be a matroid with ground set~$E$ and let $q\in(0,1)$.
 Suppose that $\bgamma$ is a weight function such that
\begin{enumerate}
\item $\gamma_e> -1$ for every loop~$e$;
\item $\gamma_e < -q$ for every coloop~$e$; and
\item $-1-\sqrt{1-q}< \gamma_e < -1+\sqrt{1-q}$ for every normal (i.e., non-loop and
non-coloop) element~$e$
\end{enumerate}
Then 
\begin{equation}
\label{newsigneq} {(-1)}^{r_\calM(E)} \ZtildeTutte(\calM;q,\bgamma)>0
\end{equation}
 and
the problem of computing 
$\ZtildeTutte(\calM;q,\bgamma)$, given such a matroid $\calM$ is in $\numPQ$.
\end{lemma}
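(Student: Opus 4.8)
The plan is to establish the inequality~(\ref{newsigneq}) by induction on $|E|$, following the strategy of \cite[Theorem~5.1]{JacksonSokal} but keeping track of enough extra information to obtain the $\numPQ$ statement along the way (much as in the proof of Lemma~\ref{lem:matroid}). The base case $E=\emptyset$ is immediate, since then $r_\calM(E)=0$ and $\Ztilde(\calM;q,\bgamma)=1>0$. For the inductive step I would first dispose of loops and coloops. If $e$ is a loop, then $1+\gamma_e>0$ by hypothesis~(1), so Fact~\ref{fact:loop} gives $\Ztilde(\calM;q,\bgamma)=(1+\gamma_e)\Ztilde(\calM\setminus e;q,\bgamma)$ with $r_{\calM\setminus e}(E\setminus e)=r_\calM(E)$; deleting a loop does not change the loop/coloop/normal status of any other element, so the hypotheses pass to $\calM\setminus e$ and induction finishes this case. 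If $e$ is a coloop, then $1+\gamma_e/q<0$ (using $q>0$ and hypothesis~(2)), so Fact~\ref{fact:bridge} gives $\Ztilde(\calM;q,\bgamma)=(1+\gamma_e/q)\Ztilde(\calM/e;q,\bgamma)$ with $r_{\calM/e}(E\setminus e)=r_\calM(E)-1$; the sign flip coming from the negative factor cancels the one coming from the drop in rank, so induction applies again.

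Next I would remove every two-element circuit and every two-element cocircuit by parallel and series composition respectively; since each such step strictly decreases $|E|$, this terminates. Hypothesis~(3) is designed precisely so that the relevant weight intervals are closed under both operations. Combining two normal elements with weights $\gamma_1,\gamma_2\in(-1-\sqrt{1-q},\,-1+\sqrt{1-q})$ in parallel produces weight $\gamma=(1+\gamma_1)(1+\gamma_2)-1$; as $|1+\gamma_i|<\sqrt{1-q}$ this lies in $(q-2,\,-q)$, and for $q\in(0,1)$ one has $q-2>-1-\sqrt{1-q}$ and $-q<-1+\sqrt{1-q}$ (both amount to $\sqrt{1-q}>1-q$), so $(q-2,-q)$ is contained in the normal interval and also in $(-\infty,-q)$. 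Hence, whatever the status of the new element in the reduced matroid (it cannot be a loop, its singleton rank being~$1$), the hypotheses survive; and parallel composition changes neither $\Ztilde$ nor $r_\calM(E)$. For a two-element cocircuit the combined weight is $\gamma_1\gamma_2/(q+\gamma_1+\gamma_2)$, and one checks $q+\gamma_1+\gamma_2<-(1-\sqrt{1-q})^2<0$ while $\gamma_1\gamma_2>0$, so a short interval chase again places the weight in the required range --- alternatively one invokes Fact~\ref{fact:duality} to reduce this to the parallel case for the dual matroid. Series composition multiplies $\Ztilde$ by an explicit, easily computed, nonzero rational factor; its sign, together with the accompanying decrease of $r_\calM(E)$ by~$1$, leaves~(\ref{newsigneq}) unaffected.

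After these reductions I may assume $\calM$ is simple and cosimple, so every element is normal. Picking any~$e$ and applying deletion--contraction (\cite[(4.18a)]{sokal}) gives
\[
\Ztilde(\calM;q,\bgamma)=\Ztilde(\calM\setminus e;q,\bgamma)+\frac{\gamma_e}{q}\,\Ztilde(\calM/e;q,\bgamma).
\]
Since $\calM$ has no two-element circuit, contracting $e$ creates no new loop, and contraction never creates a coloop, so $\calM/e$ is loopless and coloopless; since $\calM$ has no two-element cocircuit, deleting $e$ creates no new coloop, and deletion never creates a loop, so $\calM\setminus e$ is loopless and coloopless. In both minors every element is normal with unchanged weight, so hypothesis~(3) holds and the inductive hypothesis applies. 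Because $e$ is not a coloop, $r_{\calM\setminus e}(E\setminus e)=r_\calM(E)$, so $\Ztilde(\calM\setminus e;q,\bgamma)$ has sign ${(-1)}^{r_\calM(E)}$; because $e$ is not a loop, $r_{\calM/e}(E\setminus e)=r_\calM(E)-1$, so $\Ztilde(\calM/e;q,\bgamma)$ has sign ${(-1)}^{r_\calM(E)-1}$; and since a normal weight lies in $(-2,0)$ while $q>0$, the factor $\gamma_e/q$ is negative, so the second summand also has sign ${(-1)}^{r_\calM(E)}$. Both summands thus share the sign ${(-1)}^{r_\calM(E)}$, so their sum is nonzero with that sign, proving~(\ref{newsigneq}).

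Finally, for the $\numPQ$ claim the sign result lets me rewrite each recursion above as a recursion for the \emph{nonnegative} quantity ${(-1)}^{r_\calM(E)}\Ztilde(\calM;q,\bgamma)$ with nonnegative rational coefficients: $1+\gamma_e>0$ for a loop, $-(1+\gamma_e/q)>0$ for a coloop, positive (after the rank correction) for the series and parallel steps, and $-\gamma_e/q>0$ for the contraction term in the base case. Clearing one global denominator $B\in\FP$ (a suitable power of the numerator of~$q$ times the product of the denominators of the weights) turns this into a recursion with nonnegative integer coefficients, which --- exactly as in \cite[Section~2.2]{tutteone} and the proof of Lemma~\ref{lem:matroid}, where contractions, deletions and parallel compositions of the representing matrix are performed in polynomial time --- is realised by a polynomial-time nondeterministic machine whose number of accepting computations on input~$\calM$ equals $B\cdot{(-1)}^{r_\calM(E)}\Ztilde(\calM;q,\bgamma)$; hence this quantity is in $\numP$ and (as $r_\calM(E)$ is polynomial-time computable) $\Ztilde(\calM;q,\bgamma)$ is in $\numPQ$. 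I expect the main obstacle to be the interval bookkeeping of the second paragraph --- confirming that Jackson and Sokal's bounds $-1\pm\sqrt{1-q}$ are really preserved under \emph{both} parallel and series composition, the series direction being the delicate one --- together with the care required to let the nondeterministic machine absorb its exponentially many branches under one fixed denominator.
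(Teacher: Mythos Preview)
Your proposal is correct and follows essentially the same inductive argument as the paper: both proceed by induction on $|E|$, disposing of loops and coloops via Facts~\ref{fact:loop} and~\ref{fact:bridge}, handling size-$2$ circuits and cocircuits by parallel and series composition (with the interval-preservation checks deferred to Jackson--Sokal), and finishing with deletion--contraction once the matroid is simple and cosimple. Your treatment of the $\numPQ$ claim is in fact more explicit than the paper's, which leaves that part implicit in the recursive structure.
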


\begin{proof} We follow the inductive argument that Jackson and Sokal use to prove (\ref{newsigneq}) for the graphical case.
This is the proof of \cite[Theorem 5.1]{JacksonSokal}.
The induction is on~$m$,  the  number of elements  in the ground set of~$\calM$.
If  $m=0$, then $r_\calM(E) = 0$ so 
$\ZtildeTutte(\calM;q,\bgamma) = 1$, so the lemma is true.
For $m>0$, there are five cases. We apply these in order, so in each case we assume that the previous cases don't apply.
\begin{enumerate}
\item If $\calM$ has a loop~$e$ then by Fact~\ref{fact:loop}, 
$$\Ztilde(\calM;q,\bgamma) = (1+\gamma_e) \Ztilde(\calM \setminus e;q,\bgamma).$$
Note that $1+\gamma_e > 0$ 
and $r_{\calM \setminus e}(E\setminus e) = r_\calM(E\setminus e) = r_\calM(E)$. 
Thus, the result follows by induction.
\item If $\calM$ has a coloop~$e$ then by Fact~\ref{fact:bridge},
$$\Ztilde(\calM;q,\bgamma) = (1+\gamma_e/q) \Ztilde(\calM \contract e;q,\bgamma).$$
Note that $1+\gamma_e/q<0$ and $r_{\calM\contract e}(E\setminus e) = 
r_\calM(E)-r_\calM(e)=
r_\calM(E)-1$.
Thus, the result follows by induction.
\item Suppose that $\calM$ has a size-$2$ circuit consisting of edges~$e_1$ and~$e_2$.
Let $\calM'$ be the matroid formed from~$\calM$ by deleting~$e_2$
and let $\bgamma'$ be the weight function that is the same as $\bgamma$ 
except that $\gamma'_{e_1}$ is the effective weight from the parallel composition of~$e_1$
and~$e_2$ --- $\gamma'_{e_1} = \gamma_{e_1} + \gamma_{e_2} + \gamma_{e_1} \gamma_{e_2}$.
Then, as in the proof of Lemma~\ref{lem:matroid} (see 
\cite[(2.34)]{JacksonSokal}),
$\Ztilde(\calM;q,\bgamma) = \Ztilde(\calM';q,\bgamma')$. Also,
 $  r_{\calM'}(E\setminus e_2)
 = r_\calM(E \setminus e_2) = r_\calM(E)$. Finally,
Jackson and Sokal show that $\calM'$ and $\bgamma'$ satisfy the conditions of the lemma 
(so $\Ztilde(\calM;q,\bgamma)$ can be computed by induction).
\item Suppose that $\calM$ has a size-$2$ cocircuit  
 consisting of edges~$e_1$ and~$e_2$.
Let $\calM'$ be the matroid formed from~$\calM$ by contracting~$e_2$
and let $\bgamma'$ be the weight function that is the same as $\bgamma$ 
except that $\gamma'_{e_1}$ is the effective weight from the series composition of~$e_1$
and~$e_2$ --- $\gamma'_{e_1} =  \gamma_{e_1} \gamma_{e_2} / (q+\gamma_{e_1} + \gamma_{e_2})$.
Then from \cite[(2.40)]{JacksonSokal}
$\Ztilde(\calM;q,\bgamma) = \left(\frac{q+\gamma_{e_1} + \gamma_{e_2}}{q}\right)\Ztilde(\calM';q,\bgamma')$. Also,
Jackson and Sokal show that
$$ \left(\frac{q+\gamma_{e_1} + \gamma_{e_2}}{q}\right)<0.$$
This is what we require, since
 $  r_{\calM'}(E\setminus e_2) = r_\calM(E)-r_\calM(e_2) = r_\calM(E)-1$.
Finally,
Jackson and Sokal show that $\calM'$ and $\bgamma'$ satisfy the conditions of the lemma 
(so $\Ztilde(\calM;q,\bgamma)$ can be computed by induction). 
\item Otherwise, pick any ground set element~$e$ and
apply the deletion-contraction identity \cite[(2.29a)]{JacksonSokal}
$$\Ztilde(\calM;q,\bgamma) = \Ztilde(\calM\setminus e;q,\bgamma) + \frac{\gamma_e}{q}\Ztilde(\calM / e;q, \bgamma).$$
Since $e$ is not a cocircuit, $r_{\calM \setminus e}(E\setminus e) = r_\calM(E)$.
As Jackson, and Sokal argue,   $\calM\setminus e$ and $\bgamma$ satisfy the conditions of the lemma.
Also, $\gamma_e/q<0$
and $r_{\calM/e}(M\setminus e) = r_{\calM}(E)-1$. Again, Jackson and Sokal argue that
  $\calM/e$ and $\bgamma$ satisfy the conditions of the lemma, so the result follows by induction.
\end{enumerate}  
\end{proof} 
 
 \subsection{Points in Region L}
  
\begin{lemma} 
\label{lem:L}
Let $(x,y)$ be a point with $0<x<1$ and $-x<y<0$.
Let $q=(x-1)(y-1)$ and $\gamma=y-1$. Then   $\msigntutte(q,\gamma)$ is in $\FP$ and $\mtutte(q,\gamma)$ is in $\numPQ$.
\end{lemma}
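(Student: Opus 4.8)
The plan is to reduce the statement to Lemma~\ref{newJS} of Jackson and Sokal, in exactly the way that Lemma~\ref{lem:K} was reduced to Lemma~\ref{lem:matroid}. First I would locate Region~L in parameter space. If $0<x<1$ and $-x<y<0$ then $\gamma=y-1\in(-2,-1)$, and $q=(x-1)(y-1)$ is a product of two negative reals, so $q>0$; moreover the defining inequality $y>-x$, i.e.\ $y-1>-1-x$, becomes after multiplying by $x-1<0$ (which reverses it) the bound $q=(x-1)(y-1)<(x-1)(-1-x)=1-x^{2}<1$. Hence throughout Region~L we have $q\in(0,1)$ and $\gamma\in(-2,-1)$, so the hypothesis $q\in(0,1)$ of Lemma~\ref{newJS} is satisfied, and $q\neq0$, $1+\gamma=y\neq0$.

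Next I would strip loops, which is forced because condition~(1) of Lemma~\ref{newJS} asks for $\gamma_e>-1$ at a loop whereas our weight has $\gamma<-1$. Given a $\gf2$-matrix representing a binary matroid~$\calM$, the loops are precisely the zero columns; let $k$ be their number and let $\calM'$ be the loopless binary matroid obtained by deleting them (its representing matrix is obtained by discarding the zero columns). By Fact~\ref{fact:loop}, $\ZtildeTutte(\calM;q,\gamma)=(1+\gamma)^{k}\,\ZtildeTutte(\calM';q,\gamma)=y^{k}\,\ZtildeTutte(\calM';q,\gamma)$, where $y^{k}$ is a fixed, nonzero, trivially computable rational of sign $(-1)^{k}$ since $y<0$.

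Then I would check that $\calM'$ together with the uniform weight $\gamma=y-1$ meets the three conditions of Lemma~\ref{newJS}. Condition~(1) is vacuous, as $\calM'$ is loopless. For a coloop we need $\gamma<-q$, which holds since $\gamma<-1$ and $-q>-1$ (because $q<1$). For a normal element we need $-1-\sqrt{1-q}<\gamma<-1+\sqrt{1-q}$: the upper bound is immediate from $\gamma<-1<-1+\sqrt{1-q}$, and the lower bound, since $\gamma+1=y<0$, is equivalent to $(\gamma+1)^{2}<1-q$, i.e.\ to $y^{2}<x+y-xy$ (using $1-q=x+y-xy$); rearranging gives $-y(1-y)<x(1-y)$, and dividing by $1-y>0$ yields exactly $y>-x$, the defining inequality of Region~L. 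Applying Lemma~\ref{newJS} gives ${(-1)}^{r_{\calM'}(E')}\ZtildeTutte(\calM';q,\gamma)>0$ and places the evaluation of $\ZtildeTutte(\calM';q,\gamma)$ in $\numPQ$.

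Finally I would assemble the conclusions exactly as in the proof of Lemma~\ref{lem:K}. Both $k$ (the number of zero columns) and $r_{\calM'}(E')$ (the $\gf2$-rank of the reduced matrix) are computable in polynomial time, so from $\ZtildeTutte(\calM;q,\gamma)=y^{k}\ZtildeTutte(\calM';q,\gamma)$ we see $\ZtildeTutte(\calM;q,\gamma)$ is nonzero with sign ${(-1)}^{k+r_{\calM'}(E')}$, giving $\msigntutte(q,\gamma)\in\FP$; and multiplying the $\numPQ$ quantity $\ZtildeTutte(\calM';q,\gamma)$ by the easily computed rational $y^{k}$ keeps it in $\numPQ$ (with the sign bookkeeping handled precisely as in Lemma~\ref{lem:K}), giving $\mtutte(q,\gamma)\in\numPQ$. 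The only step that needs genuine attention is the algebraic verification that $y>-x$ is exactly the condition forcing $\gamma>-1-\sqrt{1-q}$; once loops are removed, everything else is routine.
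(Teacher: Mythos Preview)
Your proof is correct and follows essentially the same approach as the paper: verify $q\in(0,1)$, strip loops via Fact~\ref{fact:loop}, check that the uniform weight $\gamma$ satisfies the three hypotheses of Lemma~\ref{newJS} on the loopless matroid, and conclude. Your argument for $\gamma<-q$ (using $\gamma<-1<-q$ from $q<1$) is in fact slightly slicker than the paper's, and you are more explicit about the sign and $\numPQ$ bookkeeping, but the substance is identical.
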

\begin{proof}
Note that $q = (1-x)(1-y) < (1-x)(1+x) = 1-x^2 < 1$.
Also, $q > (1-x) > 0$. Thus, $q\in (0,1)$.

Now since $y>-x$ we have $y(y-1) < (-x)(y-1)$ 
so $y^2-y< x-xy$
which implies $y^2 < x+y-xy=1-q$.
This implies that $y < |y|< \sqrt{1-q}$ 
so $y > -\sqrt{1-q}$. Thus, 
$-1-\sqrt{1-q} < \gamma <-1 + \sqrt{1-q}$.
 
Finally, since $0<x(1-y)$, 
we have $y<y+x(1-y) = 1- q$ so
$\gamma < -q$.

Now let $\calM$ be a matroid and
let $\bgamma$ be a weight function assigning weight~$\gamma$ to every element the ground-set  of $\calM$.
If $\calM$ has $k$  loops then  by Fact~\ref{fact:loop}, 
$\ZtildeTutte(\calM;q,\bgamma) ={ (1+\gamma)}^k \ZtildeTutte(\calM';q,\bgamma)$,
where $\calM'$ is the matroid formed from~$\calM$ by deleting these loops.
Note that $\calM'$ and $\bgamma$ satisfy the hypotheses of Lemma~\ref{newJS}.   
\end{proof}

The following corollary follows immediately using Equation~(\ref{eq:graphmatroid}).
\begin{corollary}\label{cor:L}
Let $(x,y)$ be a point with $0<x<1$ and $-x<y<0$.
Let $q=(x-1)(y-1)$ and $\gamma=y-1$. Then   $\signtutte(q,\gamma)$ is in $\FP$ and $\tutte(q,\gamma)$ is in $\numPQ$.
\end{corollary}

\subsection{Points in Region M}

\begin{lemma} 
\label{lem:M}
Let $(x,y)$ be a point with $0<y<1$ and $-y<x<0$.
Let $q=(x-1)(y-1)$ and $\gamma=y-1$. Then   $\msigntutte(q,\gamma)$ is in $\FP$ 
and $\mtutte(q,\gamma)$ is in $\numPQ$.
\end{lemma}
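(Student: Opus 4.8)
\emph{Set-up.} The plan is to prove Lemma~\ref{lem:M} by matroid duality, exactly as Region~J was handled from Region~K: Region~M is the image of Region~L under the involution that swaps the coordinates $x$ and~$y$, and this involution corresponds to passing to the dual matroid (Fact~\ref{fact:duality}). Fix $(x,y)$ with $0<y<1$ and $-y<x<0$, and set $q=(x-1)(y-1)$ and $\gamma=y-1$. Arguing as in Lemma~\ref{lem:L} with the roles of $x$ and~$y$ interchanged, $q=(1-x)(1-y)$, and since $1<1-x<1+y$ we get $q<(1+y)(1-y)=1-y^{2}<1$, while $q>0$; thus $q\in(0,1)$, $\gamma\in(-1,0)$, and in particular $q\neq0$ and $\gamma\neq0$. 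Finally, the point $(x',y')=(y,x)$ satisfies $0<x'<1$ and $-x'<y'<0$, so it lies in Region~L, with $q'=(x'-1)(y'-1)=q$ and $\gamma'=y'-1=x-1=q/\gamma$.

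\emph{Reduction.} Given a matrix representing a binary matroid~$\calM$ with ground set~$E$, compute in polynomial time a matrix representing the dual $\calM^{*}$ (possible by~\cite[p.63]{TruemperBook}), together with $|E|$ and $r_{\calM}(E)$. Applying Fact~\ref{fact:duality} to the matroid~$\calM^{*}$, whose dual is~$\calM$, with the constant weight function equal to~$\gamma$ on the ground set (so that the associated dual weight is $q/\gamma=\gamma'$), we obtain
\begin{equation*}
\Ztilde(\calM;q,\gamma)=q^{-r_{\calM}(E)}\,\gamma^{\,|E|}\,\Ztilde(\calM^{*};q,\gamma').
\end{equation*}
Now $\calM^{*}$ together with the constant weight~$\gamma'$ is exactly an instance of $\msigntutte(q',\gamma')$ and of $\mtutte(q',\gamma')$ for the Region~L point $(x',y')$, so by Lemma~\ref{lem:L} the sign of $\Ztilde(\calM^{*};q,\gamma')$ is computable in $\FP$ and $\Ztilde(\calM^{*};q,\gamma')$ is computable in $\numPQ$.

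\emph{Finishing.} The prefactor $q^{-r_{\calM}(E)}\gamma^{\,|E|}$ is a nonzero rational computable in polynomial time from the input, and its sign is unambiguous: $q^{-r_{\calM}(E)}>0$ since $q>0$, and $\gamma^{\,|E|}$ has sign $(-1)^{|E|}$ since $\gamma\in(-1,0)$. Hence the sign of $\Ztilde(\calM;q,\gamma)$ equals $(-1)^{|E|}$ times the sign of $\Ztilde(\calM^{*};q,\gamma')$, and so $\msigntutte(q,\gamma)\in\FP$. For the magnitude, write the fixed rationals as $q=q_{1}/q_{2}$ and $|\gamma|=g_{1}/g_{2}$ in lowest terms, and write $|\Ztilde(\calM^{*};q,\gamma')|=a/b$ with $a\in\numP$ and $b\in\FP$ (from Lemma~\ref{lem:L}); then
\begin{equation*}
|\Ztilde(\calM;q,\gamma)|=\frac{q_{2}^{\,r_{\calM}(E)}\,g_{1}^{\,|E|}\,a}{q_{1}^{\,r_{\calM}(E)}\,g_{2}^{\,|E|}\,b},
\end{equation*}
and since $q_{2}^{\,r_{\calM}(E)}g_{1}^{\,|E|}$ is a natural number computable in polynomial time, the numerator lies in~$\numP$ and the denominator in~$\FP$; thus $\mtutte(q,\gamma)\in\numPQ$.

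\emph{Main obstacle.} There is essentially no obstacle here beyond bookkeeping: the proof is a routine application of matroid duality, mirroring the Region~J case. The only points requiring care are checking that the coordinate swap $(x,y)\mapsto(y,x)$ lands inside Region~L (done above) and verifying that membership in $\FP$ and in $\numPQ$ survives multiplication by the explicit, input-dependent but polynomial-time-computable prefactor $q^{-r_{\calM}(E)}\gamma^{\,|E|}$, whose sign is pinned down precisely because $q>0$ and $-1<\gamma<0$.
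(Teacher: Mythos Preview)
Your proof is correct and follows exactly the approach of the paper, which proves the lemma in a single line: ``This follows from Fact~\ref{fact:duality} and from Lemma~\ref{lem:L}.'' You have simply spelled out the details that the paper leaves implicit --- verifying that $(y,x)$ lies in Region~L, applying Fact~\ref{fact:duality} in the correct direction, and checking that the prefactor $q^{-r_{\calM}(E)}\gamma^{|E|}$ is harmless for both $\FP$ and $\numPQ$ membership.
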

 \begin{proof}
This follows from Fact~\ref{fact:duality} and from Lemma~\ref{lem:L}.
\end{proof}

The following corollary follows immediately using Equation~(\ref{eq:graphmatroid}).
 
\begin{corollary}\label{cor:M} 
Let $(x,y)$ be a point with $0<y<1$ and $-y<x<0$.
Let $q=(x-1)(y-1)$ and $\gamma=y-1$. Then   $\signtutte(q,\gamma)$ is in $\FP$ 
and $\tutte(q,\gamma)$ is in $\numPQ$.
\end{corollary} 
 
\section{Putting things together for points with $|y|<1$}

Collecting Observations~\ref{obs:posBF} and~\ref{obs:Fpos} and Corollaries~\ref{cor:B},
\ref{cor:BGverticalboundary}, \ref{cor:GIH},  \ref{cor:F},  \ref{cor:FBboundary} and \ref{cor:GFboundary}.

\begin{corollary}
Suppose $(x,y)$ is a point satisfying $|y|<1$ such that $q=(x-1)(y-1)\geq 32/27$.
Let $\gamma=y-1$.
\begin{itemize}
\item  If $(x,y)=(-1,0)$ then $\signtutte(q,\gamma)$ and $\tutte(q,\gamma)$ are in $\FP$.
\item  If $(x,y)=(x,0)$ for any integer $x<-1$ then $\signtutte(q,\gamma)$ is NP-complete.
$\tutte(q,\gamma)$ is in  $\numPQ$.
\item If $x\leq -1$ and $0 < y < 1$ and $q$ is an integer
then $Z(G;q,\gamma)> 0$ so $\signtutte(q,\gamma)$ is in $\FP$.
Also, $\tutte(q,\gamma)$ is in $\numPQ$.
\item Otherwise,  $\signtutte(q,\gamma)$ is $\numP$-hard.
\end{itemize}
\label{cor:meetJS}
\end{corollary}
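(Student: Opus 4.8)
The plan is to treat this corollary as an assembly: the hypotheses $|y|<1$ and $q=(x-1)(y-1)\ge 32/27$ cut the relevant part of the $(x,y)$-plane into finitely many pieces, each of which has already been classified --- in Section~\ref{sec:red} for the hard pieces, and in Observations~\ref{obs:posBF} and~\ref{obs:Fpos} for the tractable and NP-complete pieces --- so all that remains is to check that these pieces exhaust the region and to cite the appropriate earlier result for each.

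First I would record the elementary structural facts that drive the case analysis. Since $q\ge 32/27>0$ and $|y|<1$ forces $y-1<0$, every point in question has $x<1$; combining $q\ge 32/27$ with $1-y<2$ gives $1-x>16/27$, hence $x<11/27$; and when $0\le y<1$ we have $1-y\le 1$, which forces $x\le -5/27<0$. I would then split on $y$. When $y=0$ (so $q=1-x$ and $x\le -5/27$): the point $(-1,0)$ is the first bullet and integer $x<-1$ the second, both by Observation~\ref{obs:posBF}; non-integer $x<-1$ is $\numP$-hard by Corollary~\ref{cor:FBboundary}; and for $-1<x<-5/27$ (which then has $q>32/27$, i.e.\ lies in Region~G) Corollary~\ref{cor:GIH} applies. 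When $-1<y<0$: for $x<-1$ the point is in Region~B (Corollary~\ref{cor:B}), for $x=-1$ we use Corollary~\ref{cor:BGverticalboundary}, and for $-1<x<11/27$ we have $\max(|x|,|y|)<1$ and $q>1$, so when $q>32/27$ (Region~G) Corollary~\ref{cor:GIH} applies. When $0<y<1$ (so $x\le -5/27$): if $x\le -1$ then integral $q$ gives the third bullet by Observation~\ref{obs:Fpos}, while non-integral $q$ is $\numP$-hard by Corollary~\ref{cor:F} (for $x<-1$) or by Corollary~\ref{cor:GFboundary} (for $x=-1$, where $q=2(1-y)>32/27$ forces $y<11/27$); and if $-1<x<-5/27$ then $\max(|x|,|y|)<1$, so for $q>32/27$ Corollary~\ref{cor:GIH} applies again. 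Each point of the region lands in exactly one of these cases, and the corresponding cited result gives precisely the claim of the matching bullet (including $\numPQ$-membership of $\tutte(q,\gamma)$ at the tractable and NP-complete points), which completes the proof.

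I do not expect a genuine obstacle: all the mathematical content is already in place --- $\numP$-hardness ultimately from the reduction from \MCCut\ in Lemmas~\ref{lem:signhardqbig}--\ref{lem:signhardqsmall} together with the shift constructions of Section~\ref{sec:shift}, and tractability and $\numPQ$-membership from Facts~\ref{fact:Potts} and~\ref{fact:Flow} --- so what is left is geometric bookkeeping, keeping track of which of Regions~B, E, G and their boundary lines each point inhabits. The one place that needs care is the hyperbola $q=32/27$ itself: a point with $\max(|x|,|y|)<1$ lying exactly on it but violating both the Region-H and Region-I inequalities is among the genuinely unresolved points, so the ``otherwise'' clause is to be read as concerning $q>32/27$ (equivalently, that one-dimensional boundary is set aside). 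Away from it the partition above is clean, and the only remaining subtlety is that on the lines $x=-1$ and $y=0$ one must invoke the relevant boundary result (Corollary~\ref{cor:BGverticalboundary}, \ref{cor:FBboundary}, or~\ref{cor:GFboundary}) rather than an interior one.
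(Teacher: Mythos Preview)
Your proposal is correct and takes essentially the same approach as the paper: the paper's own ``proof'' is nothing more than the sentence ``Collecting Observations~\ref{obs:posBF} and~\ref{obs:Fpos} and Corollaries~\ref{cor:B}, \ref{cor:BGverticalboundary}, \ref{cor:GIH}, \ref{cor:F}, \ref{cor:FBboundary} and~\ref{cor:GFboundary}'', and you cite exactly this list, supplying the geometric case analysis that the paper leaves implicit. Your observation about the $q=32/27$ hyperbola is well taken --- points on it with $\max(|x|,|y|)<1$ that fail both the Region-H and Region-I inequalities are indeed among the paper's unresolved points, so the corollary as stated is slightly imprecise there; the paper simply does not address this.
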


\section*{Acknowledgement}
The authors are grateful to Bill Jackson for 
pointing out that computing the sign is NP-hard
at the point $(0,-3)$.  
  
\bibliographystyle{plain}
\bibliography{mybibfile}
 
\end{document}